\newcommand{\bt}{\ensuremath{\mathsf{RTB}}\xspace} 
\newcommand{\re}{\textbf{Symmetry Principle}\xspace}
\newcommand{\fp}{\textbf{FP}\xspace}
\newcommand{\kapa}{\kappa^*_s}
\newcommand{\kapb}{\kappa^*_m}
\newcommand{\kapc}{\kappa^*_B}
\newcommand{\kapd}{\kappa^*_S}
\newcommand{\bfi}{\mathbf{i}}
\newcommand{\anhai}[1]{{\color{black} #1}}
\newcommand{\thickhline}{%
    \noalign {\ifnum 0=`}\fi \hrule height 1pt
    \futurelet \reserved@a \@xhline
}
\newcolumntype{"}{@{\hskip\tabcolsep\vrule width 1pt\hskip\tabcolsep}}
\title{Online Matching under KIID: Enhanced Competitive Analysis through Ordinary Differential Equation Systems}
\author{Pan Xu}
\begin{abstract}
We consider the (offline) vertex-weighted Online Matching problem under Known Identical and Independent Distributions (KIID) with integral arrival rates. This setting assumes that (1) all edges incident to any offline node carry a uniform weight and (2) every online node has an integer arrival rate. We propose a meta-algorithm, denoted as $\mathsf{RTB}$, featuring \emph{Real-Time Boosting}, where the core idea is as follows. Consider a bipartite graph $G=(I,J,E)$, where $I$ and $J$ represent the sets of offline and online nodes, respectively. Let $\x=(x_{ij}) \in [0,1]^{|E|}$, where $x_{ij}$ for $(i,j) \in E$ represents the probability that edge $(i,j)$ is matched in an offline optimal policy (\emph{a.k.a.} a clairvoyant optimal policy), typically obtained by solving a benchmark linear program (LP). Upon the arrival of an online node $j$ at some time $t \in [0,1]$, $\mathsf{RTB}$ samples a safe (available) neighbor $i \in I_{j,t}$ with probability $x_{ij}/\sum_{i' \in I_{j,t}} x_{i'j}$ and matches it to $j$, where $I_{j,t}$ denotes the set of safe offline neighbors of $j$. As time $t$ progresses, the set $I_{j,t}$ shrinks as more of $j$'s offline neighbors get matched, leading to a boosted sampling distribution for $j$ over time.

In this paper, we showcase the power of Real-Time Boosting by demonstrating that $\mathsf{RTB}$, when fed with $\X^*$, achieves a competitive ratio of $(2e^4 - 8e^2 + 21e - 27) / (2e^4) \approx 0.7341$, where $\X^* \in \{0,1/3,2/3\}^{|E|}$ is a random vector obtained by applying a customized dependent rounding technique due to Brubach \emph{et al.} (Algorithmica, 2020). Our result improves upon the state-of-the-art ratios of 0.7299 by Brubach \emph{et al.} (Algorithmica, 2020) and 0.725 by Jaillet and Lu (Mathematics of Operations Research, 2013). Notably, this improvement does not stem from the algorithm itself but from a new competitive analysis methodology: We introduce an Ordinary Differential Equation (ODE) system-based approach that enables a \emph{holistic} analysis of $\mathsf{RTB}$. We anticipate that utilizing other well-structured vectors from more advanced rounding techniques could potentially yield further improvements in the competitiveness. Additionally, we present an auxiliary algorithm that elucidates the intricate connections between the approaches of Brubach \emph{et al.}, Jaillet and Lu, and our proposed methodology.
\end{abstract}
\begin{document}
\begin{titlepage}
\maketitle
\setcounter{tocdepth}{1} 
\newpage
\tableofcontents

\end{titlepage}

\newpage

\setcounter{page}{1}
\section{Introduction}
The classical model of online matching was first introduced by~\cite{kvv}. Ever since, different online matching models have been proposed and studied; see details in the survey book~\cite{mehta2012online}. Online matching and related models have received considerable attention over the last few decades due to their wide applications in the e-commerce economy, such as ride-sharing services~\cite{xu-aaai-19,teac21}, assortment optimization~\cite{feng2019linear,gong2021online,fata2019multi}, and crowdsourcing markets~\cite{ho2012online,dickerson2018assigning}. In this paper, we consider online matching under the arrival setting of \emph{Known Identical and Independent Distributions} (KIID), which is detailed below.

\subsection{Statement of the Main Model}
Suppose we have a bipartite graph $G = (I,J,E)$, where $I$ and $J$ represent the sets of offline and online nodes, respectively. For each node $i$, let $J_i \subseteq J$ denote the set of neighbors of $i$ in $J$; similarly, we use $I_j \subseteq I$ to denote the set of neighbors of $j$ in $I$. Each edge $(i,j)$ is associated with a positive weight $w_{ij} > 0$. Upon the arrival of an online node $j$, we need to make an immediate and irrevocable decision: either reject $j$ or match $j$ with one of its offline neighbors $i \in I_j$. In the latter case, we gain a weight of $w_{ij}$, and $i$ will be removed permanently. Throughout this paper, we assume, without loss of generality (WLOG), that each offline node $i$ has a unit matching capacity. Our goal is to design an online matching algorithm to maximize the expected total weight. The KIID arrival settings of online nodes can be stated equivalently in the following two ways.

\xhdr{A Discrete Arrival Setting}. Suppose we have a time horizon of $T$ rounds. During each round $k \in [T] := \{1,2,\ldots,T\}$, a single online node $j$ is sampled (referred to as \emph{$j$ arrives}) from $J$ \emph{with replacement} according to a known distribution ${p_j}$ such that $\sum_{j \in J} p_j = 1$. Note that the sampling process is independent and identical across the $T$ online rounds. For each $j$, let $r_j = T \cdot p_j$, which is called the \emph{arrival rate} of $j$, with $\sum_{j \in J} r_j = T$.

\xhdr{A Continuous Arrival Setting}. Suppose we have a time horizon scaled to $[0,1]$. At any time $t \in [0,1]$, each online node $j$ arrives according to an \emph{independent} Poisson process with a \emph{homogeneous} rate of $r_j$.

According to the work by~\cite{huang2021online}, the above two arrival settings are equivalent for competitive analysis as $T$ approaches infinity. This paper specifically focuses on the problem of \emph{(offline) vertex-weighted online matching under KIID with integral arrival rates}. This implies the following: (1) All edges incident to any offline node $i$ carry a uniform weight, denoted as $w_{ij} = w_i$ for all $j \in J_i$; (2) All arrival rates of online nodes assume integer values. Based on (2), we can further assume WLOG that every online node has a unit arrival rate, meaning $r_j = 1$ for all $j \in J$, achieved by creating $r_j$ copies for each online node $j$.\footnote{This is a common practice in studying online matching under KIID with integral arrival rates; see, \eg~\cite{bib:Jaillet,brubach2020online}.}

Throughout this paper, we will interchangeably refer to the above two arrival settings to facilitate competitive analysis. Specifically, \emph{we will use $k \in [T]$ to index discrete rounds and $t \in [0,1]$ to index continuous time}.

\subsection{A Meta Algorithm and Main Contributions}
The idea of real-time boosting is quite simple. Consider a given fractional vector $\x \in [0,1]^{|E|}$, where $x_{ij}$ for $(i,j) \in E$ represents the probability that edge $(i,j)$ gets matched in any offline optimal policy (\emph{a.k.a.} a clairvoyant optimal policy). The vector $\x$ can typically be obtained by solving a certain benchmark linear program (LP). Consider a given online node $j$, and let $I_{j,t} \subseteq I_{j}$ be the set of safe (unmatched) neighbors of $j$ at time $t \in [0,1]$. Observe that the set $I_{j,t}$ keeps shrinking over time as more and more of $j$'s offline neighbors become unavailable. Thus, when $j$ arrives at time $t$, a natural strategy to boost the sampling distribution for $j$ is to scale up the sampling probabilities for each \emph{safe} offline neighbor $i \in I_{j,t}$ proportional to its original mass of $x_{ij}$. Specifically, we ignore all matched offline neighbors $\tilde{i} \in I_j - I_{j,t}$ and sample one safe neighbor $i \in I_{j,t}$ with probability $x_{ij}/\sum_{i' \in I_{j,t}} x_{i',j}$. We offer a formal presentation of a meta algorithm featuring real-time boosting in Algorithm~\ref{alg:meta}.

\begin{algorithm}[ht!]
\caption{A meta algorithm featuring real-time boosting (\bt) parameterized by $\x \in [0,1]^{|E|}$.}\label{alg:meta}
\DontPrintSemicolon
\tcc{\bluee{The parameter vector $\x \in [0,1]^{|E|}$ lies in the matching polytope of the input graph $G$, which is assumed to be feasible for some benchmark linear program (LP).}}
When an online node of type $j \in J$ arrives at time $t \in [0,1]$:\;
Let $I_{j,t}$ be the set of safe (unmatched) neighbors of $j$ at time $t$.\;
Sample a safe neighbor $i \in I_{j,t}$ with probability $x_{ij}/\sum_{i' \in I_{j,t}} x_{i',j}$ and match it with $j$. \label{alg:meta/s2}
\end{algorithm}

\begin{lemma}[Appendix~\ref{app:well-bt}]\label{lem:well-bt}
The meta algorithm $\bt$ (Algorithm~\ref{alg:meta}) achieves a competitive ratio of no more than $1 - 1/\sfe$ for vertex-weighted online matching under KIID with integral rates, even when it is fed with an optimal solution of the natural LP in~\cite{huang2021online}.
\end{lemma}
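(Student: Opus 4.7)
The plan is to exhibit a single hard instance together with a worst-case choice of optimal LP solution on which $\bt$ attains only a $(1-1/\sfe)$ fraction of the clairvoyant offline optimum. The natural candidate is the complete bipartite unit-weight instance: let $G=(I,J,E)$ with $|I|=|J|=n$, every edge present, $w_i=1$, and every online type $j\in J$ carrying unit arrival rate $r_j=1$, so that $T=n$. Because $G$ is complete bipartite with $n$ offline nodes and exactly $n$ online arrivals, the clairvoyant optimum always realizes a perfect matching between the $n$ arrivals and the $n$ offline nodes, hence $\mathbb{E}[\mathrm{OPT}] = n$.

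The second step is to pin down a ``degenerate'' optimum of the natural LP of~\cite{huang2021online}. Since $G$ is complete bipartite, the fractional perfect matching $x^*_{i_k j_k}=1$ for $k\in[n]$ and $x^*_{i_k j_l}=0$ for $k\ne l$ is feasible (all row and column sums equal $1$) and attains the LP upper bound of $n$, so it is an optimal LP solution. Under this particular $x^*$, $\bt$ collapses to the deterministic rule ``on arrival of type $j_l$, match it to $i_l$ if and only if $i_l$ is still safe,'' because $i_l$ is the unique offline node carrying positive $x^*$-mass at $j_l$.

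I then evaluate $\bt(G,x^*)$: the offline node $i_l$ gets matched iff type $j_l$ arrives at least once among the $T=n$ rounds, and since arrivals are i.i.d.\ uniform on $J$,
\[
\mathbb{E}\bigl[\bt(G,x^*)\bigr] \;=\; \sum_{l=1}^{n}\bigl(1-(1-1/n)^n\bigr) \;=\; n\bigl(1-(1-1/n)^n\bigr).
\]
Dividing by $\mathbb{E}[\mathrm{OPT}]=n$ and letting $n\to\infty$ yields a competitive ratio tending to $1-1/\sfe$, establishing the claimed upper bound.

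The only step requiring care is verifying that $x^*$ is feasible for the exact form of the natural LP of~\cite{huang2021online}, which could in principle carry auxiliary constraints (e.g., per-edge bounds such as $x_{ij}\le 1-\sfe^{-r_j}$ or parity-style cuts) beyond the standard matching polytope. This is the main potential obstacle, but it should be immediate: any valid relaxation of the offline optimum is satisfied by an integral perfect matching, since such a matching is itself feasible for the underlying integer program. Once that check is discharged, the two expectation computations above close the proof.
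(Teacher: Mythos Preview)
Your argument has a genuine gap at precisely the step you flagged as the potential obstacle: the vector $x^*$ with $x^*_{i_k j_k}=1$ is \emph{not} feasible for the natural LP of~\cite{huang2021online}. That LP enforces, for every offline node $i$ and every subset $S$ of its online neighbors, the constraint $\sum_{j\in S} x_{ij}\le 1-\sfe^{-\sum_{j\in S} r_j}$ (in particular $x_{ij}\le 1-\sfe^{-r_j}=1-1/\sfe$ for a single edge with unit rate). Your justification that ``an integral perfect matching is feasible for the underlying integer program'' conflates graph matchings with clairvoyant policies: the variables $x_{ij}$ encode the probability that edge $(i,j)$ is matched by a policy that only acts on types that actually arrive, so any feasible policy satisfies $x_{ij}\le \Pr[j\text{ arrives at least once}]<1$. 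The deterministic perfect matching in the graph is not such a policy, and its indicator vector lies outside the LP polytope. Note also that on your complete bipartite instance the uniform feasible optimum $x_{ij}=1/n$ makes $\bt$ match \emph{every} arrival, giving ratio $1$; so the instance alone does not carry the hardness---the careful choice of a feasible $x^*$ is the whole point.

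The paper's construction works around this by building a single heavy offline node $i$ with $K$ online neighbors $j_1,\dots,j_K$ and setting $x^*_{i,j_k}=\delta_k:=\sfe^{-(k-1)}-\sfe^{-k}$, so that every partial sum $\sum_{\ell\le k}\delta_\ell=1-\sfe^{-k}$ saturates the natural LP's subset constraint exactly. Each $j_k$ also carries $N$ ``dummy'' offline neighbors of mass $\epsilon_k=(1-\delta_k)/N$; as $N\to\infty$ these dummies remain safe almost surely, so $\bt$ never boosts the probability of sampling $i$ above $\delta_k$. Hence $i$ is matched with probability $1-\sfe^{-\sum_k\delta_k}=1-\sfe^{-(1-\sfe^{-K})}$, while the clairvoyant optimum matches $i$ with probability $1-\sfe^{-K}$; the ratio tends to $1-1/\sfe$ as $K\to\infty$. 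The delicate choice of the $\delta_k$ is precisely what your perfect-matching shortcut tried to avoid, and it is not avoidable.
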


We defer the proof of Lemma~\ref{lem:well-bt} to Appendix~\ref{app:well-bt}. Lemma~\ref{lem:well-bt} underscores the importance of supplying $\bt$ with a well-structured vector $\x = (x_e)$, where each $x_e$ is constrained to a limited number of values,\footnote{Throughout this paper, we use $e$ (italic) to represent an edge $e = (i,j) \in E$, while we use $\sfe$ (non-italic) to denote the natural base, approximately equal to 2.718.} as demonstrated in~\cite{bib:Jaillet,brubach2020online}. This approach offers distinct advantages. Firstly, it streamlines the competitive analysis and facilitates the identification of Worst-Scenario (WS) instances where a target algorithm attains its most unfavorable competitive ratio. The rationale is straightforward: a well-structured vector $\x$ considerably restricts the number of potential graph structures. Secondly, as implied by the proof of Lemma~\ref{lem:well-bt} in Appendix~\ref{app:well-bt}, we must modify the sampling distributions of online nodes to balance the performance of offline nodes with different masses so that $\bt$ can surpass the barrier of $1 - 1/\sfe$. A well-structured vector $\x$, once again, simplifies these adjustments in sampling distributions since each online node can possess a rather limited number of non-zero neighbors under $\x$.

\medskip

\emph{In this paper, we showcase the power of $\bt$ by feeding it a well-structured vector, as done in~\cite{brubach2020online}}. Specifically, we supply $\bt$ with a randomized rounded vector $\X \in \{0,1/3,2/3\}^{|E|}$, where $\X$ is obtained by applying dependent rounding~\cite{gandhi2006dependent} to an optimal solution $\y^*$ of the benchmark LP in~\cite{brubach2020online}. For completeness, we restate the LP as follows. 
\smallskip

For each edge $(i,j) \in E$, let $y_{ij} \in [0,1]$ be the probability that $i$ and $j$ get matched in any offline optimal policy (\emph{a.k.a.} a clairvoyant optimal policy). Recall that $w_i > 0$ represents the weight on node $i$, and $I_j \subseteq I$ and $J_i \subseteq J$ denote the set of neighbors of $j$ and $i$ in the input graph $G$, respectively.

\begingroup
\allowdisplaybreaks
\begin{align}\label{lp}
\max & \sum_{i \in I} w_i y_i && \\
& y_i := \sum_{j \in J_i} y_{ij} \le 1 && \forall i \in I; \\
& y_j := \sum_{i \in I_j} y_{ij} \le 1 && \forall j \in J;\\
& 0 \le y_{ij} \le 1 - 1/\sfe && \forall (i,j) \in E; \label{lp-a}\\
& y_{ij} + y_{i,j'} \le 1 - 1/\sfe^2 && \forall i \in I, \forall j, j' \in J_i.\label{lp-b}
\end{align}
\endgroup

Throughout this paper, we refer to the above benchmark LP as \LP-\eqref{lp}. By the work in~\cite{brubach2020online}, the optimal value of \LP-\eqref{lp} serves as a valid upper bound on any offline optimal policy.\footnote{Meanwhile, \LP-\eqref{lp} is a special case of the natural LP proposed in~\cite{huang2021online}.} Observe that Constraints~\eqref{lp-a} and~\eqref{lp-b} play a crucial role in upper bounding the probability of an offline node falling into the WS structure, as shown in Proposition~\ref{pro:main-0}. Our main technical result is stated below.

\begin{theorem}\label{thm:main-1}
Let $\X^* \in \{0,1/3,2/3\}^{|E|}$ be a random vector obtained by applying a specialized version of dependent rounding in~\cite{brubach2020online} to an optimal solution of the benchmark \LP-\eqref{lp}. We claim that $\bt(\X^*)$ with appropriate modifications achieves a competitive ratio of at least $(2 \sfe^4 - 8 \sfe^2 + 21 \sfe - 27) / (2 \sfe^4) \approx 0.7341$ for vertex-weighted online matching under KIID with integral arrival rates, which improves upon the results of $0.7299$ and $0.725$ due to~\cite{brubach2020online} and~\cite{bib:Jaillet}, respectively.
\end{theorem}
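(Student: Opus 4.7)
The plan is to push the analysis of $\bt(\X^*)$ through three interlocking steps: a structural reduction of the local neighborhoods produced by dependent rounding, a holistic ODE model tracking matching probabilities over the continuous time interval $[0,1]$, and a closed-form solution of that model that yields the claimed ratio.

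First I would exploit the fact that $\X^*$ is supported on $\{0,1/3,2/3\}$ to enumerate the possible ``local types'' that can appear around any offline node $i$. Because $\sum_{j} \X^*_{ij} \le 1$, and because constraint~\eqref{lp-b} forbids a mass-$1/3$ edge and a mass-$2/3$ edge from sharing an offline endpoint (since $1/3 + 2/3 > 1 - 1/\sfe^2$), the only surviving neighborhoods around $i$ are the singletons $(1/3)$ and $(2/3)$, the pair $(1/3,1/3)$, and the triple $(1/3,1/3,1/3)$. The specialized dependent rounding from~\cite{brubach2020online} pins down the probability that $i$ realizes each local type in terms of the fractional solution $\y^*$. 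A standard KIID reduction then replaces the original graph by a Worst-Scenario (WS) instance in which every online neighbor $j \in J_i$ exhibits a canonical neighborhood that maximizes the competition faced by $i$.

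Next I would set up the ODE system. For each local type $\tau$ and each $t \in [0,1]$, let $\alpha_\tau(t)$ denote the probability that a representative offline node of type $\tau$ is still safe at time $t$ in the WS instance, and let $\beta_\tau(t)$ denote the corresponding safe-probability of an offline node competing with the representative at one of its online neighbors. The homogeneous unit-rate Poisson arrivals combined with the boosted sampling probability $\X^*_{ij}/\sum_{i' \in I_{j,t}} \X^*_{i'j}$ conditioned on $I_{j,t}$ translate into a small coupled system of ODEs in $(\alpha_\tau,\beta_\tau)$; the ``holistic'' feature is that we track these variables jointly instead of bounding each pointwise by its worst-case trajectory. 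Solving the system and evaluating at $t=1$ should give closed-form expressions featuring $\sfe^{-1},\sfe^{-2},\sfe^{-4}$; summing the match probabilities $1-\alpha_\tau(1)$ weighted by the per-type LP contributions to $\sum_i w_i y^*_i$ yields the aggregate competitive ratio.

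The main obstacle I anticipate is twofold. The first difficulty is choosing the ``appropriate modifications'' to $\bt$ --- likely per-type attenuation factors in the spirit of~\cite{bib:Jaillet} --- so that the ratios delivered to each of the four local types are simultaneously at least the target constant; because each factor enters the ODEs of several types through the shared online side, this is a non-trivial balancing problem, and the wrong choice forces one type to become the bottleneck. The second difficulty is justifying the WS reduction for $\bt$: one must argue that canonical, symmetric neighborhoods around every $j \in J_i$ minimize $\alpha_\tau(1)$ under boosting, even though boosting itself makes the sampling mass on $i$ grow as its competitors at $j$ get matched. I would handle this by a monotonicity argument showing that splitting an edge or inserting a dummy online neighbor only weakly increases $\alpha_\tau(1)$. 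Once both pieces are in place, substituting the solved ODE values into the per-type analyses should give exactly $(2\sfe^4 - 8\sfe^2 + 21\sfe - 27)/(2\sfe^4)$.
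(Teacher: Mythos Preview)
Your structural enumeration contains a fatal error. You claim that constraint~\eqref{lp-b} forbids a mass-$1/3$ edge and a mass-$2/3$ edge from sharing an offline endpoint in $\X^*$, but \eqref{lp-b} constrains the \emph{fractional} LP solution $\y^*$, not the rounded vector $\X^*$. Dependent rounding preserves marginals and degree sums only in expectation; on any given realization an offline node $i$ with, say, $y^*_{ij}=y^*_{ij'}=0.4$ (which satisfies $y^*_{ij}+y^*_{ij'}\le 1-1/\sfe^2$) can perfectly well round to $X^*_{ij}=2/3$, $X^*_{ij'}=1/3$. The paper calls this the \textbf{1B1S} configuration, and it is not an edge case you can prune away---it is the bottleneck. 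Its target MPM is $\kapc=1-2\sfe^{-2}\approx 0.7293$, strictly below the 3S target $\kapd\approx 0.7760$, and the final ratio $\kap^*$ is obtained precisely by weighting $\kapc$ against $\kapd$ and $\kapb$ using Proposition~\ref{pro:main-0}'s bound $\Pr[\text{1B1S}]\le 2-3/\sfe$. Your list of local types omits 1B1S entirely, so your ODE system would never see the configuration that actually determines the answer.

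Beyond this, two smaller points would need repair even after you reinstate 1B1S. First, the ``modifications'' are not global attenuation factors applied per type; they are local redistributions of the sampling vector on specific online nodes (e.g.\ pushing the mass on $j_s$ in Figure~\ref{fig:6/30/1} from $(1/3,1/3,1/3)$ to $(1,0,0)$) designed so that the mass-one node hits its target while its mass-$1/3$ or mass-$2/3$ offline neighbors still clear theirs. Second, the WS reduction in the paper is not a single monotonicity step but two distinct principles: a \textbf{Folding Procedure} plus \textbf{Symmetry Principle} when all offline neighbors have mass one (yielding the three structures in Figure~\ref{fig:three}), and a separate \textbf{WS Principle} (prune every offline neighbor down to a single shared online neighbor) for the general case. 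Your proposal conflates these and does not account for the positive correlation among safe indicators (Example~\ref{exam:pos}), which is exactly what forces the holistic ODE treatment rather than a per-edge bound.
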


\xhdr{Remarks on Theorem~\ref{thm:main-1}}. (i) Let $\y^* = (y_e^*)$ be an optimal solution to \LP-\eqref{lp}. The \tbf{specialized dependent rounding} mentioned in Theorem~\ref{thm:main-1}, denoted by $\mathsf{DR}[\ell]$ with $\ell = 3$ in~\cite{brubach2020online}, refers to the following procedure. First, apply the classical dependent rounding in~\cite{gandhi2006dependent} to the vector $\ell \cdot \y^* = (\ell y_e^*)$, and let $\Y^* = (Y_e^*)$ be the output randomized integer vector. Since each $y_e^* \in [0,1 - 1/\sfe]$ and $\ell y_e^* = 3 y_e^* \le 2$, it follows that $Y_e^* \in \{0,1,2\}$ by the degree-preservation property. Second, set $\X^* = (X_e^*)$ with $X_e^* = Y_e^* / \ell$ such that each $X_e^* \in \{0,1/3,2/3\}$. 

(ii) \emph{Throughout this paper, unless specified otherwise, the mass of an offline/online node refers to that under the rounded vector $\X^*$ instead of $\y^*$ by default}. The appropriate modifications suggested in Theorem~\ref{thm:main-1} are detailed in Section~\ref{sec:mod}. 

(iii) For our model of vertex-weighted online matching under KIID with integral rates, the current best hardness result is $1 - \sfe^{-2} \approx 0.8646$ due to~\cite{bib:Manshadi}, where the upper bound was derived based on an \emph{unweighted} instance. Consequently, this hints at the possibility of attaining a better upper bound specifically for the vertex-weighted scenario studied here.

\medskip

\xhdr{Main Contributions}. 
First, we re-examine and offer insights into the state-of-the-art algorithms for the classical vertex-weighted online matching under KIID. We introduce an auxiliary algorithm in Section~\ref{sec:com}, which unveils the intricate connections between the approaches in~\cite{bib:Jaillet,brubach2020online} and the meta algorithm $\bt$ (Algorithm~\ref{alg:meta}). In Section~\ref{sec:com}, we show that: (1) The algorithms presented in~\cite{bib:Jaillet,brubach2020online} are essentially equivalent to $\bt$ when fed with the same well-structured vectors used in those works; (2) Both studies in~\cite{bib:Jaillet,brubach2020online} opt to analyze a \emph{nuanced version}, whose performance serves as a valid lower bound on that of the exact algorithms presented in the papers. Specifically, the nuanced version assumes that the sampling distribution $\cD_{j,t}$ when online node $j$ arrives at time $t$ gets boosted only when some offline neighbors of $j$ are matched by $j$ itself before $t$. In other words, it ignores potential boosts to $\cD_{j,t}$ resulting from the matching of an offline neighbor $i$ by some online node $j'$ other than $j$, \ie $j' \neq j$ and $j' \in J_i$. Their choice of focusing on the nuanced version instead of the exact algorithms helps circumvent the positive correlation among offline nodes being matched in $\bt$, as elaborated in Section~\ref{sec:chall}, though at the cost of compromising the exact performance.

Second, we propose a holistic competitive analysis directly for the meta algorithm ($\bt$) by harnessing the power of Ordinary Differential Equation (ODE) systems. This approach enables us to develop principles that pinpoint the Worst-Scenario (WS) structures for offline nodes with different configurations under the rounded vector $\X^*$; see Section~\ref{sec:fold} for the definition of the $\re$ and Section~\ref{sec:less} for the \textbf{Worst-Scenario (WS) Principle}. We demonstrate the effectiveness of the new analysis approach by showing that $\bt$ achieves an improved competitive ratio of $(2 \sfe^4 - 8 \sfe^2 + 21 \sfe - 27) / (2 \sfe^4) \approx 0.7341$, as stated in Theorem~\ref{thm:main-1}, when fed with the same well-structured solution as in~\cite{brubach2020online}. As clarified in the previous paragraph, \emph{the improvement in competitiveness is not due to the algorithm itself, but rather a new analysis approach}. We anticipate that utilizing other well-structured vectors from more advanced rounding techniques, \eg by applying $\mathsf{DR}[\ell]$ with some $\ell \ge 4$ to an optimal solution from a benchmark LP, could potentially yield further improvements.

\subsection{A Preliminary Proof of Theorem~\ref{thm:main-1}}\label{sec:preli}

Let $\X \in \{0,1/3,2/3\}^{|E|}$ be the random rounded vector obtained after applying the specialized dependent rounding $\mathsf{DR}[3]$ to an optimal fractional solution of the benchmark \LP-\eqref{lp}. Let $G(\X)$ be the random graph induced by $\X$, where only edges with nonzero mass (under $\X$) are retained. \emph{Throughout this paper, we refer to edges of mass $1/3$ and those of mass $2/3$ in $G(\X)$ as \textbf{small} and \textbf{big}, respectively}. For each offline node $i \in I$, let $X_i := \sum_{j \in J_i} X_{ij} \in \{0, 1/3, 2/3, 1\}$, which is called \emph{the mass of node $i$} in $G(\X)$. Thus, for each node with mass one, it can have only two possible structures in $G(\X)$: either one big edge and one small edge (called \textbf{1B1S}), or three small edges (called \textbf{3S}). The proposition below provides an upper bound on the probability that any offline node has a big edge in the random graph $G(\X)$.

\begin{proposition}[\cite{brubach2020online}]\label{pro:main-0}
The probability that an offline node has a big edge in $G(\X)$ is no more than $2 - 3/\sfe$, where $\X \in \{0,1/3,2/3\}^{|E|}$ is the random rounded vector obtained after applying the specialized rounding ($\mathsf{DR}[3]$) in~\cite{brubach2020online} to an optimal solution of \LP-\eqref{lp}.
\end{proposition}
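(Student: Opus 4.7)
The plan is to invoke three standard guarantees of the specialized rounding $\mathsf{DR}[3]$ applied to an optimal $\y^*=(y_{ij}^*)$ of \LP-\eqref{lp}: (i) \emph{marginal preservation}, $\mathbb{E}[X_{ij}] = y_{ij}^*$; (ii) \emph{per-edge integrality}, which together with $y_{ij}^* \le 1 - 1/\sfe < 2/3$ forces $X_{ij} \in \{0,1/3,2/3\}$ and, moreover, allows $X_{ij} = 2/3$ only when $y_{ij}^* \ge 1/3$; and (iii) \emph{degree preservation}, which together with $\sum_{j\in J_i} y_{ij}^* \le 1$ forces $\sum_{j\in J_i} X_{ij} \le 1$ deterministically. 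All three properties are inherited from the classical bipartite construction of~\cite{gandhi2006dependent}.

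The first key step is a \emph{disjointness} reduction: since $\sum_{j'} X_{ij'} \le 1$ always holds, the events $\{X_{ij} = 2/3\}_{j \in J_i}$ are pairwise disjoint (any two of them would push the total mass at $i$ to $4/3$). Hence
\[
\Pr[\text{$i$ has a big edge in } G(\X)] \;=\; \sum_{j \in J_i} \Pr[X_{ij} = 2/3].
\]
The summand is $0$ whenever $y_{ij}^* < 1/3$; otherwise per-edge integrality leaves $X_{ij} \in \{1/3, 2/3\}$, and marginal preservation pins down $\Pr[X_{ij} = 2/3] = 3 y_{ij}^* - 1$. Writing $J_i' := \{j \in J_i : y_{ij}^* \ge 1/3\}$, $k := |J_i'|$, and $S := \sum_{j\in J_i'} y_{ij}^*$, the target probability reduces cleanly to $3S - k$.

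The remaining step is a short case analysis on $k \in \{0,1,2,3\}$ (the case $k \ge 4$ is impossible since $S \ge k/3$ and $S \le 1$). Case $k=0$ is vacuous. For $k=1$, Constraint~\eqref{lp-a} gives $S \le 1 - 1/\sfe$, so $3S - k \le 2 - 3/\sfe$. For $k=2$, Constraint~\eqref{lp-b} gives $S \le 1 - 1/\sfe^2$, so $3S - k \le 1 - 3/\sfe^2$, which is strictly smaller than $2 - 3/\sfe$. For $k=3$, $S \le 1$ gives $3S - k \le 0$. The worst case is thus $k = 1$ with the unique candidate big edge saturating~\eqref{lp-a}, delivering the claimed bound $2 - 3/\sfe$.

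I do not anticipate a significant obstacle; the proof is essentially a bookkeeping exercise once the three rounding properties are in hand. The one subtle point worth highlighting is the dual role played by the LP bounds: Constraint~\eqref{lp-a} is precisely what prevents $X_{ij} = 1$ (so that each $X_{ij}$ lies in $\{0, 1/3, 2/3\}$) \emph{and} what drives the tight case $k=1$, whereas Constraint~\eqref{lp-b} is what keeps the $k=2$ case strictly below the $k=1$ case. Without the latter, a node with two large fractional edges could in principle dominate.
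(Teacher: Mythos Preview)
Your proposal is correct and follows essentially the same argument as the paper's own proof in Appendix~\ref{app:main-0}: both reduce to a case analysis on the number of edges at $i$ with $y_{ij}^* > 1/3$ (at most two, by the offline-degree constraint), use marginal preservation to get $\Pr[X_{ij}=2/3]=3y_{ij}^*-1$, and then invoke Constraints~\eqref{lp-a} and~\eqref{lp-b} for the one-edge and two-edge cases respectively. Your write-up is somewhat more explicit about the disjointness of the events $\{X_{ij}=2/3\}_{j}$ and about the boundary cases $k=0,3$, but there is no substantive difference in approach.
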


The result in Proposition~\ref{pro:main-0} is implicitly referenced in~\cite{brubach2020online}. For completeness, we present a proof in Appendix~\ref{app:main-0}, where the proof highlights the indispensability of Constraints~\eqref{lp-a} and~\eqref{lp-b} in the benchmark \LP-\eqref{lp}.\footnote{The necessity of Constraints~\eqref{lp-b} can be seen from the case where an offline node has two edges of nonzero mass before $\mathsf{DR}[3]$, say $1/2$ and $1/2$. After rounding, the offline node has a big edge of $2/3$ with probability one.} Consider a given $\X$ and a given offline node $i$ with a mass of $X_i \in \{0,1/3,2/3,1\}$. The \tbf{Matching Probability per Mass} (MPM) achieved by $\bt(\X)$ is defined as the ratio of the matching probability of $i$ in $\bt$ to $X_i$. By default, we assume any algorithm attains an MPM of $1$ for any offline node $i$ with $X_i = 0$.

Define 
\begin{align}\label{eqn:kap}
\kap^* &= \frac{2 \sfe^4 - 8 \sfe^2 + 21 \sfe - 27}{2 \sfe^4} \approx 0.7341, && \\
\kapa &= \kap^*; && \kapb = 1 + \frac{27}{4} \sfe^{-4} - 12 \sfe^{-3} + 2 \sfe^{-2} \approx 0.7969. \nonumber \\
\kapc &= 1 - 2 \sfe^{-2} \approx 0.7293; && \kapd = 1 - \frac{9}{2} \sfe^{-3} \approx 0.7760. \nonumber 
\end{align}

\emph{Throughout this paper, we refer to $\kap^*$ as the overall target competitiveness to be achieved; $\kapa$ and $\kapb$ as the target MPM for offline nodes of mass $1/3$ and $2/3$, respectively; and $\kapc$ and $\kapd$ as the target MPM for offline nodes of mass one that have one big and one small edge (1B1S) and three small edges (3S), respectively.}\footnote{The subscripts $s$, $m$, $B$, and $S$ represent four scenarios any offline node can fall into in $G(\X)$: small mass ($1/3$), medium mass ($2/3$), big mass one (1B1S), and big mass one (3S), respectively.}

\begin{proposition}\label{pro:main-1}
For any possible realization of $\X$, we claim that: (1) $\bt(\X)$ achieves a \emph{Matching Probability per Mass} (MPM) of at least $\kapc$ and $\kapd$ for any offline nodes with mass one in the forms of one big and one small edge (1B1S) and three small edges (3S), respectively. (2) $\bt(\X)$ achieves an MPM of at least $\kapa$ and $\kapb$ for any offline nodes of mass $1/3$ and $2/3$, respectively.
\end{proposition}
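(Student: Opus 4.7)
The plan is to analyze each of the four configurations in Proposition~\ref{pro:main-1} separately by setting up an appropriate ODE system that tracks the probability of each offline node remaining safe (unmatched). For a fixed realization of $\X$, let $h_i(t)$ denote the probability that offline node $i$ is still available at time $t \in [0,1]$ under $\bt(\X)$. Then the matching probability of $i$ equals $1 - h_i(1)$, and the MPM equals $(1 - h_i(1))/X_i$. The overall strategy is threefold: (i) identify a Worst-Scenario (WS) structure around $i$ whose matching probability lower-bounds the true one; (ii) write down the ODEs for the safety probabilities in the WS; (iii) solve them in closed form and verify the MPM meets the target.

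First, I would identify the WS structure for each of the four cases using the \re (Section~\ref{sec:fold}) and the Worst-Scenario Principle (Section~\ref{sec:less}). Intuitively, the WS for a node $i$ of a given mass configuration is the subgraph around $i$ (its online neighbors and those neighbors' other offline neighbors) that maximizes competition against $i$ and thus minimizes $\Pr[i \text{ matched}]$. For the mass-$1/3$ case, $i$ has one online neighbor $j$ with $x_{ij}=1/3$, and the WS places the remaining $2/3$ mass at $j$ on competitor offline nodes whose safety decays as slowly as possible. Analogous WS structures arise for mass-$2/3$, 1B1S, and 3S nodes.

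Next, I would translate each WS into an ODE system. Under $\bt(\X)$, when online node $j$ arrives at time $t$ it picks safe neighbor $i$ with probability $x_{ij}/\sum_{i' \in I_{j,t}} x_{i' j}$; since arrivals are Poisson with rate $1$,
\begin{equation*}
\frac{d h_i(t)}{dt} = -\sum_{j \in J_i} \mathbb{E}\!\left[ \mathbf{1}[i \text{ safe at } t] \cdot \frac{x_{ij}}{\sum_{i' \in I_{j,t}} x_{i' j}} \right].
\end{equation*}
The right-hand side is delicate because $i$'s safety and the denominator are correlated through shared online neighbors. Under a symmetric WS, however, the joint distribution collapses onto a handful of state variables -- the safety probabilities of $i$ itself and of one representative competitor per neighbor type -- producing a closed, low-dimensional system of ODEs. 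I would solve it using integrating factors, obtaining closed-form expressions in $\sfe^{-t}$, $\sfe^{-2t}$, $\sfe^{-3t}$. Evaluating at $t=1$ and dividing by $X_i$, the MPMs should match the constants $\kapa$, $\kapb$, $\kapc$, and $\kapd$ exactly, yielding the claimed bounds.

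The main obstacle I anticipate is justifying that the stated structures are truly the worst case. Positive correlations between the ``safe'' events of different offline nodes, already flagged earlier in the paper as the reason prior works~\cite{bib:Jaillet,brubach2020online} retreated to a nuanced surrogate, mean that naive coupling or monotonicity arguments can go the wrong way. To overcome this, I would first invoke the \re to fold an arbitrary WS candidate into a symmetric one without increasing $i$'s matching probability, and then, among symmetric candidates, pin down the minimizer through a local perturbation argument on the ODE solutions. This WS reduction -- rather than the ODE-solving step itself -- is where I expect the technical depth to lie.
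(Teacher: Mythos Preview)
Your high-level plan---identify a WS structure, write an ODE system for the safety probabilities, solve it, and compare against the target MPMs---mirrors the paper's methodology. However, there is a genuine gap: you never mention the \emph{modifications to the sampling distributions}, and without them your program would actually disprove the proposition rather than prove it.

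Concretely, consider a 1B1S target node $\bfi$ whose offline neighbors have mass strictly less than one (e.g., the structure in Figure~\ref{fig:29/a}). The paper computes (Lemma~\ref{lem:7/3/a}) that under \emph{vanilla} $\bt(\X)$ the MPM of $\bfi$ is exactly $1-22/(9\sfe^{2})\approx 0.6692$, which is strictly below $\kapc\approx 0.7293$. Likewise, for a mass-$2/3$ node in the WS of Figure~\ref{fig:wsg/3}, Table~\ref{table:sum123} records an MPM of $0.7642<\kapb\approx 0.7969$. So if you carry out your three-step plan honestly, the ODE solution in these WS cases will \emph{not} match the target constants; the claimed bounds simply fail for the unmodified algorithm. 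This is precisely why Theorem~\ref{thm:main-1} says ``$\bt(\X^*)$ \emph{with appropriate modifications},'' and why the paper devotes all of Section~\ref{sec:mod} and Appendix~\ref{sec:7/7/a} to re-weighting the sampling vectors on online nodes (e.g., replacing $(1/3,2/3)$ by $(z_1,1-z_1)$ and $(1/3,1/3,1/3)$ by $(1-2z_2,z_2,z_2)$ in Figure~\ref{fig:29/b}), and Appendix~\ref{app:2/3} to the mass-$2/3$ case.

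The missing idea is therefore twofold. First, one must split the analysis of mass-one nodes into two regimes: (a) all offline neighbors also have mass one, where the Folding Procedure plus \re reduce everything to the three symmetric pictures in Figure~\ref{fig:three} and the ODE computation yields exactly $\kapc$ and $\kapd$; and (b) some offline neighbor has mass $1/3$ or $2/3$, where one must \emph{design} modified sampling weights that simultaneously push the mass-one node up to its target while keeping the lighter neighbors at or above theirs. Second, each such modification must be checked on both sides---your plan treats the four configurations in isolation, but the paper's Goals~1 and~2 in Section~\ref{sec:tworep} are coupled constraints, and the work lies in exhibiting a feasible choice of parameters (often an interval, e.g., $z\in[0.5663,1]$) for each of the four combinations in Section~\ref{sec:tworep}. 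Without this design-and-verify step, the WS reduction alone cannot close the argument.
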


We demonstrate Proposition~\ref{pro:main-1} in Sections~\ref{sec:one},~\ref{sec:less},~\ref{sec:mod}, and~\ref{sec:ws1/3}. In Appendix~\ref{app:main-thm}, we provide a formal proof showing how Propositions~\ref{pro:main-0} and~\ref{pro:main-1} jointly lead to Theorem~\ref{thm:main-1}.

\subsection{Technical Challenges in Competitive Analysis due to Real-Time Boostings}\label{sec:chall}
Real-time boosting in $\bt$ (Algorithm~\ref{alg:meta}) can lead to positive correlations among offline nodes staying safe (available) at any given time, and thus among offline nodes getting matched as well. We emphasize that this positive correlation persists regardless of whether the input vector is well-structured (\eg each entry can take a limited number of discrete values) or not.

\begin{figure}[ht!]
\begin{subfigure}[b]{0.45\textwidth}
 \centering
\begin{tikzpicture}
   \draw (1.5,1) node[above] {\bluee{$(1-\sfe^{-K})/K$}};
  \draw (0,1) node[minimum size=0.2mm,draw,circle,  thick] {$i_1$};
 \draw (0,0) node[minimum size=0.2mm,draw,circle,   thick] {$i_2$};

       \draw (3,1) node[minimum size=0.2mm,draw,circle, thick] {$j_1$};
  \draw (3,0) node[minimum size=0.2mm,draw,circle,  thick] {$j_2$};
      \draw[dotted,   very thick] (3,-0.7)--(3,-1.3);
 \draw (3,-2) node[minimum size=0.2mm,draw,circle,   thick] {$j_K$};
\draw[-,   very thick] (0.4,0)--(2.6,0);
\draw[-,   very thick] (0.4,1)--(2.6,1);
\draw[-,   very thick] (0.4,0)--(2.6,1);
\draw[-,  very  thick] (0.4, 1)--(2.6,0);
\draw[-,  very  thick] (0.4, 1)--(2.6,-2);
\draw[-,  very  thick] (0.4, 0)--(2.6,-2);
\end{tikzpicture}
\caption{}
\label{fig:20/a/l}
  \end{subfigure}
  \hfill
\begin{subfigure}[b]{0.45\textwidth}
    \centering
\begin{tikzpicture}
  \draw (0,-1.5) node[minimum size=0.2mm,draw,circle,  thick] {$i_1$};
    \draw (3,-1.5) node[minimum size=0.2mm,draw,circle, thick] {$j_1$};
    \draw (0,-3) node[minimum size=0.2mm,draw,circle, thick] {{$i_{2}$}};
    \draw (3,-3) node[minimum size=0.2mm,draw,circle, thick] {$j_2$};
\draw[-,  very thick] (0.4,-3)--(2.6,-1.5);
\draw[-, red, ultra thick] (0.4,-1.5)--(2.6,-1.5);
\draw[-,  very thick] (0.4,-1.5)--(2.6,-3);
\draw[-,  red, ultra thick] (0.4,-3)--(2.6,-3);
\end{tikzpicture}
\caption{}
\label{fig:20/a/r}
\end{subfigure}

\caption{Two plots highlighting the positive correlation between offline nodes ($i_1$ and $i_2$) staying safe (or getting matched) at any time due to the real-time boosting in $\bt$. In Figure~\ref{fig:20/a/l}, we have a complete bipartite graph where every edge has a value of $\ep := (1 - \sfe^{-K})/K$, forming a feasible (and also optimal) solution to the natural LP in~\cite{huang2021online}. In Figure~\ref{fig:20/a/r}, we see a cycle where small edges of mass $1/3$ and big edges of mass $2/3$ are marked in black and red, respectively.}
\label{fig:positive}
\end{figure}
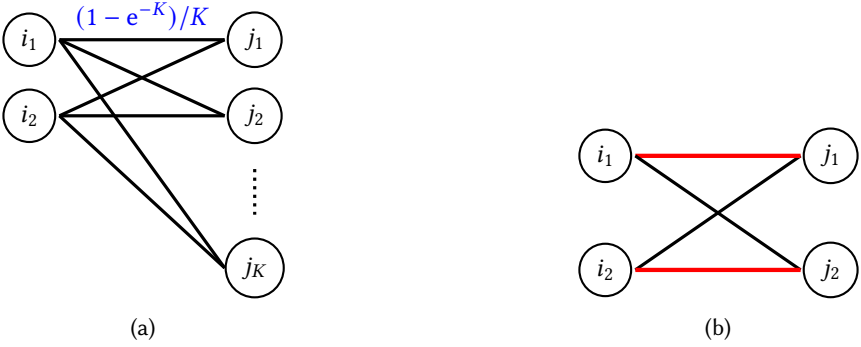

\begin{example}\label{exam:pos}
Consider the graph in Figure~\ref{fig:20/a/l}. We can verify that an optimal solution to the natural LP in~\cite{huang2021online} assigns every edge a value of $\ep = (1 - \sfe^{-K})/K$, where $K \gg 1$ is a large integer. For each $\ell = 1, 2$, let $\saf_{\ell,t} = 1$ indicate that $i_\ell$ is safe (available) at time $t \in [0,1]$ in $\bt$, and let $\saf_{\ell,t} = 0$ otherwise. Recall that every online node has a unit arrival rate. We can verify that:\footnote{This can be seen as follows: Both $i_1$ and $i_2$ are safe at time $t$ if and only if (\tbf{EV}) none of the $j_\ell$ with $1 \leq \ell \leq K$ arrives by time $t$, which occurs with probability $\sfe^{-K \cdot t}$. Similarly, $i_1$ is safe at time $t$ if and only if either (\tbf{EV}) occurs, or one of the $j_\ell$ with $1 \leq \ell \leq K$ arrives exactly once by time $t$ but happens to match with $i_2$. The same analysis applies to $i_2$ as well.}
\begin{align}
\mbox{\tbf{\emph{For Figure~\ref{fig:20/a/l}}}:} && &\E[\saf_{1,t} \cdot \saf_{2,t}] = \sfe^{-K t}, \quad \E[\saf_{1,t}] = \E[\saf_{2,t}] = \sfe^{-K t} (1 + Kt/2). \nonumber \\
&&&\frac{\E[\saf_{1,t} \cdot \saf_{2,t}]}{\E[\saf_{1,t}] \cdot \E[\saf_{2,t}]} = \frac{\sfe^{K t}}{(1 + Kt/2)^2} \ge 1. \label{ineq:pos-a}
\end{align}

For the graph in Figure~\ref{fig:20/a/r}, which is one of the WS structures for an offline node of mass one after rounding in~\cite{brubach2020online}, each offline node has two edges: one small edge of mass $1/3$ and one big edge of mass $2/3$. Similarly, we can verify that:
\begingroup
\allowdisplaybreaks
\begin{align}
\mbox{\tbf{\emph{For Figure~\ref{fig:20/a/r}}}:} && &\E[\saf_{1,t} \cdot \saf_{2,t}] = \sfe^{-2 t}, \nonumber \\
&&&\E[\saf_{1,t}] = \E[\saf_{2,t}] = \sfe^{-2 t} \left(1 + t \left(\frac{1}{3} \cdot \frac{1}{2} + \frac{2}{3} \cdot \frac{1}{2}\right)\right) = \sfe^{-2t}(1 + t). \nonumber \\
&&&\frac{\E[\saf_{1,t} \cdot \saf_{2,t}]}{\E[\saf_{1,t}] \cdot \E[\saf_{2,t}]} = \frac{\sfe^{2t}}{(1 + t)^2} \ge 1. \label{ineq:pos-b}
\end{align}
\endgroup
\end{example}

\xhdr{Remarks on Example~\ref{exam:pos}}. Expression~\eqref{ineq:pos-a} suggests the following: (i) The two offline nodes in Figure~\ref{fig:20/a/l} staying safe at any time $t \in [0,1]$ are always positively correlated; (ii) The conditional probability of one offline node staying safe given that the other is safe at any time $t$ can be inflated arbitrarily compared with the unconditional probability, \ie $\E[\saf_{1,t} ~|~ \saf_{2,t}]/\E[\saf_{1,t}] \to \infty$ as $K \to \infty$ for any given $t \in (0,1]$; and (iii) Inequality~\eqref{ineq:pos-b} implies a positive correlation between the two offline nodes staying safe \emph{at any time} in Figure~\ref{fig:20/a/r}.

\subsection{Comparison of Our Competitive Analysis Approach with That in~\cite{bib:Jaillet,brubach2020online}}\label{sec:com} 

\subsubsection{Review of the Competitive Analysis Approach in~\cite{bib:Jaillet,brubach2020online}}

To better illustrate the connections between the approach in~\cite{bib:Jaillet,brubach2020online} and ours, we introduce an auxiliary policy, denoted by $\vir$, in Algorithm~\ref{alg:vir}.

\begin{algorithm}[ht!] 
\caption{An auxiliary algorithm ($\vir$) parameterized by $\x \in \{0,1/3,2/3\}^{|E|}$.}\label{alg:vir}
\DontPrintSemicolon
\tcc{\bluee{The parameter vector $\x \in \{0,1/3,2/3\}^{|E|}$ lies in the matching polytope of the input graph $G$. We assume WLOG that $x_j := \sum_{i \in I_j} x_{ij} = 1$ for every $j \in J$ (otherwise, we can achieve this by adding dummy neighbors of $j$).}}
When an online node of type $j \in J$ arrives at time $t \in [0,1]$:\;
Let $I_{j,\x}$ be the set of nonzero neighbors of $j$ under $\x$. Generate a random list $\mathcal{L}_j$ by sampling a random permutation of $(i_1, i_2, i_3)$ over $I_{j,\x}$ such that $\Pr[\mathcal{L}_j = (i_1, i_2, i_3)] = x_{i_1,j} \cdot x_{i_2,j} / (x_{i_2,j} + x_{i_3,j})$ if $|I_{j,\x}| = 3$, and a random permutation of $(i_1, i_2)$ over $I_{j,\x}$ such that $\Pr[\mathcal{L}_j = (i_1, i_2)] = x_{i_1,j}$ if $|I_{j,\x}| = 2$. \;
Match $j$ with the first node in $\mathcal{L}_j$ \emph{that has not been matched by $j$ before}, if any; otherwise, skip $j$. \label{alg:vir/s1}\;
\tcc{\bluee{See \tbf{Remarks} on $\vir$ for clarifications on Step~\eqref{alg:vir/s1}.}}
\end{algorithm}

\xhdr{Remarks on $\vir$ in Algorithm~\ref{alg:vir}}. (i) In Step~\ref{alg:vir/s1}, an offline node $i$ on $\mathcal{L}_j$ can be matched with $j$, even if it had been previously matched before time $t$ by a neighbor $j'$ of $i$ other than $j$. Consequently, each offline node can potentially be matched with multiple distinct online neighbors. In the context of (offline) vertex-weighted scenarios, as assumed in~\cite{bib:Jaillet,brubach2020online}, our emphasis lies in establishing a lower bound for the probability of each offline node getting matched. Hence, we can confidently disregard the precise assignment of the online neighbor to it.

\smallskip

(ii) Let $\widetilde{\mathsf{AUG}}$ denote an enhanced version of $\vir$ in Algorithm~\ref{alg:vir}, where Step~\eqref{alg:vir/s1} is updated as: ``Match $j$ with the first safe node on $\mathcal{L}_j$, if any,''  where any node $i \in \mathcal{L}_j$ is considered safe if it has not been matched by any of its online neighbors. Both works~\cite{bib:Jaillet,brubach2020online} present their algorithms following a style similar to $\widetilde{\mathsf{AUG}}$. \emph{The exact version of $\vir$ outlined in Algorithm~\ref{alg:vir} is never explicitly stated in either of the two works. However, it encapsulates the core of the algorithm intensively analyzed in both papers, whose competitive ratio serves as a valid lower bound for that of the target $\widetilde{\mathsf{AUG}}$ due to Lemma~\ref{lem:vir} below}. We leave the proof of Lemma~\ref{lem:vir} to Appendix~\ref{app:vir}.

\smallskip

(iii) Although the competitive-analysis approach in both~\cite{bib:Jaillet,brubach2020online} is almost identical, the two studies adopt vastly different methods to generate the well-structured vector $\mathbf{x}$ for $\vir$: \cite{bib:Jaillet} achieves this by introducing specific constraints $0 \le x_e \le 2/3$ for every $e \in E$ to the benchmark linear program (LP), whereas~\cite{brubach2020online} applies randomized dependent rounding~\cite{gandhi2006dependent} to an optimal (fractional) solution of a benchmark LP.

\begin{lemma}[Appendix~\ref{app:vir}]\label{lem:vir}
Consider a given well-structured solution $\x \in \{0,1/3,2/3\}^{|E|}$ that lies in the matching polytope of the input graph, and let $\bt(\x)$ and $\vir(\x)$ denote the respective meta Algorithm~\ref{alg:meta} and auxiliary Algorithm~\ref{alg:vir}, each fed with $\x$. We claim that: (1) $\bt(\x)$ is equivalent to $\widetilde{\mathsf{AUG}}(\x)$, an enhanced version of $\vir$ with Step~\eqref{alg:vir/s1} updated as ``Match $j$ with the first safe node on $\mathcal{L}_j$, if any.'' (2) The performance of $\bt(\x)$ is lower bounded by that of $\vir(\x)$.
\end{lemma}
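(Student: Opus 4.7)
The plan is to address both claims via coupling arguments. For claim (1), I will couple $\bt(\x)$ with $\widetilde{\mathsf{AUG}}(\x)$ arrival by arrival so that they produce identical matchings, by verifying that at every arrival of an online node $j$, the conditional distribution of the matched offline node (given the shared safe set) is the same under both algorithms. For claim (2), I will run $\widetilde{\mathsf{AUG}}(\x)$ and $\vir(\x)$ on a common probability space using the \emph{same} arrival sequence and the \emph{same} random list $\mathcal{L}_j$ at each arrival, and prove by induction on arrival times that the set of offline nodes matched at least once in $\vir$ is always contained in the set of offline nodes matched in $\widetilde{\mathsf{AUG}}$. Combined with the vertex-weighted objective, this gives the desired per-node lower bound and hence the claimed weighted lower bound.

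\smallskip

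For claim (1), fix an arrival of $j$ and let $S \subseteq I_{j,\x}$ be the common safe neighborhood at that moment. The constraint $x_j = 1$ together with $x_e \in \{0,1/3,2/3\}$ forces only two structures for $I_{j,\x}$: either three edges all of mass $1/3$, or two edges of masses $2/3$ and $1/3$. For each structure I enumerate every nonempty $S$, compute from the list distribution in $\vir$ the probability that the first node of $\mathcal{L}_j$ lying in $S$ equals a given $i \in S$, and compare with $\bt$'s boosted weight $x_{ij}/\sum_{i' \in S} x_{i',j}$. The only nontrivial case is $|I_{j,\x}|=3$ and $|S|=2$, say $S=\{a,b\}$: summing the three list-event probabilities in which $a$ precedes $b$ and collapsing via $x_{a,j}+x_{b,j}+x_{c,j}=1$ yields $x_{a,j}/(x_{a,j}+x_{b,j})$, matching $\bt$'s boosted weight exactly. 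Given this per-arrival distributional identity, an induction on the arrival sequence, coupling the jointly sampled matched nodes to agree, preserves equality of the safe sets and hence of the full random matchings.

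\smallskip

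For claim (2), let $M_t^{\vir}$ and $M_t^{\widetilde{\mathsf{AUG}}}$ denote the sets of offline nodes matched at least once by time $t$ in $\vir$ and $\widetilde{\mathsf{AUG}}$, respectively, under the above common coupling. I claim $M_t^{\vir}\subseteq M_t^{\widetilde{\mathsf{AUG}}}$ for all $t$, by induction on arrivals. At an arrival of $j$ at time $t$ with list $\mathcal{L}_j=(i_1,\ldots,i_m)$, let $k$ be the smallest index with $i_k$ not previously matched by $j$ in $\vir$, and $\ell$ the smallest index with $i_\ell\notin M_t^{\widetilde{\mathsf{AUG}}}$. By the inductive hypothesis, the nodes previously matched by $j$ in $\vir$ lie in $M_t^{\vir}\subseteq M_t^{\widetilde{\mathsf{AUG}}}$, hence $\{i_1,\ldots,i_{k-1}\}\subseteq M_t^{\widetilde{\mathsf{AUG}}}$, forcing $\ell\ge k$. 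A short case split then finishes the step: if $i_k\in M_t^{\widetilde{\mathsf{AUG}}}$ already, the containment is automatic; otherwise $\ell=k$, so $\widetilde{\mathsf{AUG}}$ also matches $i_k$ at this arrival. Since vertex-weighted performance equals $\sum_i w_i\Pr[i\text{ matched at least once}]$, the containment gives $\Pr[i\in M^{\vir}]\le\Pr[i\in M^{\widetilde{\mathsf{AUG}}}]$ for every offline $i$, and combining with claim (1) proves claim (2).

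\smallskip

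The main delicate point is choosing the right coupling for claim (2). A coupling through only the scalar matched-node samples of $\bt$ would be too coarse, because $\vir$'s behavior at an arrival depends on the \emph{entire} list, not just its head; the coupling must therefore be on the lists themselves, and the inductive invariant must be set-containment of matched nodes rather than equality. A second subtlety, flagged in the remark on $\vir$, is that in $\vir$ an offline node may be re-matched by multiple distinct online neighbors. This would be problematic in an edge-weighted setting, but is harmless here: in the offline vertex-weighted model each offline node contributes to the objective at most once, so set-containment $M^{\vir}\subseteq M^{\widetilde{\mathsf{AUG}}}$ is exactly what is needed to conclude.
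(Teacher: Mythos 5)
Your proof is correct and takes essentially the same approach as the paper's: for Claim (1) you verify, case by case on the safe subset $S \subseteq I_{j,\x}$, that the list-based sampling in $\widetilde{\mathsf{AUG}}$ reproduces $\bt$'s boosted probabilities (the paper enumerates the identical cases $|I_{j,\x,t}| \in \{1,2,3\}$ and does the same algebra in the two-safe case); for Claim (2) you couple on the arrival sequences and lists and prove $M_t^{\vir}\subseteq M_t^{\widetilde{\mathsf{AUG}}}$ by induction, which is the same pathwise domination the paper expresses (somewhat more informally) by observing that $\vir$'s match condition for $i$ (predecessors on $\mathcal{L}_j$ matched by $j$ itself) is strictly more demanding than $\widetilde{\mathsf{AUG}}$'s (predecessors matched by anyone). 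Your write-up of the induction is in fact slightly more explicit than the paper's.
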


The two studies~\cite{bib:Jaillet, brubach2020online} conduct competitive analysis for $\vir$ as follows. Consider a given well-structured solution $\x \in \{0,1/3,2/3\}^{|E|}$ and an offline node $i \in I$. Both works propose a collection of \emph{mutually independent complements}, called \emph{certificates}, each characterized by a certain arrival sequence of lists that can secure $i$'s matching. 

\smallskip

Consider the instance in Figure~\ref{fig:29/a}, for example. The two offline nodes, $\bfi$ and $\hi$, share an online neighbor $j_b$, and thus, we call $\bfi$ and $\hi$ \emph{offline neighbors}.\footnote{In this case, we also say $\bfi$ offline neighbors $\hi$ through $j_b$ and vice versa.} Focus on the node $\hi$. The two studies~\cite{bib:Jaillet, brubach2020online} propose two types of certificates to compute the exact matching probability of $\hi$ in $\vir$. The first is defined as at least one arrival of any list topped by $\hi$ from $j_b$ (\tbf{CT1}); the second is defined as at least two arrivals of any list topped by $\bfi$ from $j_b$ (\tbf{CT2}). We can verify that: (1) \tbf{CT1} and \tbf{CT2} each occur with respective probabilities of $1 - \sfe^{-1/3}$ and $1 - \sfe^{-2/3}(1 + 2/3)$, and the two certificates are mutually independent complements.\footnote{To be precise, their complements exhibit asymptotic mutual independence as the time horizon $T \to \infty$.} (2) The matching of $\hi$ in $\vir(\x)$ can be guaranteed by the occurrence of either \tbf{CT1} or \tbf{CT2}, which happens with probability equal to $1 - \left(1 - (1 - \sfe^{-1/3})\right) \cdot \left(1 - (1 - \sfe^{-2/3}(1 + 2/3))\right) = 1 - \sfe^{-1}(5/3)$.

\begin{figure}
\begin{subfigure}[b]{0.45\textwidth}
    \centering
\begin{tikzpicture}
 \draw (0,0) node[minimum size=0.2mm,draw,circle, ultra thick] {$\mathbf{i}$};
   \draw (-0.4,1) node[left] {\bluee{$1/3$}};
  \draw (-0.4,0) node[left] {\bluee{$1$}};
    \draw (-0.4,-1) node[left] {\bluee{$1/3$}};
    \draw (-0.4,-2) node[left] {\bluee{$1/3$}};
 \draw (3,0) node[minimum size=0.2mm,draw,circle, thick] {$j_s$};
  \draw (3,1) node[minimum size=0.2mm,draw,circle, thick] {$j_b$};
 \draw (0,-1) node[minimum size=0.2mm,draw,circle, thick] {$\bi$};
  \draw (0,-2) node[minimum size=0.2mm,draw,circle, thick] {$\ti$};
    \draw (0,1) node[minimum size=0.2mm,draw,circle, thick] {$\hi$};

\draw[-,  very thick] (0.4,1)--(2.6,1);
\draw[-,  very thick] (0.4,0)--(2.6,0);
\draw[-,  very thick] (0.4, -1)--(2.6,0);
\draw[-,  very thick] (0.4, -2)--(2.6,0);
\draw[-,  red, ultra thick] (0.4, 0)--(2.6,1);
\end{tikzpicture}
\caption{}
\label{fig:29/a}
\end{subfigure}
\hfill
\begin{subfigure}[b]{0.45\textwidth}
    \centering
\begin{tikzpicture}
 \draw (0,0) node[minimum size=0.2mm,draw,circle, ultra thick] {$\mathbf{i}$};
   \draw (-0.4,1) node[left] {\bluee{$1/3$}};
  \draw (-0.4,0) node[left] {\bluee{$1$}};
    \draw (-0.4,-1) node[left] {\bluee{$1/3$}};
    \draw (-0.4,-2) node[left] {\bluee{$1/3$}};
 \draw (3,0) node[minimum size=0.2mm,draw,circle, thick] {$j_s$};
  \draw (3,1) node[minimum size=0.2mm,draw,circle, thick] {$j_b$};
 \draw (0,-1) node[minimum size=0.2mm,draw,circle, thick] {$\bi$};
  \draw (0,-2) node[minimum size=0.2mm,draw,circle, thick] {$\ti$};
    \draw (0,1) node[minimum size=0.2mm,draw,circle, thick] {$\hi$};

\draw[-,  very thick] (0.4,1)--(2.6,1);
\draw[-,  very thick] (0.4,0)--(2.6,0) node [blue, above, near end, sloped] {$1 - 2z_2$};
\draw[-,  very thick] (0.4, -1)--(2.6,0) node [blue, above, midway, sloped] {$z_2$};
\draw[-,  very thick] (0.4, -2)--(2.6,0) node [blue, below, midway, sloped] {$z_2$};
  \draw (1.5,1) node[above] {\bluee{$z_1$}};
\draw[-,  red, ultra thick] (0.4, 0)--(2.6,1) node [blue, above, near start, sloped] {$1 - z_1$};
\end{tikzpicture}
\caption{}
\label{fig:29/b}
\end{subfigure}

\caption{An example highlighting (i) the need for modifications to sampling distributions of online nodes and (ii) the difference between our approach and those in~\cite{bib:Jaillet, brubach2020online}. \emph{Throughout this paper, we assume the following unless specified otherwise: (i) Big edges of mass $2/3$ and small edges of mass $1/3$ are marked in red and black, respectively; (ii) The value next to each offline node represents its total mass in the well-structured vector $\x \in \{0,1/3,2/3\}^{|E|}$.} \\
Figure~\ref{fig:29/a} shows that even under the current holistic competitive analysis, the Matching Probability per Mass (MPM) achieved by node $\mathbf{i}$ in $\bt$ equals $1 - 22/(9 \sfe^2) \approx 0.6692 < \kap^*_B$, the target MPM for an offline node of mass one in the form of 1B1S (see Appendix~\ref{app:mot/one}). \\ Figure~\ref{fig:29/b} proposes modified sampling distributions for nodes $j_b$ and $j_s$, where the input vector for $j_b$ is updated from $(1/3, 2/3)$ to $(z_1, 1 - z_1)$ with $z_1 \in [0,1]$. Similarly, the input vector for $j_s$ is updated from $(1/3, 1/3, 1/3)$ to $(1 - 2z_2, z_2, z_2)$, where $z_2 \in [0,1/2]$. In Section~\ref{sec:a+c}, we establish that an \emph{aggressive} setting of $z_1 = z_2 = 0$ in Figure~\ref{fig:29/b} suffices to ensure every offline node achieves an MPM greater than the target specified in Proposition~\ref{pro:main-1}. Specifically, we show that under the \emph{aggressive} setting, node $\bfi$ achieves an MPM equal to $1 - \sfe^{-2} > \kapc$, and nodes $\bi$ and $\ti$ each achieve an MPM of $3(1 - 2/\sfe) \approx 0.7927 > \kapa$. This contrasts with the fact that in the same aggressive setting, nodes $\bi$ and $\ti$ each achieve an MPM equal to $3\left(1 - 9/(4\sfe)\right) \approx 0.5168 < \kapa$ following  the approach in~\cite{bib:Jaillet, brubach2020online}.\protect\footnotemark}
\label{fig:12/27/a}
\end{figure}
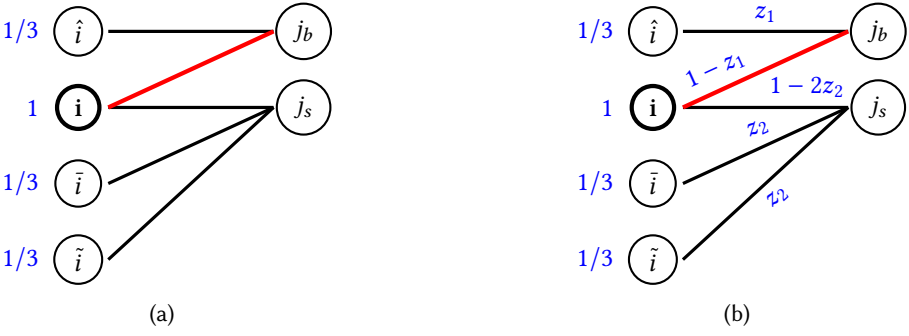
\footnotetext{This can be seen as follows: By the approach in~\cite{bib:Jaillet, brubach2020online}, when $z_2 = 0$, a list topped by $\bfi$, denoted by $\mathcal{L} = (\bfi, *, *)$, gets sampled with probability one when $j_s$ arrives. Thus, the total number of arrivals of $\mathcal{L}$ is $\text{Pois}(1)$ (a Poisson random variable with mean one). Consequently, for each of $\bi$ and $\ti$, it gets matched with probability 0, $1/2$, and 1 when $\mathcal{L}$ arrives once, twice, and at least three times, respectively. For more details, see Lemma~\ref{lem:7/3/b} and its proof in Appendix~\ref{app:mot/one}.}

\smallskip
\emph{We emphasize that the approach proposed in~\cite{bib:Jaillet, brubach2020online} focuses on a specific class of certificates, each characterized by a sequence of arriving lists that involve only a single online neighbor.} This method disregards certificates involving multiple online neighbors that could still guarantee the matching of the target offline node. \tbf{For the case of $\hi$ in Figure~\ref{fig:29/a}}: In addition to the two certificates previously mentioned, another candidate that can secure the matching of $\hi$ is as follows: a sequence consisting of a list that is {topped} by $\bfi$ but associated with the online neighbor $j_s$ of $\bfi$, followed by another list topped by $\bfi$ from $j_b$ (\textbf{CT3}). Note that Certificate \textbf{CT3} ensures the matching of $\hi$ in $\widetilde{\mathsf{AUG}}$ but not in $\vir$.\footnote{This is why we assert that both works of~\cite{bib:Jaillet, brubach2020online} essentially analyze a weaker version ($\vir$) of the algorithm proposed in the paper ($\widetilde{\mathsf{AUG}}$).}

\smallskip
The approach that focuses on certificates involving a single online neighbor~\cite{bib:Jaillet, brubach2020online} has both advantages and disadvantages. An advantage is that it enables us to circumvent the challenge posed by the positive correlations among offline nodes getting matched in $\widetilde{\mathsf{AUG}}$. Notably, since certificates involving a single online neighbor prove mutually independent complements, the task of lower bounding the matching probability of any target offline node is greatly simplified: We only need to identify all possible \emph{certificates} and evaluate their respective probabilities. This immediately leads to a valid lower bound, as we did for the instance in Figure~\ref{fig:29/a}. However, a notable disadvantage is apparent: The obtained result offers only a lower bound on the exact matching probability, which potentially bears a considerable gap due to miscounting certificates that \emph{involve} multiple online neighbors but can still ensure the matching of any target offline node.

\subsubsection{An Alternative Interpretation of the Approach in~\cite{bib:Jaillet, brubach2020online}}\label{sec:int}

We offer insights into $\vir$ (Algorithm~\ref{alg:vir}), the essential algorithm analyzed in both~\cite{bib:Jaillet, brubach2020online}, from the perspective of the real-time boosting impact on the sampling distributions of online nodes. As clarified in ``\textbf{Remarks on $\vir$}'' on page 7,  upon the arrival of an online node $j$ at time $t$, $\vir$ disregards the real-time status of being safe or matched for each $i \in I_{j,\x}$. Instead, it solely monitors whether each $i \in I_{j,\x}$ was matched by $j$ itself before. Consequently, this might result in wasting $j$ by matching it with some $i \in I_{j,\x}$ that had been matched before by some $j' \neq j$, where $j' \in J_{i,\x}$ (the set of nonzero neighbors of $i$ under $\x$).

\smallskip

In terms of boosting effects, $\vir$ only \emph{partially} exploits the benefit introduced by real-time boosting to the sampling distribution $\mathcal{D}_{j,t}$ for $j$ at time $t$: $\mathcal{D}_{j,t}$ will get boosted only when some neighbor $i \in I_{j,\x}$ gets matched by $j$ itself. In other words, $\mathcal{D}_{j,t}$ would remain invariant even when some $i \in I_{j,\x}$ is matched by some $j' \neq j$ with $j' \in J_{i,\x}$. This stands in contrast to $\bt$ and $\widetilde{\mathsf{AUG}}$, which \emph{fully} harness the power of real-time boosting. In these cases, $\mathcal{D}_{j,t}$ receives a boost whenever any $i \in I_{j,\x}$ is matched before time $t$, regardless of the matching agent for $i$.

\subsubsection{Our Approach}

In this paper, we present a \emph{holistic} approach to evaluate the matching probability of a target offline node. Specifically, we propose and harness the power of Ordinary Differential Equations (ODEs) systems to capture the real-time boosting impact on the sampling distributions of online nodes. We use the instance in Figure~\ref{fig:29/b} to illustrate the differences.

As shown in~\cite{bib:Jaillet, brubach2020online}, the bottleneck arises at offline nodes of mass one rather than those with mass less than one. As a result, we need to reduce the performance on offline nodes of mass less than one to compensate for those of mass one by adding modifications to the sampling distributions of online nodes. For the case in Figure~\ref{fig:29/b}, we adjust the sampling distribution on $j_s$ by updating the input vector from $(1/3, 1/3, 1/3)$ to $(1 - 2z_2, z_2, z_2)$, where $z_2 \in [0, 1/2]$. The default setting is $z_2 = 1/3$, and we aim to identify the \emph{smallest} possible value of $z_2$ such that $\bi$ and $\ti$ each achieve a Matching Probability per Mass (MPM) equal to a preset target $\kap$, thereby benefiting node $\bfi$ the most.

\smallskip

In Appendix~\ref{app:mot/one}, we show that under the approach in~\cite{bib:Jaillet, brubach2020online}, offline nodes $\bi$ and $\ti$ in Figure~\ref{fig:29/b} each achieve an MPM of
\begin{align*}
\eta(z_2) := \frac{3 \left(8 z_2^2 + \frac{-14 z_2^2 + 19 z_2 + 9}{\sfe} - 12 z_2 - 4\right)}{4 (z_2 - 1)}.
\end{align*}
The work by~\cite{brubach2020online} asserted an MPM of $\tkap = 0.7622$ for any offline node of mass $1/3$. Consequently, to ensure that $\bi$ and $\ti$ in Figure~\ref{fig:29/b} each achieve an MPM of at least $\tkap$, we must set $\eta(z_2) \geq \kappa$, which can be solved as $z_2 \geq 0.0558$. This aligns with the configuration of $z_2 = 0.1$, as proposed in~\cite{brubach2020online}. However, it contrasts with our choice of $z_2 = 0$ for the same instance in Figure~\ref{fig:29/b}.

\smallskip

In Section~\ref{sec:a+c}, we establish that $z_1 = z_2 = 0$ suffices to guarantee that every offline node of mass $1/3$ achieves an MPM as high as $0.7927$. The improvement arises from a distinct approach based on the Ordinary Differential Equations (ODEs) system: it allows for a \emph{comprehensive} competitive analysis, accounting not only for certificates involving a single online node, as done in~\cite{bib:Jaillet, brubach2020online}, but also for those involving multiple different online nodes (\ie $j_b$ and $j_s$) when assessing the matching probability of $\bi$ and $\ti$. Put in the context of boosting, the holistic ODEs system-based approach enables us to directly analyze $\bt$ (or $\widetilde{\mathsf{AUG}}$), which enjoys the full benefits brought by real-time boosting to the sampling distributions of online nodes.

\subsection{Other Related Work}

A significant body of work explores various variants of online matching models, as detailed in the survey book by~\cite{mehta2012online}. In this discussion, we narrow our focus to studies specifically investigating the model of Online Matching under Known Identical and Independent Distributions (OM-KIID). The seminal work by~\cite{bib:Feldman} introduced the first algorithm for (unweighted) OM-KIID, achieving a competitiveness that notably exceeds the golden barrier of $1 - 1/\sfe$. Subsequently, several works have examined OM-KIID under different objectives, including vertex-weighted and edge-weighted settings, as explored by~\cite{bib:Jaillet,bib:Manshadi,bib:Haeupler,tang2022fractional}. Notably,~\cite{yan2024edge} presented an algorithm achieving a competitive ratio of 0.645, the first to strictly surpass $1 - 1/\sfe$ for \emph{edge-weighted} OM-KIID \emph{with general arrival rates}, which was recently improved to 0.650 by~\cite{qiu2023improved}.

\xhdr{Other Hardness Results Concerning OM-KIID}. For vertex-weighted OM-KIID \emph{with general arrival rates}, the best upper bound is 0.823, as established by~\cite{bib:Manshadi}. For \emph{edge-weighted} OM-KIID \emph{with general arrival rates},~\cite{huang2022power} provided an improved bound of 0.703, based on a delicate (online-sided) vertex-weighted instance. Additionally,~\cite{ma2021grouplevel} considered OM-KIID \emph{with general arrival rates} under the objective of fairness maximization among online agents, offering a hardness result of $\sqrt{3} - 1 \approx 0.7321$.

\xhdr{Studies Related to Real-Time Boosting}. The idea of real-time boosting is natural and has been implemented and empirically evaluated as a heuristic in various real-world matching markets~\cite{aamas-19,dickerson2018assigning}.~\cite{ma2023fairness} conducted a formal analysis of algorithms incorporating real-time boosting for the OM-KIID model with integral arrival rates \emph{but focused on maximizing fairness among offline agents}. They introduced a matching policy employing boosting that achieves a competitiveness of 0.722. Notably, the objective of fairness maximization among offline agents implies an exclusive property for any optimal solution $\x$ to a benchmark LP: All offline agents can be assumed, without loss of generality, to have uniform mass under $\x$. This assumption does not apply to our case, however, as we aim to maximize the expected total weight among all matched offline nodes. The study by~\cite{ma2023fairness} further demonstrates that, with minor adjustments, the same algorithm achieves a competitiveness of 0.719 for vertex-weighted OM-KIID with integral rates, the exact model considered in this paper. Unfortunately, the ratio of 0.719 presented there does not improve either of the two state-of-the-art results of 0.7299 or 0.725, established by~\cite{brubach2020online} and~\cite{bib:Jaillet}, respectively.

\section{When All Offline Nodes Have a Mass of One}\label{sec:one}

In this section, we prove \emph{Claim (1)} in Proposition~\ref{pro:main-1} \emph{in the context where all offline nodes have a mass of one (on the random rounded vector)}. The claim states that for any realization of $\X$, $\bt(\X)$ achieves a \emph{Matching Probability per Mass} (MPM) of at least $\kapc = 1 - 2\sfe^{-2}$ and $\kapd = 1 - \frac{9}{2} \cdot \sfe^{-3}$ for any offline node of mass one, when instantiated as 1B1S (one big and one small edge) and as 3S (three small edges), respectively.

\subsection{\tbf{Folding Procedure} and \re}\label{sec:fold}

Consider the instance in Figure~\ref{fig:11/15/23/a}, which illustrates the \tbf{Folding Procedure} applied to an offline node of mass one when all its offline neighbors also have a mass of one.

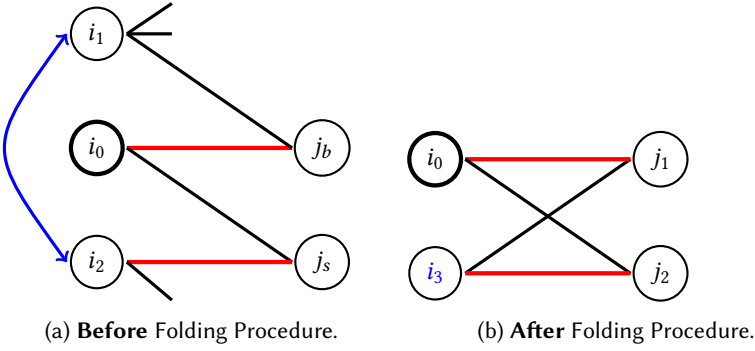
\begin{figure}[ht!]
\begin{subfigure}[b]{0.4\textwidth}
\begin{tikzpicture}
\draw[-, very thick] (0.4,0)--(1,0.4);
\draw[-, very thick] (0.4,0)--(1,0);
\draw (0,0) node[minimum size=0.2mm,draw,circle, thick] {$i_1$};
\draw (0,-1.5) node[minimum size=0.2mm,draw,circle, ultra thick] {$i_0$};
\draw (3,-1.5) node[minimum size=0.2mm,draw,circle, thick] {$j_b$};
\draw (0,-3) node[minimum size=0.2mm,draw,circle, thick] {$i_2$};
\draw (3,-3) node[minimum size=0.2mm,draw,circle, thick] {$j_s$};
\draw[-, red, ultra thick] (0.4,-1.5)--(2.6,-1.5);
\draw[-, very thick] (0.4,-1.5)--(2.6,-3);
\draw[-, red, ultra thick] (0.4,-3)--(2.6,-3);
\draw[-, very thick] (0.4,-3)--(1,-3.5);
\draw[-, very thick] (0.4,0)--(2.6,-1.5);
\draw[<->, blue, very thick] (-0.4,0) .. controls (-1.5,-1.5) .. (-0.4,-3);
\end{tikzpicture}
\caption{\tbf{Before} Folding Procedure.}
\label{fig:11/15/23/a/l}
\end{subfigure}
\begin{subfigure}[b]{0.4\textwidth}
\begin{tikzpicture}
\draw (0,-1.5) node[minimum size=0.2mm,draw,circle, ultra thick] {$i_0$};
\draw (3,-1.5) node[minimum size=0.2mm,draw,circle, thick] {$j_1$};
\draw (0,-3) node[minimum size=0.2mm,draw,circle, thick] {\bluee{$i_{3}$}};
\draw (3,-3) node[minimum size=0.2mm,draw,circle, thick] {$j_2$};
\draw[-, very thick] (0.4,-3)--(2.6,-1.5);
\draw[-, red, ultra thick] (0.4,-1.5)--(2.6,-1.5);
\draw[-, very thick] (0.4,-1.5)--(2.6,-3);
\draw[-, red, ultra thick] (0.4,-3)--(2.6,-3);
\end{tikzpicture}
\caption{\tbf{After} Folding Procedure.}
\label{fig:11/15/23/a/r}
\end{subfigure}

\caption{The target offline node $i_0$ is in the form of 1B1S (one big and one small edge, marked in red and black, respectively), where the small online neighbor $j_s$ has another big edge $(i_2, j_s)$. The \tbf{Folding Procedure} merges the two offline neighbors of $i_0$, \ie $i_1$ and $i_2$, into one (\bluee{$i_3$}), resulting in the structure shown on the right.}
\label{fig:11/15/23/a}
\end{figure}

\begin{lemma}\label{lem:11/15/a}
After the \tbf{Folding Procedure}, the probability that the target offline node $i_0$ remains safe in Figure~\ref{fig:11/15/23/a/r} is no less than in Figure~\ref{fig:11/15/23/a/l} at any time.
\end{lemma}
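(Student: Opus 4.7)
Plan: I would prove the lemma via an ODE-comparison argument. Using the real-time boosting rule of $\bt$, I derive in both figures the common evolution
\[
f'(t)=-2f(t)+\tfrac{1}{3}\,G_b(t)+\tfrac{2}{3}\,G_s(t),
\]
where $f(t)=\Pr[i_0\text{ safe at }t]$, $G_b(t)$ is the joint probability that $i_0$ and the small-side offline neighbor of its big online partner $j_b$ are both safe, and $G_s(t)$ is defined symmetrically for $j_s$. This follows because at each arrival of $j_b$ (rate $1$), the probability of matching $i_0$ conditional on the arrival equals $1-\tfrac{1}{3}\mathbf{1}[\text{small-side neighbor safe}\mid i_0\text{ safe}]$ times $\mathbf{1}[i_0\text{ safe}]$, whose expectation is $f-\tfrac{1}{3}G_b$; and symmetrically for $j_s$.

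In the folded graph (Figure~\ref{fig:11/15/23/a/r}), the four-cycle symmetry between $\{i_0,i_3\}$ implies that $\{i_0,i_3\}$ can be jointly safe only if neither $j_1$ nor $j_2$ has arrived (since any such arrival matches one of them); hence $G_b^{\text{fold}}(t)=G_s^{\text{fold}}(t)=e^{-2t}$, and the ODE integrates to $f_{\text{fold}}(t)=(1+t)\,e^{-2t}$.

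The central step is to establish the pointwise bound
\[
\tfrac{1}{3}\,G_b^{\text{orig}}(t)+\tfrac{2}{3}\,G_s^{\text{orig}}(t)\;\le\;e^{-2t}\qquad \forall\, t\in[0,1].
\]
Once this holds, applying an integrating-factor argument to $\Delta(t):=f_{\text{fold}}(t)-f_{\text{orig}}(t)$ (which satisfies $\Delta(0)=0$ and $\Delta'(t)\ge-2\Delta(t)$) yields $f_{\text{orig}}(t)\le f_{\text{fold}}(t)$, i.e., the claim. For the bound itself I plan a two-step argument. \emph{First}, factor each $G_k^{\text{orig}}$ by exploiting that in the original graph $i_0$ and $i_k$ share exactly one online neighbor: joint safety forces the shared neighbor to have had no arrival by $t$, and conditional on this event the remaining safety conditions decouple across independent Poisson processes (those driving the \emph{other} online partner of $i_0$, and those driving the externals on $i_k$). \emph{Second}, invoke a monotonicity argument: strengthening the external connections on $i_1$ or $i_2$ monotonically decreases the left-hand side. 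Combined with the unit-mass assumption of Section~\ref{sec:one} (which forces the external masses on $i_1,i_2$ to equal $2/3$ and $1/3$ respectively), this reduces verification to the extremal configurations compatible with that assumption, in which each factor admits a closed form and the inequality becomes a direct algebraic check.

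The main obstacle is the monotonicity step, because perturbing externals on $i_1$ has cascading effects: it shifts when $i_1$ becomes unsafe, which shifts $j_b$'s matching toward $i_0$, which in turn shifts $i_0$'s subsequent matching at $j_s$, and so on. I would handle this via a path-coupling on the shared Poisson clocks of $j_b$ and $j_s$ that introduces external arrivals on $i_1,i_2$ one at a time on the same probability space, tracking joint safety indicators through the dynamics so that each such perturbation pointwise reduces both $G_b^{\text{orig}}$ and $G_s^{\text{orig}}$. Note that without the unit-mass hypothesis the bound would in fact fail (e.g., stripping all externals gives a safety probability strictly above $(1+t)e^{-2t}$), so the argument must crucially use that $i_1,i_2$ have mass one.
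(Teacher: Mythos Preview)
Your ODE framework is exactly the paper's: both derive $\alpha'(t)=-2\alpha(t)+\tfrac{1}{3}\beta_1(t)+\tfrac{2}{3}\beta_2(t)$ for the original graph and $\tilde\alpha'(t)=-2\tilde\alpha(t)+\beta_3(t)$ with $\beta_3(t)=\sfe^{-2t}$ for the folded one, then appeal to a comparison argument (your integrating-factor step with $\Delta$ is in fact cleaner than the paper's slightly loose ``$\tilde\alpha'\ge\alpha'$'' wording).

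Where you diverge is in bounding $G_b^{\text{orig}},G_s^{\text{orig}}$ by $\sfe^{-2t}$. You propose a factorization followed by a path-coupling monotonicity argument, and you flag the cascading effects as the main obstacle. The paper bypasses all of this with a one-line observation: whenever $i_0$ and $i_\ell$ are simultaneously safe, the \emph{combined} instantaneous matching rate of the pair is at least $2$. The shared online neighbor contributes exactly $1$ (upon arrival it matches one of the two with probability $1$), and the non-shared edges of $i_0$ and $i_\ell$ together have total mass $(1-x_{i_0,j_{\text{shared}}})+(1-x_{i_\ell,j_{\text{shared}}})=1$ by the unit-mass hypothesis; under $\bt$ each such edge $(i,j)$ contributes matching rate at least $x_{ij}$ whenever $i$ is safe. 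This gives $\beta_\ell'(t)\le -2\beta_\ell(t)$ directly, hence $\beta_\ell(t)\le\sfe^{-2t}$, with no coupling, factoring, or extremal-case reduction needed. Your self-identified obstacle dissolves once you track the joint matching rate of the pair $(i_0,i_\ell)$ rather than trying to decouple the joint safety probability; in particular, the unit-mass hypothesis enters exactly here and in a much more transparent way than in your proposed reduction.
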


\begin{proof}

Consider the discrete version of the arrival setting. Focus on the structure in Figure~\ref{fig:11/15/23/a/l}, and let $\alpha_k$ be the probability that $i_0$ stays safe at round $k \in [T]$. Let $\beta_{\ell,k} = \E[\saf_{i_0,k} \cdot \saf_{i_\ell,k}]$ for $\ell = 1,2$, which represents the probability that both $i_0$ and $i_\ell$ are safe at round $k \in [T]$. Observe that $\beta_{\ell,k} / \alpha_k = \E[\saf_{i_\ell,k} | \saf_{i_0,k}]$, the conditional probability that $i_\ell$ is safe at round $k$ given $i_0$ is safe at that time.

Assume $i_0$ is safe at some round $k \in [T]$. The matching rate of $i_0$ from $j_b$ is $2/3$ given $i_1$ is safe, and it is $1$ if $i_1$ is not safe. Thus, assuming $i_0$ is safe at round $k$, the total matching rate of $i_0$ from $j_b$ in Figure~\ref{fig:11/15/23/a/l} is equal to
\[
\left(\frac{2}{3}\right) \left(\frac{\beta_{1,k}}{\alpha_k}\right) + 1 - \frac{\beta_{1,k}}{\alpha_k} = 1 - \frac{1}{3} \left(\frac{\beta_{1,k}}{\alpha_k}\right).
\]
We can derive the matching rate of $i_0$ from $j_s$ similarly. Therefore, the dynamics of the series $(\alpha_k)_k$ is captured as follows:
\begin{align*}
\alpha_{k+1} &= \alpha_k \cdot \left(1 - \frac{1}{T} \left[\frac{2}{3} \cdot \frac{\beta_{1,k}}{\alpha_k} + 1 - \frac{\beta_{1,k}}{\alpha_k} + \frac{1}{3} \cdot \frac{\beta_{2,k}}{\alpha_k} + 1 - \frac{\beta_{2,k}}{\alpha_k}\right]\right) \\
&= \alpha_k \cdot \left(1 - \frac{1}{T} \left[2 - \frac{1}{3} \cdot \frac{\beta_{1,k}}{\alpha_k} - \frac{2}{3} \cdot \frac{\beta_{2,k}}{\alpha_k}\right]\right).
\end{align*}

Similarly, let $\tilde{\alpha}_k$ be the probability that $i_0$ stays safe at round $k \in [T]$, and let $\beta_{3,k}$ represent the probability that both $i_0$ and $i_3$ are safe at round $k \in [T]$ in Figure~\ref{fig:11/15/23/a/r}. We have:

\begin{align*}
\tilde{\alpha}_{k+1} &= \tilde{\alpha}_k \cdot \left(1 - \frac{1}{T} \left[\frac{2}{3} \cdot \frac{\beta_{3,k}}{\alpha_k} + 1 - \frac{\beta_{3,k}}{\alpha_k} + \frac{1}{3} \cdot \frac{\beta_{3,k}}{\alpha_k} + 1 - \frac{\beta_{3,k}}{\alpha_k}\right]\right) \\
&= \tilde{\alpha}_k \cdot \left(1 - \frac{1}{T} \left[2 - \frac{\beta_{3,k}}{\alpha_k}\right]\right).
\end{align*}

Now, let us shift to the continuous arrival setting. Let $\alpha(t)$, $\tilde{\alpha}(t)$, $\beta_1(t)$, $\beta_2(t)$, and $\beta_3(t)$ be the continuous counterparts of $\alpha_k$, $\tilde{\alpha}_k$, $\beta_{1,k}$, $\beta_{2,k}$, and $\beta_{3,k}$, respectively, with $t \in [0,1]$. By taking $T \to \infty$, we have:
\begin{align*}
\alpha'(t) &= -2 \alpha(t) + \frac{1}{3} \beta_1(t) + \frac{2}{3} \beta_2(t), \quad \alpha(0) = 1; \\
\tilde{\alpha}'(t) &= -2 \tilde{\alpha}(t) + \beta_3(t), \quad \tilde{\alpha}(0) = 1.
\end{align*}

Note that at any time $t \in [0,1]$, we have: (1) $\beta_1(t) \leq \sfe^{-2t}$ and $\beta_2(t) \leq \sfe^{-2t}$. For the pair of nodes $i_0$ and $i_1$, the sum of their matching rates should be at least 2 at any time $t \in [0,1]$ if both are safe, and it could be strictly larger than 2 when either $i_2$ is matched (thus boosting the matching rate from $j_s$ from $1/3$ to 1) or any offline neighbor of $i_1$ other than $i_0$ is matched. This leads to $\beta_1(t) \leq \sfe^{-2t}$. We can argue similarly for $\beta_2(t) \leq \sfe^{-2t}$.
(2) $\beta_3(t) = \sfe^{-2t}$, since both $i_0$ and $i_3$ are safe iff neither $j_1$ nor $j_2$ ever arrives during $[0,t]$.

Thus, we conclude that $\tilde{\alpha}'(t) \geq \alpha'(t)$ for all $t \in [0,1]$. Since both functions have the same initial value at $t = 0$, we claim $\tilde{\alpha}(t) \geq \alpha(t)$ for all $t \in [0,1]$.
\end{proof}

The result in Lemma~\ref{lem:11/15/a} on the two structures shown in Figures~\ref{fig:11/15/23/a/l} and~\ref{fig:11/15/23/a/r} suggests that the WS structures must share some kind of symmetry. We formally state it below.

\begin{tcolorbox}
\xhdr{\re}. Consider a target offline node $\bfi$, and suppose it is in a Worst-Scenario (WS) structure such that the probability that $\bfi$ stays safe is maximized in \bt. We claim that $\bfi$'s WS structure should be instantiated when all of its offline neighbors are also in the worst possible structure, such that each has the largest possible probability of staying safe as well.
\end{tcolorbox}

In general, \re suggests that WS structures do not exist in isolation but arise simultaneously and exhibit shared symmetrical characteristics. This observation stems from the fact that any decreased performance among the offline neighbors of the target node $\bfi$, such as a reduced matching probability or an increased safe probability at any given time, would invariably impede the matching process for $i$. Consequently, this exacerbates the performance of its neighboring offline nodes. We offer a formal proof of the \re in Appendix~\ref{app:re}.

\subsection{Three Possible Worst-Scenario (WS) Structures for an Offline Node of Mass One}\label{sec:wsone}

By repeatedly applying the \tbf{Folding Procedure} and the \re, we identify the following three possible WS structures for an offline node of mass one, as stated in the lemma below. By default, we assume that all offline and online nodes have a mass of one after rounding.

\begin{lemma}\label{lem:ws_one}
When all offline nodes have a unit mass (after dependent rounding), repeatedly applying the \tbf{Folding Procedure} and the \tbf{Symmetry Principle} yields only three possible WS structures for a mass-one offline node, as illustrated in Figure~\ref{fig:three}.
\end{lemma}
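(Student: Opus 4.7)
The plan is to enumerate all WS structures that survive iterated application of the \re and the Folding Procedure. Since every offline node has mass one under $\X$, each must be either 1B1S or 3S, so the enumeration splits cleanly into two top-level cases based on the type of the target $\bfi$: Case A, where $\bfi$ is 1B1S with a big online neighbor $j_b$ and a small online neighbor $j_s$; and Case B, where $\bfi$ is 3S with three small online neighbors $j_1, j_2, j_3$.

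In Case A, I would analyze the residual mass at $j_b$ and $j_s$ after removing the edges to $\bfi$. At $j_b$, the residual is $1/3$, so $j_b$ must connect to exactly one other offline node $i_1$ via a small edge (the big edge of $j_b$ is already consumed by $\bfi$). At $j_s$, the residual is $2/3$, which admits two subcases: either $j_s$ is connected via one big edge to some $i_2$, or via two small edges to distinct offline nodes. The \re forces each $i_k$ to sit in its own WS structure; the Folding Procedure (Lemma~\ref{lem:11/15/a}) then lets me merge any two offline neighbors of $\bfi$ that play symmetric roles through isomorphic sub-paths. Iterating this gives at most two non-isomorphic folded structures rooted at a 1B1S target. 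In Case B, symmetry among $j_1, j_2, j_3$ together with folding collapses the offline second-neighborhood of $\bfi$ into a single representative, producing one additional folded structure rooted at a 3S target. Combining the two cases yields exactly the three structures depicted in Figure~\ref{fig:three}.

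The main obstacle will be to ensure (i) that the reduction procedure terminates after finitely many folds and (ii) that each fold weakly preserves $\bfi$'s safe probability, so that a genuine WS structure must already be in fully-folded form. For termination, I would track the number of distinct offline nodes at each BFS depth from $\bfi$ and show that every nontrivial fold strictly decreases this count at some depth, so the process halts with a finite, bounded structure. For monotonicity, I would extend the argument of Lemma~\ref{lem:11/15/a}: when two offline neighbors $i_1, i_2$ are merged into a single $i_3$ sharing the same online neighborhood with $\bfi$, the joint safe probability $\E[\saf_{\bfi,t}\cdot \saf_{i_3,t}] = \sfe^{-2t}$ dominates the corresponding unfolded quantities $\E[\saf_{\bfi,t}\cdot \saf_{i_\ell,t}]$, from which $\tilde{\alpha}(t) \ge \alpha(t)$ follows by an ODE comparison identical in spirit to the one used in Lemma~\ref{lem:11/15/a}. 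Together with the case enumeration, this exhausts the WS structures and establishes the lemma.
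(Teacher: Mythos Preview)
Your high-level plan (case-split on whether the target $\bfi$ is 1B1S or 3S, then iterate the Folding Procedure plus the \re until a fixed point) matches the paper's proof. However, your enumeration of the terminal structures in each case is inverted, and this is a genuine gap rather than a cosmetic miscount.

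In Case~A (1B1S target), you claim the two subcases at $j_s$ (one big residual edge versus two small residual edges) survive folding as two non-isomorphic fixed points. They do not: both collapse to the single $4$-cycle of Figure~\ref{fig:three/a}. The subcase where $j_s$ carries a big edge to $i_2$ is exactly the instance of Figure~\ref{fig:11/15/23/a}, and Lemma~\ref{lem:11/15/a} already folds it down to Figure~\ref{fig:three/a}; the other subcase folds the same way. So the 1B1S target contributes \emph{one} structure, not two.

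In Case~B (3S target), you assert that symmetry among $j_1,j_2,j_3$ forces a unique terminal structure. This is where the argument breaks. The \re says each offline neighbor of $\bfi$ must itself sit in a WS structure, but it does \emph{not} force all of $\bfi$'s online neighbors to be isomorphic after folding: the Folding Procedure involves a choice of which Type-A edge to pair with which Type-B edge, and different pairings terminate in genuinely different fixed points. The paper's case analysis (on whether certain edge pairs at $\tj,\bj$ and at $\ti,\bi$ coincide) shows that a 3S target can fold either to the $K_{3,3}$ of Figure~\ref{fig:three/b} \emph{or} to the mixed structure of Figure~\ref{fig:three/c}, in which the 3S target $i_1$ has two 1B1S offline neighbors and its three online neighbors are not all isomorphic. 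Your symmetry collapse would miss Figure~\ref{fig:three/c} entirely; since that structure is needed for the 3S row of Table~\ref{table:sum/three}, the proof would be incomplete.

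A minor point: your monotonicity sketch hard-codes $\E[\saf_{\bfi,t}\cdot\saf_{i_3,t}]=\sfe^{-2t}$, which is specific to the $4$-cycle. For the general fold you need the weaker but sufficient fact that the merged neighbor's conditional safe probability dominates each unmerged one's, which follows from the same ODE comparison but without the explicit closed form.
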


In Appendix~\ref{app:fold}, we provide a formal definition of the \tbf{Folding Procedure} and a proof of Lemma~\ref{lem:ws_one}.

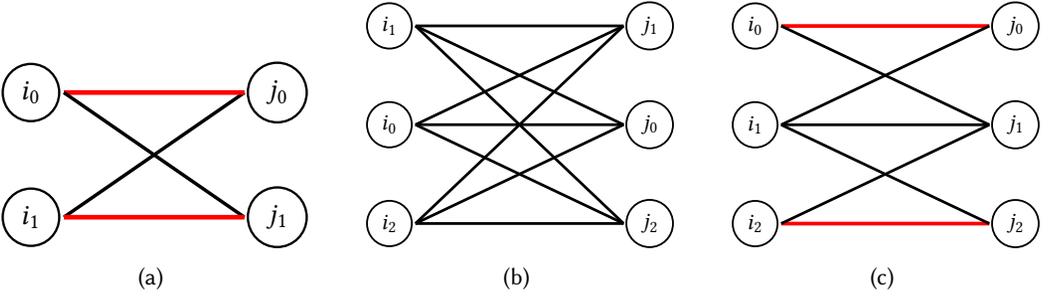
\begin{figure}[ht!]
\begin{subfigure}[b]{0.3\textwidth}
        \centering
        \resizebox{\linewidth}{!}{
\begin{tikzpicture}
  \draw (0,-1.5) node[minimum size=0.2mm,draw,circle,  thick] {$i_0$};
    \draw (3,-1.5) node[minimum size=0.2mm,draw,circle, thick] {$j_0$};
    \draw (0,-3) node[minimum size=0.2mm,draw,circle, thick] {{$i_{1}$}};
    \draw (3,-3) node[minimum size=0.2mm,draw,circle, thick] {$j_1$};
\draw[-,  very thick] (0.4,-3)--(2.6,-1.5);
\draw[-, red, ultra thick] (0.4,-1.5)--(2.6,-1.5);
\draw[-,  very thick] (0.4,-1.5)--(2.6,-3);
\draw[-,  red, ultra thick] (0.4,-3)--(2.6,-3);
\end{tikzpicture}}
\caption{}  \label{fig:three/a}
\end{subfigure}
\hfill 
\begin{subfigure}[b]{0.3\textwidth}
        \centering
        \resizebox{\linewidth}{!}{
 \begin{tikzpicture}
  \draw (0,1.5) node[minimum size=0.2mm,draw,circle, thick] {$i_1$};
 \draw (4,1.5) node[minimum size=0.2mm,draw,circle, thick] {$j_1$};
  \draw (0,0) node[minimum size=0.2mm,draw,circle, thick] {$i_0$};
  \draw (0,-1.5) node[minimum size=0.2mm,draw,circle, thick] {$i_2$};
  \draw (4,0) node[minimum size=0.2mm,draw,circle, thick] {$j_0$};
   \draw (4,-1.5) node[minimum size=0.2mm,draw,circle, thick] {{$j_2$}};

             \draw[-, very thick ] (0.4,0)--(3.6,0);   
\draw[-,  very thick] (0.4,0)--(3.6,-1.5);
\draw[-,  very thick] (0.4,1.5)--(3.6,0);
\draw[-,  very thick] (0.4,-1.5)--(3.6,-1.5);
\draw[-,  very thick] (0.4,1.5)--(3.6,1.5);
\draw[-,  very thick] (0.4,1.5)--(3.6,-1.5);
\draw[-,  very thick] (0.4,0)--(3.6,1.5);
\draw[-,  very thick] (0.4,-1.5)--(3.6,1.5);
\draw[-,  very thick] (0.4,-1.5)--(3.6,0);
\end{tikzpicture}}
\caption{}  \label{fig:three/b}
    \end{subfigure}  
\hfill
\begin{subfigure}[b]{0.3\textwidth}
        \centering
        \resizebox{\linewidth}{!}{
 \begin{tikzpicture}
 \draw (0,4.5) node[minimum size=0.2mm,draw,circle, thick] {$i_0$};
  \draw (0,3) node[minimum size=0.2mm,draw,circle, thick] {$i_1$};
    \draw (0,1.5) node[minimum size=0.2mm,draw,circle, thick] {$i_2$};
  \draw (4,4.5) node[minimum size=0.2mm,draw,circle, thick] {$j_0$};
   \draw (4,3) node[minimum size=0.2mm,draw,circle, thick] {{$j_1$}};
      \draw (4,1.5) node[minimum size=0.2mm,draw,circle, thick] {{$j_2$}};
             \draw[-, ultra thick, red ] (0.4,4.5)--(3.6,4.5);   
               \draw[-, ultra thick, red ] (0.4,1.5)--(3.6,1.5);   

\draw[-,  very thick] (0.4,4.5)--(3.6,3);
\draw[-,  very thick] (0.4,3)--(3.6,4.5);
\draw[-,  very thick] (0.4,3)--(3.6,3);
\draw[-,  very thick] (0.4,3)--(3.6,1.5);
\draw[-,  very thick] (0.4,1.5)--(3.6,3);
\end{tikzpicture}}
\caption{}  \label{fig:three/c}
\end{subfigure}
  \caption{Three possible WS structures for an offline node of mass one after the rounding, where big edges of mass $2/3$ and small edges of mass $1/3$ are marked in red and black, respectively.}
    \label{fig:three}
\end{figure}

\begin{table}[ht!]
\caption{Summary of the exact matching probabilities of offline nodes in the WS structures shown in Figures~\ref{fig:three/a}, ~\ref{fig:three/b}, and ~\ref{fig:three/c}, where all nodes have a unit mass after rounding. The type (1B1S) represents an offline node with one big and one small edge, while (3S) represents a node with three small edges. All fractional values are rounded to the fourth decimal place, if necessary, and the smallest value in each row is marked in blue.}\label{table:sum/three}
\begin{tabular}{c " ccc} \thickhline 
& Figure~\ref{fig:three/a} & Figure~\ref{fig:three/b} & Figure~\ref{fig:three/c} \\ \thickhline 
(1B1S) & \bluee{0.7293} & & 0.7314 \\
(3S) & & \bluee{0.7760} & 0.7925 \\ \thickhline 
\end{tabular}
\end{table}

\subsubsection{Computation of Matching Probabilities in Figures~\ref{fig:three/a} and~\ref{fig:three/b}}
The computation of the exact matching probabilities in Figures~\ref{fig:three/a} and~\ref{fig:three/b} is simplified by their highly symmetric structures. Consider the case in Figure~\ref{fig:three/a} first, and let $N$ be the total number of matches over $i_0$ and $i_1$. We observe that:
\begin{align*}
\E[N] &= \Pr[N \ge 1] + \Pr[N \ge 2] = 1 - \sfe^{-2} + 1 - \sfe^{-2}(1 + 2) = 2 - 4 \sfe^{-2}.
\end{align*}
Thus, we claim that $i_0$ and $i_1$ each have a matching probability equal to $(2 - 4 \sfe^{-2}) / 2 = 1 - 2 \sfe^{-2} \approx 0.7293$. Similarly, we find that each offline node in Figure~\ref{fig:three/b} has a matching probability equal to $1 - \frac{9}{2} \sfe^{-3} \approx 0.7760$.

\subsubsection{Computation of Matching Probabilities in Figure~\ref{fig:three/c}} \label{sec:comp}

Below is an Ordinary Differential Equations (ODEs)-based approach. In Appendix~\ref{app:three/c}, we present another Markov-chain-based method, which yields the same results.

Recall that the continuous version of our arrival setting states that each online node arrives following an independent Poisson process with rate one over the time range $[0,1]$. For any time $t \in [0,1]$, let $\alp(t)$, $\beta(t)$, and $\gam(t)$ denote the probability \anhai{that} at time $t$, $i_1$ is safe, $i_1$ is safe and exactly one of $i_0$ and $i_2$ is safe, and all three are safe, respectively. To derive the dynamics among $\alp$, $\beta$, and $\gam$, we first examine their discrete counterparts. Set $\alp_k = \alp(k/T)$ for each round $k \in [T]$, and similarly define $\beta_k$ and $\gam_k$. Observe that:
\begingroup
\allowdisplaybreaks
\begin{align*}
\alp_{k+1} &= \alp_k \cdot \left(1 - \frac{1}{T} \left( \frac{\beta_k}{\alp_k} \cdot \left(\frac{1}{3} + \frac{1}{2} + 1\right) + \frac{\gam_k}{\alp_k} + \left(1 - \frac{\beta_k}{\alp_k} - \frac{\gam_k}{\alp_k}\right) \cdot 3 \right) \right), && \alp_0 = 1; \\
\beta_{k+1} &= \beta_k \cdot \left(1 - \frac{3}{T}\right) + \gam_k \cdot \frac{2}{T}, && \beta_0 = 0; \\
\gam_{k+1} &= \gam_k \cdot \left(1 - \frac{3}{T}\right), && \gam_0 = 1,
\end{align*}
\endgroup

where $\beta_k / \alp_k$ represents the conditional probability that exactly one of $i_1$'s offline neighbors is safe at round $k$, given that $i_1$ is safe, and $\gam_k / \alp_k$ represents the conditional probability that both of $i_1$'s offline neighbors are safe at round $k$, given that $i_1$ is safe. Converting the above discrete version to the continuous case by taking $T$ to infinity, we obtain:
\begin{align*}
\alp' &= -3 \alp + \frac{7}{6} \beta + 2 \gam, && \alp(0) = 1; \\
\beta' &= -3 \beta + 2 \gam, && \beta(0) = 0; \\
\gam' &= -3 \gam, && \gam(0) = 1.
\end{align*}
We can solve the above ODE system and obtain:
\[
\gam(t) = \sfe^{-3t}, \quad \beta(t) = 2t \cdot \sfe^{-3t}, \quad \alp(t) = \sfe^{-3t} \cdot \left(1 + 2t + \frac{7}{6} t^2\right), \quad \forall t \in [0,1].
\]

Thus, $i_1$ gets matched with a probability equal to $1 - \alp(1) = 1 - \sfe^{-3} \cdot \frac{25}{6}$, which is consistent with what we obtained previously. A similar approach can be applied to compute the matching probabilities for $i_0$ and $i_2$ as well.

\section{Worst-Scenario (WS) Principle for the General Case}\label{sec:less}

For the general case when offline nodes do not necessarily have a unit mass after rounding, it becomes easier to pinpoint the WS structures compared with the case when all offline nodes have a mass of one. Recall that for an offline node $i$, another offline node $i'$ is considered an offline neighbor of $i$ if they share at least one common online neighbor $j$ with $(i,j) \in E$ and $(i',j) \in E$.

\begin{tcolorbox}
\xhdr{Worst-Scenario (WS) Principle}. Consider an offline node $\bfi$. We claim that $\bfi$'s WS structure (when it has the largest probability of staying safe in \bt) must be instantiated when every offline neighbor of $\bfi$ has a single online neighbor, which is the exact one shared with $\bfi$.
\end{tcolorbox}

The principle above can be viewed as an application of the \re to the general case. For an offline node $\bfi$, the validity of the principle can be seen as follows:
\begin{enumerate}
    \item Any of its offline neighbors must share a single online neighbor with it; otherwise, we can decompose it into a strictly worse structure, as illustrated in Figure~\ref{fig:dec}.
    \item Any of its offline neighbors $\ti$ must have a single online neighbor, say $j$, that is the exact one shared with $\bfi$ itself; otherwise, we can prune any extra edges connecting $\ti$ with some online neighbors $\tj \neq j$, which would strictly worsen the performance of $\ti$ (i.e., increase the probability of $\ti$ staying safe), and consequently, worsen that of $\bfi$ as well.
\end{enumerate}

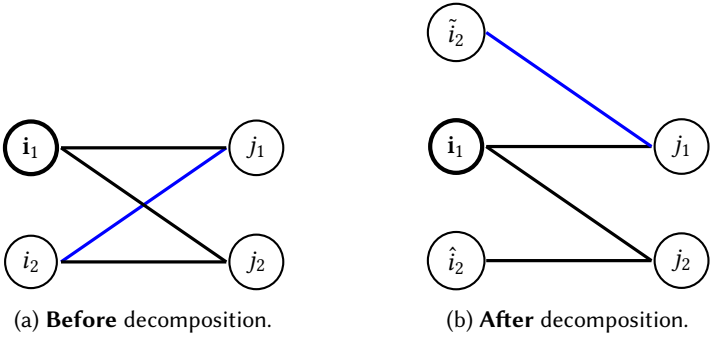
\begin{figure}[ht!]
\begin{subfigure}[b]{0.4\textwidth}
\centering
\begin{tikzpicture}
\draw (0,-1.5) node[minimum size=0.2mm,draw,circle, ultra thick] {$\bfi_1$};
\draw (3,-1.5) node[minimum size=0.2mm,draw,circle, thick] {$j_1$};
\draw (0,-3) node[minimum size=0.2mm,draw,circle, thick] {{$i_{2}$}};
\draw (3,-3) node[minimum size=0.2mm,draw,circle, thick] {$j_2$};
\draw[-,  blue, very thick] (0.4,-3)--(2.6,-1.5);
\draw[-,  very thick] (0.4,-1.5)--(2.6,-1.5);
\draw[-,  very thick] (0.4,-1.5)--(2.6,-3);
\draw[-,  very thick] (0.4,-3)--(2.6,-3);
\end{tikzpicture}
\caption{\tbf{Before} decomposition.}
\label{fig:dec/a}
\end{subfigure}
\begin{subfigure}[b]{0.4\textwidth}
\centering
\begin{tikzpicture}
\draw (0,0) node[minimum size=0.2mm,draw,circle, thick] {$\ti_2$};
\draw (0,-1.5) node[minimum size=0.2mm,draw,circle, ultra thick] {$\bfi_1$};
\draw (3,-1.5) node[minimum size=0.2mm,draw,circle, thick] {$j_1$};
\draw (0,-3) node[minimum size=0.2mm,draw,circle, thick] {{$\hi_{2}$}};
\draw (3,-3) node[minimum size=0.2mm,draw,circle, thick] {$j_2$};
\draw[-, very thick] (0.4,-1.5)--(2.6,-1.5);
\draw[-,  very thick] (0.4,-1.5)--(2.6,-3);
\draw[-,  very thick] (0.4,-3)--(2.6,-3);
\draw[-,  blue, very thick] (0.4,0)--(2.6,-1.5);
\end{tikzpicture}
\caption{\tbf{After} decomposition.}  \label{fig:dec/b}
\end{subfigure}

\caption{An example showing that for any target offline node, its WS structure must be instantiated when any of its offline neighbors shares a single online neighbor with it; otherwise, we can decompose it into another strictly worse structure for it. In the example above, the target offline node is $\bfi_1$, and it shares two online neighbors, $j_1$ and $j_2$, with its offline neighbor $i_2$. Note that all edges have a mass of $1/3$. The decomposition consists of (1) pruning the edge $(i_2, j_1)$ (marked in blue) and (2) adding another offline neighbor $\ti_2$ of $\bfi_1$ and edge $(\ti_2, j_1)$ to compensate for the role played by the pruned edge $(i_2,j_1)$. We can verify that at any time $t \in [0,1]$: (1) Given $i_1$ is safe at $t$, the probability of $\ti_2$ staying safe at $t$ and that of $\hi_2$ are both larger than that of $i_2$; (2) The real-time boosting effect for matching $\bfi_1$ from the unavailability of $i_2$ is equivalent to that from the unavailability of both $\ti_2$ and $\hi_2$. Thus, we conclude that $\bfi_1$ has a strictly worse structure in terms of a larger probability of staying safe after decomposition.}
\label{fig:dec}
\end{figure}

\section{When Mass-One Nodes Have Offline Neighbors of Mass Less Than One}\label{sec:mod}

In the case when all offline nodes have a unit mass after rounding, no modifications are required to any edges. This is due to the high symmetry present in the WS structures of an offline node, as depicted in Figures~\ref{fig:three/a} and~\ref{fig:three/b}. However, this symmetry breaks when a mass-one offline node neighbors another offline node of either mass 1/3 or 2/3. The instance in Figure~\ref{fig:29/a} highlights the necessity of modifying the sampling distributions of online nodes. For detailed discussions, see Appendix~\ref{app:mot/one}.


\subsection{Modifications to Sampling Distributions of Online Neighbors of Mass-One Nodes} \label{sec:tworep}

Consider a given offline node $\bfi$ of mass one. Following the \tbf{Worst-Scenario (WS) Principle}, its worst-case scenarios arise when each of its offline neighbors shares exactly one online neighbor with $\bfi$. As illustrated in Figure~\ref{fig:6/29}, there are three possible structures for an online neighbor $j$ of $\bfi$. For simplicity, we assume throughout this paper that every online node has a mass of one after rounding.

There are four possible combinations for a mass-one node $\bfi$:
\begin{itemize}
    \item Figure~\ref{fig:6/29/a} + Figure~\ref{fig:6/29/b}: Two online neighbors, one with the structure shown in Figure~\ref{fig:6/29/a} and the other with the structure shown in Figure~\ref{fig:6/29/b}.
    \item Figure~\ref{fig:6/29/a} + Figure~\ref{fig:6/29/c}.
    \item $3 \times$ Figure~\ref{fig:6/29/b}: Three online neighbors, each with the structure shown in Figure~\ref{fig:6/29/b}.
    \item $3 \times$ Figure~\ref{fig:6/29/c}: Three online neighbors, each with the structure shown in Figure~\ref{fig:6/29/c}.
\end{itemize}

Note that the example in Figure~\ref{fig:29/a} corresponds exactly to the case of Figure~\ref{fig:6/29/a} + Figure~\ref{fig:6/29/c}.

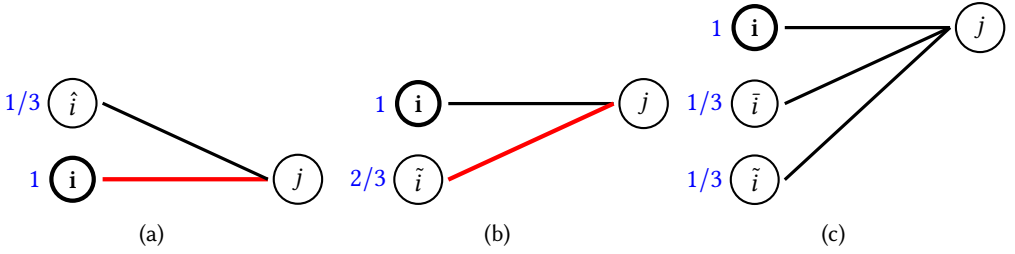
\begin{figure}[ht!]
\begin{subfigure}[b]{0.3\textwidth}
\begin{tikzpicture}
 \draw (0,0) node[minimum size=0.2mm,draw,circle, ultra thick] {$\mathbf{i}$};
  \draw (-0.3,0) node[left] {\bluee{$1$}};
    \draw (-0.3,1) node[left] {\bluee{$1/3$}};
 \draw (3,0) node[minimum size=0.2mm,draw,circle, thick] {$j$};
 \draw (0,1) node[minimum size=0.2mm,draw,circle, thick] {$\hi$};

\draw[-,  red, ultra thick] (0.4,0)--(2.6,0);
\draw[-,  very thick] (0.4, 1)--(2.6,0);
\end{tikzpicture}
\caption{}
    \label{fig:6/29/a}
\end{subfigure}
\hspace{0.1in}
\begin{subfigure}[b]{0.3\textwidth}
\begin{tikzpicture}
 \draw (0,0) node[minimum size=0.2mm,draw,circle, ultra thick] {$\mathbf{i}$};
  \draw (-0.3,0) node[left] {\bluee{$1$}};
    \draw (-0.3,-1) node[left] {\bluee{$2/3$}};
 \draw (3,0) node[minimum size=0.2mm,draw,circle, thick] {$j$};
 \draw (0,-1) node[minimum size=0.2mm,draw,circle, thick] {$\ti$};
\draw[-,  very thick] (0.4,0)--(2.6,0);
%
\draw[-,  red, ultra thick] (0.4, -1)--(2.6,0);
\end{tikzpicture}
\caption{}
    \label{fig:6/29/b}
\end{subfigure}
\hspace{0.05in}
\begin{subfigure}[b]{0.3\textwidth}   
\begin{tikzpicture}
 \draw (0,0) node[minimum size=0.2mm,draw,circle, ultra thick] {$\mathbf{i}$};
  \draw (-0.3,0) node[left] {\bluee{$1$}};
    \draw (-0.3,-1) node[left] {\bluee{$1/3$}};
    \draw (-0.3,-2) node[left] {\bluee{$1/3$}};
 \draw (3,0) node[minimum size=0.2mm,draw,circle, thick] {$j$};
 \draw (0,-1) node[minimum size=0.2mm,draw,circle, thick] {$\bi$};
  \draw (0,-2) node[minimum size=0.2mm,draw,circle, thick] {$\ti$};

\draw[-,  very thick] (0.4,0)--(2.6,0);
\draw[-,  very thick] (0.4, -1)--(2.6,0);
\draw[-,  very thick] (0.4, -2)--(2.6,0);

\end{tikzpicture}
\caption{}
    \label{fig:6/29/c}
\end{subfigure}
  \caption{Three possible structures of an \emph{online} neighbor for a mass-one offline node $\bfi$ under the \tbf{Worst-Scenario Principle}.}
      \label{fig:6/29}
\end{figure}

We consider the offline node $\bfi$ (mass one) in Figure~\ref{fig:6/29}. Its performance is expected to be worse than when all its offline neighbors have mass one. In other words, $\bfi$ is more likely to stay safe when its offline neighbors have a mass \anhai{of} less than one. This is because, at any given time, safe neighbors (like $\hi$, $\bi$, and $\ti$) with a lower mass have a higher chance of staying safe compared to those with mass one. This difference hinders the real-time boosting effect of online neighbor $j$ on matching $\bfi$. Conversely, as explained in Section~\ref{sec:ws1/3}, $\hi$, $\bi$, and $\ti$ perform better than their worst-case scenarios (when their offline neighbors have mass less than one). Motivated by this observation, we aim to identify appropriate modifications to the input vector on the online neighbor $j$ to achieve the following two goals:

\xhdr{Goal 1}: \emph{Any offline node of mass one in the form of 1B1S (one big edge, one small edge) or 3S (three small edges) should achieve an MPM no less than the target thresholds $\kapc$ and $\kapd$ in~\eqref{eqn:kap}, respectively}.

\xhdr{Goal 2}: \emph{Any offline nodes of mass 1/3 and 2/3 should achieve an MPM no less than the target thresholds $\kapa$ and $\kapb$ in~\eqref{eqn:kap}, respectively}.

\smallskip
It is important to note that for a given instance, there might be multiple choices of modifications that fulfill the two goals above. We devote the next section to the case when a mass-one offline node $\bfi$ has combined structures of Figures~\ref{fig:6/29/a} and~\ref{fig:6/29/c}. In other words, $\bfi$ has two online neighbors, one with the structure of Figure~\ref{fig:6/29/a} and the other with the structure of Figure~\ref{fig:6/29/c}. \emph{We defer the analysis of the remaining three combinations to Appendix~\ref{sec:7/7/a}}.

\subsection{Analysis of a Mass-One Node with Combined Structures:~\ref{fig:6/29/a}+\ref{fig:6/29/c}}\label{sec:a+c}

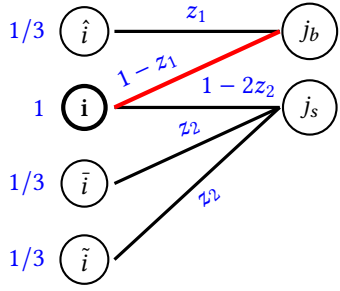
\begin{figure}[ht!]
\begin{tikzpicture}
 \draw (0,0) node[minimum size=0.2mm,draw,circle, ultra thick] {$\mathbf{i}$};
   \draw (-0.4,1) node[left] {\bluee{$1/3$}};
  \draw (-0.4,0) node[left] {\bluee{$1$}};
    \draw (-0.4,-1) node[left] {\bluee{$1/3$}};
    \draw (-0.4,-2) node[left] {\bluee{$1/3$}};
 \draw (3,0) node[minimum size=0.2mm,draw,circle, thick] {$j_s$};
  \draw (3,1) node[minimum size=0.2mm,draw,circle, thick] {$j_b$};
 \draw (0,-1) node[minimum size=0.2mm,draw,circle, thick] {$\bi$};
  \draw (0,-2) node[minimum size=0.2mm,draw,circle, thick] {$\ti$};
    \draw (0,1) node[minimum size=0.2mm,draw,circle, thick] {${\hi}$};

\draw[-,  very thick] (0.4,1)--(2.6,1);
\draw[-,  very thick] (0.4,0)--(2.6,0) node [blue, above, near end, sloped] {$1-2z_2$};
\draw[-,  very thick] (0.4, -1)--(2.6,0) node [blue, above, midway, sloped] {$z_2$};
\draw[-,  very thick] (0.4, -2)--(2.6,0) node [blue, below, midway, sloped] {$z_2$};

  \draw (1.5,1) node[above] {\bluee{$z_1$}};
\draw[-,  red, ultra thick] (0.4, 0)--(2.6,1) node [blue, above, near start, sloped] {$1-z_1$};   
\end{tikzpicture}
\caption{The offline node $\bfi$ (mass 1) has two online neighbors: one with the structure in Figure~\ref{fig:6/29/a} and the other in Figure~\ref{fig:6/29/c}. The values on the edges represent the updates proposed to achieve \tbf{Goals 1} and \tbf{2} in Section~\ref{sec:tworep}.}
    \label{fig:6/30/1}
    \end{figure}

Note that the original case corresponds to when $z_1 = z_2 = 1/3$. The lemma below suggests that the simple setting of $z_1 = z_2 = 0$ suffices to achieve both \tbf{Goal 1} and \tbf{Goal 2}.

\begin{lemma}
The configuration of $z_1 = z_2 = 0$ in Figure~\ref{fig:6/30/1} guarantees achieving both \tbf{Goal 1} and \tbf{Goal 2}, as stated in Section~\ref{sec:tworep}.
\end{lemma}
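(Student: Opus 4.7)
The plan is to directly compute the Matching Probability per Mass (MPM) of each of the four offline nodes $\bfi, \hi, \bi, \ti$ under the aggressive setting $z_1 = z_2 = 0$ via the Ordinary Differential Equation (ODE) framework developed in Section~\ref{sec:comp}, and verify that all four values exceed the targets $\kapc$ (for $\bfi$) and $\kapa$ (for $\hi, \bi, \ti$) specified in~\eqref{eqn:kap}. Under this setting, each of $j_b$ and $j_s$ places all its weight on $\bfi$, so the real-time boosting of \bt has a particularly clean form: while $\bfi$ is safe both online nodes match $\bfi$ on arrival, and once $\bfi$ becomes unsafe, $j_b$ falls back to $\hi$ and $j_s$ falls back uniformly over the still-safe subset of $\{\bi, \ti\}$.

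First I would handle $\bfi$. Letting $\alpha(t) = \Pr[\bfi \text{ safe at } t]$, the total matching rate on $\bfi$ is $2\alpha(t)$, so $\alpha'(t) = -2\alpha(t)$ with $\alpha(0)=1$ gives $\alpha(t) = \sfe^{-2t}$, and hence the MPM of $\bfi$ is $1 - \sfe^{-2} \approx 0.8647 > \kapc$. Next, for $\hi$ I would introduce the auxiliary probability $g(t) := \Pr[\bfi\text{ unsafe and }\hi\text{ safe at }t]$, obtain the ODE $g'(t) = 2\alpha(t) - g(t)$ with $g(0)=0$ (the $2\alpha$ term tracks transitions out of the all-safe state, under which $\hi$ stays safe; the $-g$ term tracks $\hi$'s matching by $j_b$'s fallback), and solve to get $g(t) = 2\sfe^{-t} - 2\sfe^{-2t}$. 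Thus $\Pr[\hi\text{ safe at }1] = \alpha(1) + g(1) = 2\sfe^{-1} - \sfe^{-2}$, yielding MPM $= 3(1 + \sfe^{-2} - 2\sfe^{-1}) > \kapa$.

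For $\bi$ (and $\ti$ by symmetry), I would set up a joint-state ODE system over $(\bfi, \bi, \ti)$: with $A(t), B(t), C(t)$ denoting, respectively, the probabilities of the states (all three safe), ($\bfi$ unsafe while $\bi$ and $\ti$ are both safe), and ($\bfi$ and $\ti$ unsafe while $\bi$ is safe), the fallback rule gives $A' = -2A$, $B' = 2A - B$, and $C' = B/2 - C$ with $A(0)=1$ and $B(0) = C(0) = 0$. Solving sequentially produces $\Pr[\bi\text{ safe at }1] = A(1) + B(1) + C(1) = 2\sfe^{-1}$, whence the MPM of $\bi$ equals $3(1 - 2\sfe^{-1}) \approx 0.7927 > \kapa$, matching the value asserted in the footnote after Figure~\ref{fig:29/b}.

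The main obstacle is the analysis of $\bi$ and $\ti$: these mass-$1/3$ nodes can be matched only through $j_s$'s fallback, which activates only after $\bfi$ has been taken---possibly by $j_b$ rather than $j_s$. This cross-online-node coupling is precisely the multi-online-node certificate overlooked by the VW-OM analysis of~\cite{bib:Jaillet,brubach2020online}, which would give only MPM $\approx 0.5168$. Capturing it rigorously forces us to introduce the joint-state ODE system above; the payoff is that $z_1 = z_2 = 0$---the most aggressive setting possible---suffices, which is precisely what drives the overall improvement stated in Theorem~\ref{thm:main-1}.
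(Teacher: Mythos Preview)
Your proposal is correct and follows essentially the same approach as the paper: both treat $\bfi$ directly via $\alpha'=-2\alpha$, treat $\hi$ via a two-variable ODE coupling its safe probability to that of $\bfi$, and handle $\bi,\ti$ via a joint-state continuous-time system over $(\bfi,\bi,\ti)$, arriving at the identical numerical values $1-\sfe^{-2}$, $3(1-2\sfe^{-1}+\sfe^{-2})$, and $3(1-2\sfe^{-1})$. The only cosmetic difference is that the paper parametrizes the last system by the \emph{count} of safe nodes among $\{\bi,\ti\}$ (states $(1,2),(0,2),(0,1),(0,0)$), whereas you track $\bi$ individually via $A,B,C$; by symmetry these are equivalent.
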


\begin{proof}
Focus on the setting of $z_1 = z_2 = 0$.  
Let $\alp(t)$ be the probability that the node $\bfi$ is safe at time $t$. Thus, $\alp(t) = \sfe^{-2t}$ and $1 - \alp(1) = 1 - \sfe^{-2} > \kap_B^*$.

Now, focus on analyzing $\hi$. Let $\alp(t)$ be the probability that $\hi$ is safe at time $t$ and $\beta(t)$ be the probability that both $\hi$ and $\bfi$ are safe at time $t$. Thus,
\begin{align*}
& \alp_{k+1} = \alp_k \cdot \bp{1 - \frac{1}{T} \bb{\frac{\beta}{\alp} \cdot 0 + \bp{1 - \frac{\beta}{\alp}} \cdot 1}}, \\
& \alp' + \alp - \beta = 0, \quad \alp(0) = 1.
\end{align*}
We see that $\alp(t) = 2\sfe^{-t} - \sfe^{-2t}$. Thus, we can verify that $(1 - \alp(1)) / (1/3) > 1$.  
Particularly, $\hi$ is matched in the end (at $t = 1$) with probability equal to
\[
1 - \alp(1) = 1 - 2\sfe^{-1} + \sfe^{-2} \approx 0.3996.
\]

Now, we analyze $\bi$ and $\ti$. We introduce a continuous-time Markov chain (MC) to compute the exact matching probabilities of $\bfi$, $\bi$, and $\ti$ as follows. There are in total four states, namely,
\[
s_1 = (1, 2), \quad s_2 = (0, 2), \quad s_3 = (0, 1), \quad s_4 = (0, 0),
\]
where in each state, the first value represents the status of $\bfi$ (1 means safe), while the second is the sum of safe nodes among $\bi$ and $\ti$. Note that states $(1, 1)$ and $(1, 0)$ both exist with probability zero due to the current specific sampling distribution on $j_s$: $\bi$ or $\ti$ can be matched by $j_s$ only after $\bfi$ becomes unavailable. For each $\ell = 1, 2, 3, 4$, let $q_\ell(t)$ denote the probability that the system is in state $\ell$ at time $t \in [0, 1]$ with initial conditions $q_1(0) = 1$ and $q_2(0) = q_3(0) = q_4(0) = 0$.  

\begin{table}[ht!]
\caption{The transition-rate matrix of the continuous-time Markov chain for Figure~\ref{fig:6/30/1} with $z_1 = z_2 = 0$.}
\label{table:12/24/a}
\begin{tabular}{c " cccc}
         & $(1, 2)$ & $(0, 2)$ & $(0, 1)$ & $(0, 0)$ \\ \thickhline 
$(1, 2)$ & $-2$     & $2$      & $0$      & $0$      \\
$(0, 2)$ & $0$      & $-1$     & $1$      & $0$      \\
$(0, 1)$ & $0$      & $0$      & $-1$     & $1$      \\
$(0, 0)$ & $0$      & $0$      & $0$      & $0$  \\   \thickhline 
\end{tabular}
\end{table}

Table~\ref{table:12/24/a} shows the transition-rate matrix of the continuous-time Markov chain. The Kolmogorov forward equations for the process are stated below:
\begin{align*}
&q'_1 = -2q_1, \quad q'_2 = -q_2 + 2q_1, \quad q'_3 = -q_3 + q_2, \quad q'_4 = q_3, \\
&q_1(0) = 1, \quad q_2(0) = q_3(0) = q_4(0) = 0. 
\end{align*}
We can solve that
\begin{align*}
q_1(t) &= \sfe^{-2t},  &&q_2(t) = 2\sfe^{-2t}(\sfe^t - 1), \\
q_3(t) &= 2\sfe^{-2t}(1 - \sfe^t + \sfe^t t), &&q_4(t) = \sfe^{-2t}(-1 + \sfe^{2t} - 2\sfe^t t).
\end{align*}
Thus, by symmetry, we see that each of $\bi$ and $\ti$ gets matched in the end with probability equal to
\[
q_4(1) + q_3(1) / 2 = 1 - 2 / \sfe \approx 0.2642,
\]
which implies that each achieves an MPM equal to $3(1 - 2 / \sfe) \approx 0.7927 > \kap_s^*$.  
\end{proof}

\section{Worst-Scenario Structures for Offline Nodes of Mass 1/3 and 2/3} \label{sec:ws1/3}

In this section, we aim to prove Claim (2) in Proposition~\ref{pro:main-1}, which states that any offline node of mass 1/3 and 2/3 achieves an MPM of at least $\kapa$ and $\kapb$ in \bt, respectively, where $\kapa$ and $\kapb$ are defined in~\eqref{eqn:kap}. According to the \tbf{Worst-Scenario (WS) Principle}, we can pinpoint the WS structures of an online neighbor for any offline node, as shown in Figure~\ref{fig:wsg}.
\begin{figure}[ht!]
\begin{subfigure}[b]{0.25\textwidth}
        \centering
        \resizebox{\linewidth}{!}{
\begin{tikzpicture}
  \draw (0,-1.5) node[minimum size=0.2mm,draw,circle, ultra thick] {$\bfi$};
    \draw (3,-1.5) node[minimum size=0.2mm,draw,circle, thick] {$j$};
    \draw (0,0) node[minimum size=0.2mm,draw,circle, thick] {{$\ti$}};
\draw[-,  red, ultra thick] (0.35, 0)--(2.6,-1.5);
\draw[-,   very thick] (0.35,-1.5)--(2.6,-1.5);
\end{tikzpicture}}
\caption{}  \label{fig:wsg/3}
\end{subfigure}
\begin{subfigure}[b]{0.25\textwidth}
        \centering
        \resizebox{\linewidth}{!}{
\begin{tikzpicture}
  \draw (0,-1.5) node[minimum size=0.2mm,draw,circle, ultra thick] {$\bfi$};
    \draw (0,1.5) node[minimum size=0.2mm,draw,circle, thick] {$\hi$};
    \draw (3,-1.5) node[minimum size=0.2mm,draw,circle, thick] {$j$};
    \draw (0,0) node[minimum size=0.2mm,draw,circle, thick] {{$\ti$}};
\draw[-,  very thick] (0.35, 0)--(2.6,-1.5);
\draw[-,   very thick] (0.35,-1.5)--(2.6,-1.5);
\draw[-,   very thick] (0.35,1.5)--(2.6,-1.5);
\end{tikzpicture}}
\caption{}  \label{fig:wsg/4}
\end{subfigure}
\begin{subfigure}[b]{0.25\textwidth}
        \centering
        \resizebox{\linewidth}{!}{
\begin{tikzpicture}
  \draw (0,-1.5) node[minimum size=0.3mm,draw,circle, ultra thick] {$\bfi$};
    \draw (3,-1.5) node[minimum size=0.2mm,draw,circle, thick] {$j$};
    \draw (0,0) node[minimum size=0.2mm,draw,circle, thick] {{$\ti$}};
\draw[-,  very thick] (0.35, 0)--(2.6,-1.5);
\draw[-,  red, ultra thick] (0.35,-1.5)--(2.6,-1.5);
\end{tikzpicture}}
\caption{}
 \label{fig:wsg/1}
\end{subfigure}
 \caption{All possible WS structures of an online neighbor with respect to the target offline node $\bfi$, according to the \tbf{Worst-Scenario (WS) Principle}. In the example above, small edges (with mass 1/3) and big edges (with mass 2/3) are marked in black and red, respectively. We list the different possible WS structures of an online neighbor $j$ of $\bfi$: Figure~\ref{fig:wsg/1} shows the case when $(\bfi, j)$ is a big edge, and Figures~\ref{fig:wsg/3} and~\ref{fig:wsg/4} show the cases when $(\bfi, j)$ is a small edge.}
    \label{fig:wsg}
\end{figure}
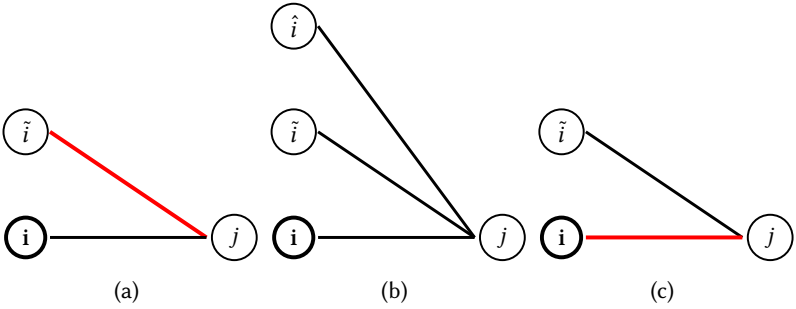

\smallskip

For an offline node $\bfi$ of mass 1/3, it has a single online neighbor. Thus, the WS structures should be instantiated as one in Figure~\ref{fig:wsg/3} or~\ref{fig:wsg/4}. For an offline node $\bfi$ of mass 2/3, the WS structures can be instantiated as either having a single online neighbor as shown in Figure~\ref{fig:wsg/1} or two online neighbors, each having a structure in Figures~\ref{fig:wsg/3} or~\ref{fig:wsg/4}. By the~\re, when an offline node of mass 2/3 has two online neighbors, they should share the same WS structure.

We split the discussion into the following cases: (\tbf{Case 1}) When a target node of mass 1/3 has a structure as in Figure~\ref{fig:wsg/3}; (\tbf{Case 2}) When a target node of mass 1/3 has a structure as in Figure~\ref{fig:wsg/4}; (\tbf{Case 3}) When a target node of mass 2/3 has two online neighbors, each with a structure as in Figure~\ref{fig:wsg/4}.\footnote{We omit the case when a target node $\bfi$ of mass 2/3 has a single online neighbor as in Figure~\ref{fig:wsg/1} because its performance is identical to that of node $\ti$ in Figure~\ref{fig:wsg/3}. Similarly, we skip the scenario when the target node $\bfi$ of mass 2/3 has two online neighbors, each with a structure as in Figure~\ref{fig:wsg/3}: Lemma~\ref{lem:2/3} in Appendix~\ref{app:ws/12/3} shows that node $\bfi$ performs strictly better than the case where each online neighbor has the structure as in Figure~\ref{fig:wsg/4}.} We summarize the numerical results in Table~\ref{table:sum123} and defer the analysis and justifications to Appendix~\ref{app:ws/12/3}.

\begin{table}[ht!]
\caption{Summary of \textbf{Matching Probability per Mass (MPM)} for an offline node of mass 1/3 or 2/3 in different WS structures outlined in Section~\ref{sec:ws1/3}. Note that \tbf{Case 3} considers one possible WS structure for an offline node of mass 2/3 when it has two online neighbors, each with a structure as in Figure~\ref{fig:wsg/4}. All fractional values are rounded to the fourth decimal \anhai{place} if needed. The values marked in blue are those obtained after applying modifications to the sampling distributions on the online node $j$; see details in Appendix~\ref{app:ws/12/3}.}
\label{table:sum123}
\begin{tabular}{c " c ccccccc} \thickhline 
   &  \tbf{Case 1} (Figure~\ref{fig:wsg/3})  & \tbf{Case 2} (Figure~\ref{fig:wsg/4}) & \tbf{Case 3} (Figure~\ref{fig:wsg/4}+Figure~\ref{fig:wsg/4})   \\ \thickhline 
(1/3) &   $1.1606 \to \bluee{0.8963}$ & {0.9766}  & $\ge 0.9766$\\
(2/3) &    {$0.7642 \to \bluee{0.8963}$} &  & 0.8177  \\ \thickhline 
\end{tabular}
\end{table}

\xhdr{Remarks on results in Table~\ref{table:sum123}}. As indicated by the results in Table~\ref{table:sum123}, the worst-scenario structure for an offline node of mass 2/3 corresponds to Figure~\ref{fig:wsg/3} (before any modifications), achieving an MPM of $0.7642$, which is less than the target $\kapb$. \emph{In Appendix~\ref{app:2/3}, we demonstrate that both $\bfi$ and $\ti$ can achieve an MPM of $2 - 3/\sfe \approx 0.8963$, surpassing $\kapb$, after adjusting the sampling distribution on $j$ as follows}: If both $\bfi$ and $\ti$ are safe upon $j$'s arrival, match $j$ to $\bfi$ and $\ti$ with respective probabilities $z$ and $1 - z$, where $z = 1 - \sfe/3$; if only one is safe, match $j$ with that safe node with probability one.

\section{Conclusion and Future Directions}
In this paper, we examined vertex-weighted online matching under KIID with integral arrival rates. We introduced a meta algorithm (\bt) that employs real-time boosting. To showcase $\bt$'s efficacy, we demonstrated that $\bt(\X^*)$ achieves a competitiveness of at least 0.7341, surpassing the current state-of-the-art results of 0.7299~\cite{brubach2020online} and 0.725~\cite{bib:Jaillet}, where $\X^*$ is a random vector obtained using $\mathsf{DR}[3]$ from~\cite{brubach2020online}. Concurrently, we proposed an auxiliary algorithm (\vir) that highlights the subtle connections between the algorithms in~\cite{brubach2020online},~\cite{bib:Jaillet}, and our proposed methodology.

\smallskip

Our work opens several future directions. The first is to apply the ODEs system-based competitive analysis approach presented in this paper to more general settings, such as edge-weighted matching and/or general arrival rates. The second is to sharpen the upper bound and develop an improved hardness result specifically for the setting considered here. So far, the best hardness result for online matching under KIID with integral arrival rates is $1 - \sfe^{-2} \approx 0.8646$ due to~\cite{bib:Manshadi}, which is based on an \emph{unweighted} instance. Can we achieve a tighter bound by considering a vertex-weighted case?

\newpage
\bibliographystyle{alpha}
\bibliography{EC_21}
 
\clearpage
\appendix

\section{Proof of Lemma~\ref{lem:well-bt}} \label{app:well-bt} 
\begin{figure}[ht!]
\begin{tikzpicture}
 \draw (0,0) node[minimum size=0.2mm,draw,circle, ultra thick] {$\mathbf{i}$};

  \draw (-0.4,0) node[left] {\bluee{$1-\sfe^{-K}$}};
    \draw (-0.4,-1) node[left] {\bluee{$\ep_k$}};
    \draw (-0.4,-2) node[left] {\bluee{$\ep_k$}};
        \draw (-0.4,-4) node[left] {\bluee{$\ep_k$}};
  \draw (3,0) node[minimum size=0.2mm,draw,circle, thick] {$j_k$};
 \draw (0,-1) node[minimum size=0.2mm,draw,circle, thick] {$i_1$};
  \draw (0,-2) node[minimum size=0.2mm,draw,circle, thick] {$i_2$};
    \draw (0,-4) node[minimum size=0.2mm,draw,circle, thick] {$i_N$};
\draw[-,  very thick] (0.4,0)--(2.6,0);
\draw[-,  very thick] (0.4, -1)--(2.6,0);
\draw[-, very thick] (0.4, -2)--(2.6,0);
\draw[-,  thick, dashed] (0.4, 0)--(2.6,1);
\draw[-,  thick, dashed] (0.4, 0)--(2.6,2);
\draw[-,  thick, dashed] (0.4, -3)--(2.6,0);
\draw[-,  very thick] (0.4, -3.8)--(2.6,0);

          \draw (1.6,-0.5) node[above] {\bluee{$\ep_k$}};
              \draw (1.6,-0.05) node[above] {\bluee{$\del_k$}};
\end{tikzpicture}
\caption{An instance where $\bt(\x^*)$ achieves a competitive ratio of no more than $1-1/\sfe$, where $\x^*$ is an optimal solution to the natural LP in~\cite{huang2021online}. In the instance above, $i$ has $K \gg 1$ online neighbors, namely, $j_1, j_2, \ldots, j_K$, such that each has the same structure as $j_k$, which has $N$ online neighbors other than $i$. For each $k \in [K] := \{1,2,\ldots,K\}$, set $\del_k = \sfe^{-(k-1)} - \sfe^{-k}$, $\ep_k = (1 - \del_k)/N$, and let the vertex weight on $i$ be one, dominating the sum of weights over all the other nodes. We verify that $\x^* = (x^*_{ij})$ is an optimal solution to the natural LP in~\cite{huang2021online}, where for each $k \in [K]$, $x^*_{i, j_k} = \del_k$ and $x^*_{i', j_k} = \ep_k$ for all $i' \neq i, i' \in I_{j_k}$.}\label{fig:1/1}
\end{figure}
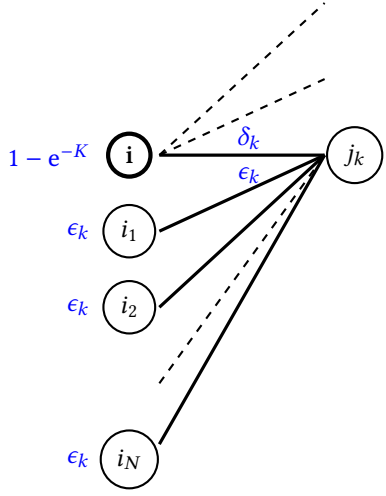

\begin{proof}
Consider the instance shown in Figure~\ref{fig:1/1}. We verify that $\x^* = (x^*_{ij})$ is an optimal solution to the natural LP in~\cite{huang2021online}, where for each $k \in [K]$, $x^*_{i, j_k} = \del_k$ and $x^*_{i', j_k} = \ep_k$ for all $i' \in I_{j_k} \setminus \{i\}$.\footnote{Note that polynomial-time algorithms for LP, such as the interior point method, do not always yield an extreme-point optimal solution. Therefore, in the competitive analysis for $\bt(\x^*)$, we cannot assume conclusively that $\x^*$ is an extreme point of any benchmark LP, and as a result, we cannot exploit related properties. This is why, despite $\x^*$ not being an extreme point of the natural LP polytope in~\cite{huang2021online}, we can confidently assert that $\x^*$ is an optimal solution to be utilized in feeding $\bt$.} 

\smallskip

Now, we aim to show that for each $k \in [K]$ with $\del_k = \del \in [0,1]$, $i$ will survive from being matched by $j_k$ with a probability equal to $\sfe^{-\del}$ in $\bt(\x^*)$ when $N$ approaches infinity. This suggests that $i$ gets matched in $\bt(\x^*)$ with a probability equal to $1 - \sfe^{-\sum_{k \in [K]} \del_k}$ (as $N \to \infty$). Meanwhile, any offline optimal can match $i$ with a probability equal to $1 - \sfe^{-K}$. Thus, the resulting competitiveness is 
\[
\frac{1 - \sfe^{-\sum_{k \in [K]} \del_k}}{1 - \sfe^{-K}} = \frac{1 - \sfe^{-(1 - \sfe^{-K})}}{1 - \sfe^{-K}} \Rightarrow 1 - \frac{1}{\sfe}, \quad \text{as } K \to \infty.
\]

\smallskip

Consider a given $k$ with $j_k = j$, and a fixed value $\del_k = \del$ with $\ep_k = \ep = (1 - \del) / N$ (we drop the subscript $k$ when the context is clear). Observe that $i$ survives from being matched by $j$, denoted by $A = 1$, if and only if (1) $j$ arrives no more than $N$ times, and (2) for every arrival $\ell \in [N]$, $j$ happens to match some $i' \neq i$ following the boosted sampling distribution in $\bt(\x^*)$. Specifically, we have
\begingroup
\allowdisplaybreaks
\begin{align*}
\E[A] &= \sfe^{-1} + \sum_{\ell=1}^N \frac{\sfe^{-1}}{\ell!} (1 - \del) \cdot \frac{1 - \ep - \del}{1 - \ep} \cdots \frac{1 - (\ell - 1) \ep - \del}{1 - (\ell - 1) \ep} \\
&= \sfe^{-1} + \sum_{\ell=1}^N \frac{\sfe^{-1}}{\ell!} \prod_{\ell'=1}^\ell \frac{1 - (\ell' - 1) \ep - \del}{1 - (\ell' - 1) \ep}.
\end{align*}
\endgroup
We verify that $\E[A] \to \sfe^{-\del}$ as $N \to \infty$. Thus, we complete the proof.
\end{proof}


\section{Proof of Proposition~\ref{pro:main-0}}\label{app:main-0}
\begin{proof}
Consider a given optimal solution $\y=(y_{e})$ to the benchmark \LP-\eqref{lp}. Recall that the specialized dependent rounding in~\cite{brubach2020online}, denoted as $\mathsf{DR}[3]$, processes $\y$ as follows: First, apply the classical dependent rounding to $3 \y=(3 y_{e})$ and obtain a rounded integer vector $\Y=(Y_e) \in \{0,1,2\}^{|E|}$; second, set $\X=(X_e)$ with $X_e=Y_e/3$ for every $e \in E$. Consider a given offline node $i$ with $y_i:=\sum_{e \in E_i} y_{e}$, where $E_i$ denotes the set of edges incident to $i$ in the input graph $G$. Observe that $i$ has a big edge in $G(\X)$ if and only if there exists at least one edge $e \in E_i$ such that $y_e > 1/3$, since otherwise the fact that $3 y_e \le 1$ implies $Y_e \in \{0,1\}$ and $X_e \in \{0,1/3\}$. Meanwhile, $E_i$ has no more than two edges of mass larger than $1/3$ under $\y$ since $y_i=\sum_{e \in E_i} y_e \le 1$. Consider the following two cases.

\smallskip

\tbf{Case 1}. $E_i$ has only one edge $e$ with $y_e > 1/3$. By Constraint~\eqref{lp-a}, $y_e \le 1 - 1/\sfe$. Thus, $i$ has a big edge in $G(\X)$ if and only if $Y_e=2$ and $X_e=Y_e/3=2/3$, which happens with probability equal to $3y_e - 1 \le 2 - 3/\sfe$.

\smallskip

\tbf{Case 2}. $E_i$ has two edges, say $e$ and $e'$, such that $y_e > 1/3$ and $y_{e'} > 1/3$. By Constraint~\eqref{lp-b}, $y_e + y_{e'} \le 1 - \sfe^{-2}$. Thus, $i$ has a big edge in $G(\X)$ if either $Y_e=2$ or $Y_{e'}=2$, which occurs with probability equal to $3(y_e + y_{e'}) - 2 \le 1 - 3 \sfe^{-2} < 2 - 3/\sfe$.
\end{proof}

\section{Proof of Theorem~\ref{thm:main-1}}\label{app:main-thm}
\begin{proof}
Let $\y = (y_{ij})$ denote an optimal solution to the benchmark \LP-\eqref{lp}. Suppose $\X = (X_{ij}) \in \{0,1/3,2/3\}$ represents the random rounded vector obtained after applying the specialized dependent rounding $\mathsf{DR}[3]$ to $\y$. Consider a given offline node $i \in I$, and let $y_i := \sum_{j \in J_i} y_{ij} \in [0,1]$ and $X_i := \sum_{j \in J_i} X_{ij}$. By the marginal-distribution property, $\E[X_i] = y_i$. Consider the following two cases.

\tbf{Case 1}. $0 \le y_i \le 2/3$. We see that $3y_i \in [0,2]$ and $X_i \in \{0,1/3,2/3\}$. For each $\ell \in \{0,1/3,2/3\}$, let $p_\ell := \Pr[X_i = \ell]$, and let $q_\ell$ be the matching probability of $i$ in $\bt$ given $X_i = \ell$. Note that the overall matching probability of $i$ in $\bt$ is equal to
\[
q^{(i)} := \sum_{\ell \in \{0,1/3,2/3\}} p_\ell \cdot q_\ell \ge \sum_{\ell \in \{0,1/3,2/3\}} p_\ell \cdot (\ell \cdot \kap^*) = \kap^* \sum_{\ell \in \{0,1/3,2/3\}} p_\ell \cdot \ell = \kap^* \cdot y_i,
\]
where the inequality above follows from Proposition~\ref{pro:main-1} and the facts that $\kapa = \kap^*$ and $\kapb > \kap^*$.

\tbf{Case 2}. $2/3 \le y_i \le 1$. Then $3y_i \in [2,3]$ and $X_i \in \{2/3,1\}$. Let $p_a$, $p_b$, and $p_c$ denote the probability that $X_i = 2/3$, $X_i = 1$ and $i$ has one big and one small edge (1B1S), and $X_i = 1$ and $i$ has three small edges (3S), respectively. The overall matching probability of $i$ in $\bt$ is at least
\[
q^{(i)} := p_a \cdot (2/3) \cdot \kapb + p_b \cdot \kapc + p_c \cdot \kapd
\]
by Proposition~\ref{pro:main-1}. Consider the minimization program below:
\begin{align}
\min & ~\frac{q^{(i)}}{y_i} := \frac{p_a \cdot (2/3) \cdot \kapb + p_b \cdot \kapc + p_c \cdot \kapd}{p_a \cdot (2/3) + p_b + p_c} \label{min:q} \\
& p_a + p_b + p_c = 1, \nonumber \\
& p_b \le 2 - 3/\sfe, \nonumber \\
& 0 \le p_a, p_b, p_c \le 1. \nonumber
\end{align}
We can verify that Program~\eqref{min:q} achieves a minimum value of $\kap^*$, where the bottleneck case occurs at either $y_i = 1$ or $y_i = 4/3 - \sfe^{-1} \approx 0.9654$, and in both cases, $i$ is instantiated as 1B1S with probability equal to $2 - 3/\sfe$. Summarizing the above two cases, we claim that for any offline node $i$ with any value $y_i \in [0,1]$, the ratio of its matching probability in $\bt$ to its mass $y_i$ (under $\y$) is at least $\kap^*$.

As a result, the expected total weight achieved by $\bt$ is
\[
\sum_{i \in I} w_i \cdot q^{(i)} \ge \kap^* \sum_{i \in I} w_i \cdot y_i = \kap^* \cdot \mathsf{Val}(\LP\mbox{-}\eqref{lp}) \ge \kap^* \OPT,
\]
where $\mathsf{Val}(\LP\mbox{-}\eqref{lp})$ and $\OPT$ denote the optimal value of \LP-\eqref{lp} and the performance of an offline optimal policy, respectively, and where the last inequality follows from the fact that \LP-\eqref{lp} is a valid upper bound on the performance of any offline optimal policy.
\end{proof}

\section{Proof of Lemma~\ref{lem:vir}} \label{app:vir} 
\begin{proof}
Consider a given  $\X=\x \in \{0,1/3,2/3\}^{|E|}$ and {assume}  $x_j:=\sum_{i \in I_{j}} x_{ij}=1$ WLOG. Suppose an online node of type $j \in J$ arrives at some time $t \in [0,1]$. Focus on the case when $|I_{j,\x}|=3$ with $I_{j, \x}=\{i_1, i_2, i_3\}$. Let $I_{j,\x,t}=I_{j,t} \cap I_{j,\x}$ be the set of non-zero neighbors of $j$ that are safe at $t$. Observe that for $\bt(\x)$, Step~\eqref{alg:meta/s2} is equivalent to sampling a neighbor $i \in I_{j,\x,t}$ with probability $x_{ij}/\sum_{i' \in I_{j,\x,t}} x_{i',j}$ since we can ignore all zero neighbors $i$ with $x_{ij}=0$ and remove them from $I_{jt}$.

We prove Claim (1) first by splitting {into the} following scenarios. Let $\widetilde{\mathsf{AUG}}(\x)$ {refer} to the updated version of $\vir$ fed with $\x$.

\tbf{Case 1}. All the three non-zero neighbors are safe (unmatched) at $t$, \ie $I_{j,\x,t}=\{i_1, i_2, i_3\}$. We can verify that (1) each $i \in \{i_1, i_2, i_3\}$ gets sampled and matched in $\bt(\x)$ with probability equal to $x_{ij}$ (note that $\sum_{i \in I_{j,\x, t}} x_{ij}=\sum_{i \in I_{j,\x}} x_{ij}=1$); and (2) each $i \in \{i_1, i_2, i_3\}$ gets matched in $\widetilde{\mathsf{AUG}}(\x)$ iff $i$ tops the list $\cL_j$, which occurs with probability equal to $x_{ij}$. Thus, each $i \in \{i_1, i_2, i_3\}$ gets matched with the same probability in $\bt(\x)$ and $\widetilde{\mathsf{AUG}}(\x)$.

\smallskip

\tbf{Case 2}. There are only two non-zero neighbors that are safe at $t$, say,  $I_{j,\x,t}=\{i_1, i_2\}$. In this case, each $i \in \{i_1, i_2\}$ gets matched in $\bt(\x)$
 with probability equal to $x_{ij}/(x_{i_1,j}+x_{i_2,j})$. Meanwhile, for the node $i_1$, it gets matched in $\widetilde{\mathsf{AUG}}(\x)$ iff either $i_1$ tops the list $\cL_j$ with $\cL_j=(i_1, *,*)$, or $i_1$ is the second while $i_3$ is the first on $\cL_j$ with $\cL_j=(i_3,i_1,i_2)$. The former occurs with probability equal to $x_{i_1,j}$ while the latter equal to $x_{i_3,j} \cdot x_{i_1,j}/(x_{i_1,j}+x_{i_2,j})$. Thus, the total probability that $i_1$ gets matched in $\vir(\x)$ is equal to 
 \[
 x_{i_1,j}+x_{i_3,j} \frac{x_{i_1,j}}{x_{i_1,j}+x_{i_2,j}}=x_{i_1,j} \cdot \bp{1+\frac{x_{i_3,j}}{x_{i_1,j}+x_{i_2,j}}}=\frac{x_{i_1,j}}{x_{i_1,j}+x_{i_2,j}},
 \]
which is identical to the probability that  $i_1$ gets matched in $\bt(\x)$. We can argue similarly for $i_2$.

\smallskip

\tbf{Case 3}. There are only one non-zero {neighbor that is} safe at $t$, say,  $I_{j,\x,t}=\{i_1\}$. In this case, $i_1$ gets matched in $\bt(\x)$
 with probability one. Meanwhile, $i_1$ gets sampled and matched in $\widetilde{\mathsf{AUG}}(\x)$ with probability one as well since the other two non-zero neighbors are both matched then.

Now, we prove Claim (2). Since $\widetilde{\mathsf{AUG}}$ is identical to $\bt$, it suffices to show that the performance of $\widetilde{\mathsf{AUG}}$ is lower bounded by that of $\vir$. Consider a given realization path, which is captured by a specific sequence of outcomes of random lists generated for all arriving nodes. Focus on a given offline node $i$, and let $J_{i,\x} \subseteq J_i$ be the set of non-zero neighbors of $i$ under $\x$.  Observe that (1) $i$ is matched in $\widetilde{\mathsf{AUG}}$ iff there exists some $\cL_j$ with $j \in J_{i,\x}$ such that every node $i'$ prior to $i$ on $\cL_j$ get matched at the time when $j$ arrives; and (2) $i$ is matched in $\vir$ iff there exists some $\cL_j$ with $j \in J_{i,\x}$ such that every node $i'$ prior to $i$ on $\cL_j$ get matched \emph{by $j$ itself} at the time when $j$ arrives. By comparing these two conditions, we see that the latter is stricter than the former, and thus, any realization path leading to $i$'s match in $\vir$ can secure $i$' match in $\widetilde{\mathsf{AUG}}$, but not vice-versa. Thus, we establish Claim (2).
\end{proof}

     \section{Analysis of the Example in Figure~\ref{fig:12/27/a}}\label{app:mot/one}

   \begin{lemma}\label{lem:7/3/a}
The  Matching Probability per Mass (MPM) achieved by node $\bfi$ of Figure~\ref{fig:29/a} in \bt (or $\widetilde{\mathsf{AUG}}$) is equal to $1-22/(9 \sfe^2) \approx 0.6692<\kapc$,  the target MPM for an offline node of mass one in the form of 1B1S, as specified in~\eqref{eqn:kap}.
\end{lemma}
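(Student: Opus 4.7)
The plan is to compute the matching probability of $\bfi$ directly under $\bt$ by building an Ordinary Differential Equations (ODE) system that tracks the joint safety status of all four offline nodes in Figure~\ref{fig:29/a}, mirroring the holistic analysis demonstrated in Section~\ref{sec:comp}. Exploiting the symmetry between $\bi$ and $\ti$ under the algorithm, the joint state can be compressed to a triple $(S_{\bfi}, S_{\hi}, N_s)$, where $S_{\bfi}, S_{\hi} \in \{0,1\}$ indicate whether $\bfi$ and $\hi$ are safe and $N_s \in \{0,1,2\}$ counts how many of $\{\bi, \ti\}$ remain safe. Because $\Pr[\bfi \text{ matched by } t{=}1] = 1 - \Pr[\bfi \text{ safe at } t{=}1]$, it suffices to maintain only the six states in which $\bfi$ is still safe.

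For each of these six states I would enumerate the transitions driven by the independent unit-rate Poisson arrivals of $j_b$ and $j_s$. Real-time boosting dictates that $j_b$ samples from $\{\bfi, \hi\}$ with proportions $(2/3, 1/3)$ when both are safe and collapses to the unique safe neighbor otherwise, while $j_s$ samples uniformly over its currently safe neighbors in $\{\bfi, \bi, \ti\}$. The resulting rate table produces a linear Kolmogorov forward system that is triangular: the mass in $p_{1,1,2}$ feeds $p_{1,1,1}$ and $p_{1,0,2}$, which in turn feed $p_{1,1,0}$, $p_{1,0,1}$, and finally $p_{1,0,0}$. The common integrating factor $\sfe^{2t}$ makes each equation solvable in sequence, yielding closed forms $p_{S_{\bfi}, S_{\hi}, N_s}(t) = \sfe^{-2t} \cdot P(t)$ for low-degree polynomials $P$.

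Summing across the six $\bfi$-safe states, I expect to obtain
\[
\alpha(t) \;:=\; \Pr[\bfi \text{ safe at } t] \;=\; \sfe^{-2t} \left( 1 + t + \tfrac{7}{18} t^2 + \tfrac{1}{18} t^3 \right),
\]
which at $t=1$ gives $\alpha(1) = (22/9)\,\sfe^{-2}$, so the matching probability of $\bfi$ equals $1 - (22/9)\,\sfe^{-2}$. Since $\bfi$ has unit mass, this quantity is also its MPM; the final comparison $1 - (22/9)\,\sfe^{-2} \approx 0.6692 < 1 - 2\,\sfe^{-2} = \kapc$ is immediate numerically, establishing that $\bfi$ fails the target MPM under the unmodified sampling and motivating the modifications developed later in Section~\ref{sec:mod}.

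The main obstacle is not the ODE solving but the bookkeeping of transition rates: each of the six $\bfi$-safe states has a subtly different renormalized boosting distribution on both $j_b$ and $j_s$, and one must carefully distinguish transitions that exit the $\bfi$-safe subspace (an arrival matching $\bfi$) from those that merely update $(S_{\hi}, N_s)$. Once the rate table is correctly assembled, the rest reduces to routine polynomial integration.
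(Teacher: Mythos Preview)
Your proposal is correct and follows essentially the same approach as the paper: both reduce to the identical six-state Markov description $(S_{\hi},S_{\bfi},N_s)$ with $N_s$ counting safe nodes among $\{\bi,\ti\}$, and both extract $\Pr[\bfi\text{ safe at }t=1]=22/(9\sfe^2)$ from that chain. The only cosmetic difference is that the paper writes down the discrete one-step transition matrix and takes $\lim_{T\to\infty}\mathbf{H}^T$, whereas you work directly with the continuous-time Kolmogorov forward equations and the integrating factor $\sfe^{2t}$; the paper itself treats these two formulations as interchangeable elsewhere (cf.\ Section~\ref{sec:comp} versus Appendix~\ref{app:three/c}).
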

  
\begin{proof}
We introduce a  (discrete) Markov Chain (MC) and use it to conduct a holistic competitive analysis for the algorithm~\bt. There are in total seven states, namely,
\[
s_1=(1,1,2), s_2=(0,1,2), s_3=(1,1,1), s_4=(0,1,1), s_5=(1,1,0), s_6=(0,1,0), s_7=(*,0,*),
\]
where in each state, the first and the second values represent the status of $\hi$ and $\bfi$ (1 means safe), respectively, while the third one {is} the number of safe nodes between $\bi$ and $\ti$.  The MC starts at state $s_1$ with probability one, and the one-step transition matrix states below:
\[
\mathbf{H}=\begin{pmatrix}
1 - \frac{2}{T} & \frac{1/3}{T} &\frac{2/3}{T} & 0& 0& 0 & \frac{1}{T}\\
0 & 1-\frac{2}{T} & 0& \frac{2/3}{T} & 0 & 0 &\frac{4/3}{T}\\
0 & 0 & 1 - \frac{2}{T} & \frac{1/3}{T} & \frac{1/2}{T}& 0 & \frac{7/6}{T} \\
0 & 0 & 0& 1-\frac{2}{T}& 0&\frac{1/2}{T} &\frac{3/2}{T}\\
0 & 0 & 0& 0& 1-\frac{2}{T}&\frac{1/3}{T} & \frac{5/3}{T}\\
0 & 0 & 0& 0& 0& 1-\frac{2}{T} &\frac{2}{T}\\
0 & 0 & 0& 0& 0& 0&1
\end{pmatrix}.
\]  
By taking the limit of $\lim_{T \to \infty}\mathbf{H}^T$, we can get the final stationary distribution $\bpi=(\pi_\ell)$, where $\pi_\ell$ denotes the stationary probability of ending at state $s_\ell$ when starting at $s_1$. Specifically, we have that the target node $\bfi$ gets matched with {a} probability of $\pi_7=1-22/(9 \sfe^2) \approx 0.6692$, which suggests $\bfi$ achieves an MPM strictly less than $\kap^*_B =1-2 \sfe^{-2}$, the target MPM for an offline node of mass one in the form of 1B1S.
\end{proof}

     \begin{lemma}\label{lem:7/3/b}
The  Matching Probability per Mass (MPM) achieved by nodes $\bi$ and $\ti$ of Figure~\ref{fig:29/b} in $\vir$ (the exact algorithm analyzed in~\cite{bib:Jaillet, brubach2020online}) are both  equal to 
\[
\frac{3 \left(8 z_2^2+\frac{-14 z_2^2+19 z_2+9}{\sfe}-12 z_2-4\right)}{4 (z_2-1)}. 
\]
\end{lemma}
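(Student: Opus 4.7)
The plan is to compute $\bi$'s matching probability in $\vir$ directly, invoke the $\bi \leftrightarrow \ti$ symmetry of both the graph in Figure~\ref{fig:29/b} and the modified sampling vector on $j_s$ to get the same value for $\ti$, and then divide by the mass $1/3$ that $\bi$ carries under $\X^*$ to obtain its MPM; the factor $3$ in the stated expression comes from this division. Since $\bi$'s only online neighbor is $j_s$, its matching is entirely controlled by $\vir$'s behavior at $j_s$. First, I would decompose the rate-$1$ Poisson arrival process of $j_s$ into six independent thinned streams, one per permutation of $\{\bfi,\bi,\ti\}$. Under the modified input $(1-2z_2,\, z_2,\, z_2)$ on $j_s$, Algorithm~\ref{alg:vir}'s sampling rule gives rate $(1-2z_2)/2$ to each of $(\bfi,\bi,\ti)$ and $(\bfi,\ti,\bi)$; rates $z_2(1-2z_2)/(1-z_2)$ and $z_2^2/(1-z_2)$ to the two $\bi$-topped lists; and symmetric rates to the two $\ti$-topped lists. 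A quick check shows the six rates sum to $1$.

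Following the single-online-neighbor certificate methodology of~\cite{bib:Jaillet, brubach2020online}, I would enumerate mutually independent complement events whose union is the event that $\vir$ matches $\bi$ at $j_s$. The primary certificate is that some $\bi$-topped list arrives, contributing probability $1-\sfe^{-z_2}$. The secondary certificates describe the scenarios in which $\bi$ gets matched from position $2$ or $3$ of an $\bfi$- or $\ti$-topped list whose earlier entries have already been flagged by $j_s$ on prior arrivals; each such certificate is a count condition on specific thinned streams. Because the streams are mutually independent by Poisson thinning, the probability that $\bi$ is not matched factors as a product of Poisson tail complements in the thinned rates. Substituting the six rates above and simplifying, the resulting expression should collapse into the rational form $\eta(z_2)$ after multiplying by $3$.

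The main obstacle is enumerating the higher-order certificates without double counting: whether $\bfi$ or $\ti$ is flagged \emph{in time} for a given position-$2$ or position-$3$ event depends on the \emph{relative ordering} of several independent arrival streams, not merely on their counts, so a naive union overcounts overlapping scenarios. A cleaner equivalent route, which I would use as a consistency check, is to set up the finite-state Markov chain on subsets of $\{\bfi,\bi,\ti\}$ already flagged by $j_s$ (a monotone chain, since flagged nodes stay flagged): its transient distribution after $k$ arrivals follows from a short Kolmogorov forward recurrence, and aggregating against $\Pr[N=k]=\sfe^{-1}/k!$ with $N$ the total number of $j_s$ arrivals yields $\Pr[\bi \text{ matched}]$. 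A sanity check at $z_2=0$ should recover $\Pr[\bi \text{ matched}]=1-9/(4\sfe)$ and hence $\eta(0)=3(1-9/(4\sfe))\approx 0.5168$, matching the elementary computation in the footnote on page~8 that motivates the aggressive choice $z_2=0$ in Section~\ref{sec:a+c}.
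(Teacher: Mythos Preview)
Your proposal is correct, and the ``cleaner equivalent route'' you describe as a consistency check is precisely the paper's proof: a continuous-time Markov chain on the six states $(p,q)$ with $p\in\{0,1\}$ the $j_s$-flagged status of $\bfi$ and $q\in\{0,1,2\}$ the number of unflagged nodes among $\{\bi,\ti\}$, solved via Kolmogorov forward equations; the MPM is then read off as $3\bigl[(q_2(1)+q_5(1))/2 + q_4(1)+q_6(1)\bigr]$. The only cosmetic difference is that the paper works directly with the continuous-time rate matrix rather than summing over $\Pr[N=k]$, and it collapses $\bi\leftrightarrow\ti$ symmetry into the state count $q$ from the outset rather than tracking the full flagged subset.

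Your primary certificate route is a genuinely different decomposition, and you are right to flag its difficulty: the factoring claim ``$\Pr[\bi\text{ unmatched}]$ is a product of Poisson tail complements'' does not hold as stated, because whether $\bi$ is reached from position $2$ or $3$ on a given list depends on the \emph{ordering} of arrivals across streams, not just their counts, so the relevant events are not complements of independent events. The Jaillet--Lu certificate machinery works cleanly when the target node has a single list-position to track per stream; here the interleaving of $\bfi$- and $\ti$-flagging times makes a direct product formula fail, which is exactly why the Markov-chain computation is the efficient route.
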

We introduce the following continuous Markov chain, which consists of six states
\[
s_1=(1,2), s_2=(1,1), s_3=(0,2), s_4=(1,0), s_5=(0,1), s_6=(0,0),
\]
where in each state, the first number is equal to $1$ indicates $\bfi$ is safe equal to $0$ otherwise, and the second represents the number of safe offline neighbors among $\bi$ and $\ti$. The transition-rate matrix is shown in Table~\ref{table:7/3/1}.

 \begin{table}[ht!]
\caption{The transition-rate matrix of the continuous-time Markov chain for Lemma~\ref{lem:7/3/b}.}
\label{table:7/3/1}
\begin{tabular}{c " c ccccc} \thickhline 
         & $(1,2)$ &$(1,1)$ &$(0,2)$ &$(1,0)$ &$(0,1)$ &$(0,0)$   \\ \thickhline 
(1,2)  & -1 &$2z$ &$1-2z$        & 0 &0 &0  \\ 
(1,1)  & 0 &-1 &0        & $z/(1-z)$ &$(1-2z)/(1-z)$ &0  \\ 
(0,2)  & 0 &0 &-1        & 0 &1 &0  \\ 
(1,0)  & 0 &0 &0        & -1 &0 &1  \\ 
(0,1)  & 0 &0 &0        & 0 &-1 &1  \\ 
(0,0)  & 0 &0 &0        & 0 &0 &0  
\end{tabular}
\end{table}
\begin{proof}
Let $q_\ell(t)$ denote the probability that the system falls at state $s_\ell$ at time $t \in [0,1]$ for $1\le \ell \le 6$.  The Kolmogorov forward equations for the process are \anhai{stated} below: 
\begingroup
\allowdisplaybreaks
\begin{align*}
q_1' &=- q_1,\\
q_2' &=2z q_1 -  q_2,\\
q_3' &=(1-2z) q_1 -  q_3,\\
q_4' &=\frac{z}{1-z} q_2 -  q_4,\\
q_5' &=\frac{1-2z}{1-z} q_2+q_3 -  q_5,\\
q_6' &=q_4+q_5,
\end{align*}
\endgroup
where the initial conditions are stated as $q_1(0)=1$, and $q_\ell(0)=0, \forall 2 \le \ell \le 6$. Thus, for offline nodes $\bi$ and $\ti$, they each achieve an MPM equal to
\[
\bp{\sbp{q_2(1)+q_5(1)}/2+q_4(1)+q_6(1)}/(1/3)=\frac{3 \left(8 z_2^2+\frac{-14 z_2^2+19 z_2+9}{\sfe}-12 z_2-4\right)}{4 (z_2-1)}.
\]
\end{proof}
We can verify that when $z_2 = 0$, the expression above simplifies to $3 \sbp{1-9/(4\sfe)} \approx 0.5168 < \kapa$. This indicates that the setting $z_2 = 0$ fails to guarantee either $\bi$ or $\ti$ achieves an MPM greater than or equal to the target MPM of $\kapa$, following the approach from~\cite{bib:Jaillet, brubach2020online}.

\section{Proof of the \re} \label{app:re}

\begin{proof}
Consider a neighboring offline node $i' \neq i$, and let the two share one online neighbor $j$. Focus on the impact of $i'$'s performance on the contribution of $j$ to $i$. Without loss of generality (WLOG), assume that $j$ has only two offline neighbors, $i$ and $i'$. Consider a given time $t \in [0,1]$, and assume that $i$ is safe at that time (i.e., $\saf_{i,t}=1$). The matching rate from $j$ for $i'$ remains fixed throughout $[0,t]$, denoted by $a \in \{1/3, 2/3\}$. Let $\alp(t) = \E[\saf_{i,t}]$ and $\psi_{i',t} = \E[\saf_{i',t} | \saf_{i,t}=1]$. We focus on the contribution of the matching rate from the online neighbor $j$ and ignore contributions from all other online neighbors.

\begin{align*}
\alp_{k+1} &= \alp_k \cdot \bp{1 - \frac{1}{T} \bb{1 - \psi_{i',t} + a \psi_{i',t}}} = 
\alp(t) \cdot \bp{1 - \frac{1}{T} \bb{1 - (1-a) \psi_{i',t}}}.
\end{align*}

This implies that
\begin{align*}
\alp'(t) &= -\alp(t) + (1-a) \cdot \alp(t) \cdot \psi_{i'}(t).
\end{align*}

Note that $1-a > 0$ and $\alp(t) > 0$. Thus, maximizing $\alp(t)$ for any $t \in [0,1]$ is equivalent to maximizing $\psi_{i',t} = \E[\saf_{i',t} | \saf_{i,t}=1]$. This means that $i'$ should have the least matching rate from all online neighbors other than $j$, implying that $i'$ should have the largest possible probability of staying safe at time $t$.
\end{proof}

\section{Definition of the \tbf{Folding Procedure}}\label{app:fold}
\begin{definition}
Consider a given randomized graph $G(\X)$ induced by $\X$, where $\X \in \{0,1/3,2/3\}^{|E|}$ denotes the random rounded vector output by $\mathsf{DR}[3]$, and assume every offline node has a unit mass. Given an offline node $\bfi$ with $X_\bfi:=\sum_{e \in E_\bfi} X_e=1$, we define:

\textbf{Type-A edge}: An edge $e=(i,j)$, where $i$ is an offline neighbor of $\bfi$ such that the two share the online neighbor $j$.
 
\textbf{Type-B edge}: An edge $e=(\bi,\bj)$, where $\bi$ is an offline neighbor of $\bfi$, but $\bj$ is not a neighbor of $\bfi$.

\smallskip

Let $e_1=(i,j)$ and $e_2=(\bi,\bj)$ be two distinct edges that are of Type-A and Type-B, respectively, with respect to $\bfi$, satisfying $X_{e_1}=X_{e_2}>0$. A \tbf{Folding Procedure} (\fp) on $e_1$ and $e_2$ is defined as follows:
\begin{enumerate}
    \item Remove edges $e_1$ and $e_2$.
    \item Add a new edge $e=(\bi, j)$ and set $X_e \gets X_{e_1}$ if $e$ does not exist (i.e., $X_e=0$), or update $X_e \gets X_e+X_{e_1}$ if $e$ exists with $X_e>0$ (before \fp). \ensuremath{\hfill\blacksquare}
\end{enumerate}
\end{definition}

\begin{figure}[ht!]
\begin{subfigure}[b]{0.4\textwidth}
\begin{tikzpicture}
 \draw (0,0) node[minimum size=0.2mm,draw,circle, thick] {$i$};
  \draw (0,-1.5) node[minimum size=0.2mm,draw,circle, ultra thick] {$\bfi$};
    \draw (3,-1.5) node[minimum size=0.2mm,draw,circle, thick] {$j$};
    \draw (0,-3) node[minimum size=0.2mm,draw,circle, thick] {$\bi$};
     \draw (3,-3) node[minimum size=0.2mm,draw,circle, thick] {$\tj$};
    \draw (3,-4.5) node[minimum size=0.2mm,draw,circle, thick] {$\bj$};
\draw[-, very thick] (0.4,-1.5)--(2.6,-1.5);
\draw[-, very thick] (0.4,-1.5)--(2.6,-3);
\draw[-, very thick] (0.4,-3)--(2.6,-4.5) node [blue, below, midway, sloped] {$e_2$ (Type B)};
\draw[-,  very thick] (0.4,-3)--(2.6,-3);
\draw[-,  very thick] (0.4,0)--(2.6,-1.5) node [blue, above, midway, sloped] {$e_1$ (Type A)};
\end{tikzpicture}
\caption{\tbf{Before} folding procedure.}
 \label{fig:09-08-a}
\end{subfigure}
\begin{subfigure}[b]{0.4\textwidth}
\centering
\begin{tikzpicture}
 \draw (0,0) node[minimum size=0.2mm,draw,circle, thick] {$i$};
  \draw (0,-1.5) node[minimum size=0.2mm,draw,circle, ultra thick] {$\bfi$};
    \draw (3,-1.5) node[minimum size=0.2mm,draw,circle, thick] {$j$};
    \draw (0,-3) node[minimum size=0.2mm,draw,circle, thick] {$\bi$};
     \draw (3,-3) node[minimum size=0.2mm,draw,circle, thick] {$\tj$};
    \draw (3,-4.5) node[minimum size=0.2mm,draw,circle, thick] {$\bj$};
\draw[-, very thick] (0.4,-1.5)--(2.6,-1.5);
\draw[-, very thick] (0.4,-1.5)--(2.6,-3);
\draw[loosely dashed, very thick] (0.4,-3)--(2.6,-4.5) node [blue, below, midway, sloped] {$e_2$ (Type B)};
\draw[densely dashdotted,  ultra thick] (0.4,-3)--(2.6,-1.5);
\draw[-,  very thick] (0.4,-3)--(2.6,-3);

\draw[loosely dashed,  very thick] (0.4,0)--(2.6,-1.5) node [blue, above, midway, sloped] {$e_1$ (Type A)};
\end{tikzpicture}
\caption{\tbf{After} folding procedure.}  \label{fig:09-08-b}
\end{subfigure}
\caption{A generic illustration of the \tbf{Folding Procedure} (\fp) applied to $e_1=(i,j)$ of Type-A and $e_2=(\bi, \bj)$ of Type-B with respect to the target node $\bfi$ such that $i \neq \bi$ and $X_{e_1}=X_{e_2}$. Black edges here can be either big (of value $2/3$) or small (of value $1/3$), and irrelevant edges may be omitted for some nodes. The \tbf{Folding Procedure} consists of two steps: (1) Remove $e_1$ and $e_2$, both of which are shown in a loosely dashed style in Figure~\ref{fig:09-08-b}; and (2) Add a new edge $e=(\bi,j)$ and set $X_e=X_{e_1}$ if $X_e=0$, or update $X_e \gets X_e+X_{e_1}$ if $X_e>0$, where $e$ is shown in a densely dash-dotted style in Figure~\ref{fig:09-08-b}.}
\label{fig:09-08-c}
\end{figure}
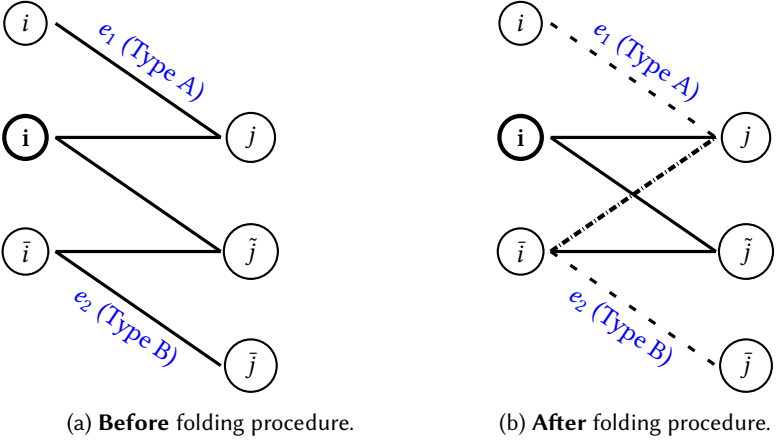

\xhdr{Remarks on the Folding Procedure (\fp)}: 
\begin{enumerate}
    \item \tbf{Pre-existing Edge}: When $e = (\bi, j)$ already exists in Figure~\ref{fig:09-08-a} with $X_e > 0$, we know for sure that $X_e = X_{e_1} = X_{e_2} = 1/3$ before \fp. This is because node $j$ has three distinct edges in $\X$ (since $\bfi$, $i$, and $\bi$ are all distinct). After \fp, $X_e$ becomes $2/3$.

    \item \tbf{Properties of \fp}: The Folding Procedure (\fp) never changes the structure of the target node $\bfi$. While it may affect the structures of $\bfi$'s offline neighbors, $\bfi$ remains the same type (e.g., 3S or 1B1S) after \fp. Meanwhile, the probability of the target node $\bfi$ staying safe will never decrease after \fp (alternatively, the probability of $\bfi$ getting matched will never increase after \fp).

    \item \tbf{Purpose of the Folding Procedure}: The Folding Procedure is solely for the WS competitive analysis of the target node $\bfi$ when all offline nodes have a unit mass after rounding. Its goal is to pinpoint the WS structure of the target node that maximizes its probability of staying available (or minimizes the probability of being matched). Unlike the \tbf{Modification Procedures} proposed for the case when a mass-one node has offline neighbors of mass less than one, as shown in Section~\ref{sec:mod},  \fp is never implemented as part of the algorithm. \textit{As a result, there is no concern that it might increase the likelihood of any offline node becoming a 1B1S type after rounding}.

    \item \tbf{Special Case}: The case shown in Figure~\ref{fig:11/15/23/a} is a special instance of \fp, as illustrated in Figure~\ref{fig:09-08-c}. Using a similar argument, we can demonstrate that the result in Lemma~\ref{lem:11/15/a} continues to hold for this general case.
\end{enumerate}

\subsection{Proof of Lemma~\ref{lem:ws_one}}

\begin{proof}
When the target offline node $\bfi$ is of type 1B1S (one big and one small edge), applying the \tbf{Folding Procedure} and \re results in a single possible WS structure, as depicted in Figure~\ref{fig:three/a}. Lemma~\ref{lem:11/15/a} and the accompanying illustrations in Figure~\ref{fig:11/15/23/a} provide a complete proof for this case.

Now, let us focus on a target offline node $\bfi$ of type 3S (three small edges). This node has three online neighbors: $\tj$, $\bj$, and $j$, as shown in Figure~\ref{fig:09-09-c}. Due to the \re, these online neighbors are expected to exhibit symmetric structures. Define the following pairs of edges:  
\[
\cP_1 = (\te_1, \te_2), \quad \cP_2 = (\bae_1, \bae_2), \quad \cP_3 = (\te_3, \te_4), \quad \text{and} \quad \cP_4 = (\bae_3, \bae_4).
\]
Note that the target node $\bfi$ must have at least two offline neighbors. By symmetry, we can assume without loss of generality (WLOG) that when applying the \tbf{Folding Procedure} (\fp), we aim to prune edges in $\cP_\ell$ for $\ell \in \{1,2,3,4\}$ and to add or update values on edges between $a \in \{\bi, \ti\}$ and $b \in \{\bj, \tj\}$. We analyze this scenario by considering the following cases.

---

\medskip

\noindent \tbf{Case 1:} Edges within each pair $\cP_\ell$ with $\ell \in \{1,2,3,4\}$ are identical.  
This implies that both $\tj$ and $\bj$ each have a big edge (of mass 2/3) in addition to the one connected to $\bfi$. Similarly, both $\ti$ and $\bi$ each have a big edge in addition to the one connected to $j$. Applying \fp to the big edge $e_c$ (the merged version of $\te_1$ and $\te_2$) and another big edge $e_d$ (the merged version of $\te_3$ and $\te_4$), we add a big edge between $\ti$ and $\tj$. Repeating this process, we add another big edge between $\bi$ and $\bj$. Consequently, $\bfi$ ends up with the structure shown in Figure~\ref{fig:three/c}.

---

\medskip

\noindent \tbf{Case 2:} Edges within only one pair of $\cP_1$ or $\cP_2$ are identical, and by symmetry, edges within only one pair of $\cP_3$ or $\cP_4$ are identical.  
WLOG, assume that edges in $\cP_1$ and $\cP_3$ are identical, while edges in $\cP_2$ and $\cP_4$ are distinct. Applying \fp to $e_c$ (the merged edges in $\cP_1$) and $e_d$ (the merged edges in $\cP_3$), we add a new big edge between $\ti$ and $\tj$. Now, consider the following two subcases.

---

\noindent \tbf{Case 2a:} One edge from $\cP_2$, say $\bae_1$, is identical to another edge from $\cP_4$, say $\bae_3$, while $\bae_2$ differs from $\bae_4$.  
Applying \fp to $\bae_2$ and $\bae_4$ results in a single big edge between $\bi$ and $\bj$.

---

\noindent \tbf{Case 2b:} No edge in $\cP_2$ is identical to any edge in $\cP_4$.  
Applying \fp to $\bae_1$ and $\bae_3$, followed by another round of \fp to the remaining edges $\bae_2$ and $\bae_4$, results in a single big edge between $\bi$ and $\bj$.

---

Wrapping up the subcases above, we conclude that for \tbf{Case 2}, the target node $\bfi$ ends up with the structure shown in Figure~\ref{fig:three/c}.

---

\medskip

\noindent \tbf{Case 3:} No edges in the four pairs $\cP_\ell$ with $\ell \in \{1,2,3,4\}$ are identical.  
We consider the following subcases.

---

\noindent \tbf{Case 3a:} The pairs $\cP_1$ and $\cP_3$ share one edge (but cannot share two edges, as this would imply identical edges in both pairs).  
WLOG, let $\te_1 = \te_3$ and $\te_2 \neq \te_4$. Similarly, assume that $\cP_2$ and $\cP_4$ share only one edge, so $\bae_1 = \bae_3$ and $\bae_2 \neq \bae_4$. Applying \fp multiple times yields two possible outcomes:
1. Applying \fp to $\te_2$ and $\te_4$, followed by another round of \fp to the remaining edges $\bae_2$ and $\bae_4$, results in two big edges: $(\ti, \tj)$ and $(\bi, \bj)$. In this case, $\bfi$ ends up with the structure shown in Figure~\ref{fig:three/c}.  
2. Alternatively, applying \fp to $\te_2$ and $\bae_4$, followed by another round of \fp to $\bae_2$ and $\te_4$, results in $\bfi$ having the structure shown in Figure~\ref{fig:three/b}.

---

\noindent \tbf{Case 3b:} No edge in $\cP_1 \cup \cP_2$ is identical to any edge in $\cP_3 \cup \cP_4$.  
By applying similar analyses as shown above, we conclude that $\bfi$ can end up with a structure as shown in either Figure~\ref{fig:three/b} or Figure~\ref{fig:three/c}.

---

\noindent This completes the proof of Lemma~\ref{lem:ws_one}.
\end{proof}

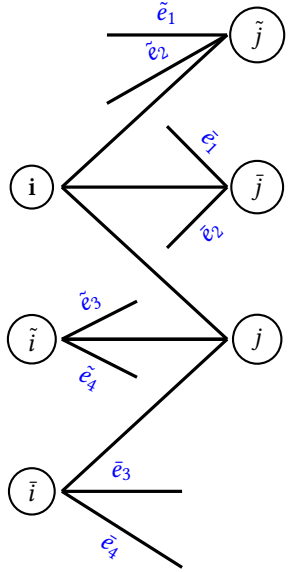
\begin{figure}[th!]
\begin{tikzpicture}
 \draw (0,0) node[minimum size=0.2mm,draw,circle, thick] {$\bfi$};
    \draw (3,0) node[minimum size=0.2mm,draw,circle, thick] {$\bj$};
     \draw[-, very thick] (1.8,0.8)--(2.6,0) node [blue, above, midway, sloped] {$\bar{e}_1$};
      \draw[-, very thick] (1.8,-0.8)--(2.6,0) node [blue, below, midway, sloped] {$\bar{e}_2$};
    
        \draw (3,-2) node[minimum size=0.2mm,draw,circle, thick] {$j$};
        
 \draw (3, 2) node[minimum size=0.2mm,draw,circle, thick] {$\tj$};
 
 \draw[-, very thick] (0.4,0)--(2.6,2);
 \draw[-, very thick] (1,2)--(2.6,2) node [blue, above, midway, sloped] {$\te_1$};
 \draw[-, very thick] (1,1.1)--(2.6,2) node [blue, above, midway, sloped] {$\te_2$};

 \draw[-, very thick] (0.4,-2)--(1.4,-1.5) node [blue, above, midway, sloped] {$\te_3$};
 \draw[-, very thick] (0.4,-2)--(1.4,-2.5) node [blue, below, midway, sloped] {$\te_4$};
 
  \draw[-, very thick] (0.4,-4)--(2, -4) node [blue, above, midway, sloped] {$\bae_3$};
 \draw[-, very thick] (0.4,-4)--(2,-5) node [blue, below, midway, sloped] {$\bae_4$};

 \draw[-, very thick] (0.4,-2)--(2.6,-2);
  \draw[-, very thick] (0.4,-2)--(2.6,-2);

\draw (0,-2) node[minimum size=0.2mm,draw,circle, thick] {$\ti$};
\draw (0,-4) node[minimum size=0.2mm,draw,circle, thick] {$\bi$};

\draw[-, very thick] (0.4,-4)--(2.6,-2);

\draw[-, very thick] (0.4,0)--(2.6,0);
\draw[-, very thick] (0.4,0)--(2.6,-2);
\end{tikzpicture}
  \caption{Pinpointing all possible WS structures of an offline node of type 3S when all offline nodes have a unit mass after rounding.}
  \label{fig:09-09-c}
\end{figure}

\section{Another Markov Chain-Based Approach for the Example in Figure~\ref{fig:three/c}}\label{app:three/c}

Recall that the discrete version of our arrival setting is as follows: There are $T$ rounds, and during each round $k \in [T]$, each online node $j \in J$ arrives uniformly with probability $1/T$. Consider a Markov Chain (MC) with six states defined as follows:
\[
s_1 = (2,1), \quad s_2 = (1,1), \quad s_3 = (2,0), \quad s_4 = (0,1), \quad s_5 = (1,0), \quad s_6 = (0,0),
\]
where in each state, the first value represents the number of safe nodes between $i_0$ and $i_2$, and the second value represents the number of safe nodes corresponding to $i_1$. The process starts in state $s_1$ with probability one, and the one-step transition matrix is given below:
\[
\mathbf{H} = (H_{pq}) =
\begin{pmatrix}
1 - 3/T & 2/T & 1/T & 0 & 0 & 0 \\
0 & 1 - 3/T & 0 & (7/6)/T & (11/6)/T & 0 \\
0 & 0 & 1 - 3/T & 0 & 3/T & 0 \\
0 & 0 & 0 & 1 - 3/T & 0 & 3/T \\
0 & 0 & 0 & 0 & 1 - 2/T & 2/T \\
0 & 0 & 0 & 0 & 0 & 1
\end{pmatrix},
\]
where $H_{pq} = \Pr[X_{k+1} = s_p \mid X_k = s_q]$ denotes the one-step transition probability from state $s_q$ to state $s_p$, for all $p, q \in \{1, 2, 3, 4, 5, 6\}$.

\smallskip

By computing the limit of $\mathbf{H}^T$ (the $T$th power of $\mathbf{H}$) as $T \to \infty$, we obtain the final stationary distribution $\bpi = (\pi_p)$ over the six states as follows:
\[
\bpi = \left(
\frac{1}{\sfe^3}, \frac{2}{\sfe^3}, \frac{1}{\sfe^3}, \frac{7}{6 \sfe^3}, \frac{20 (\sfe - 2)}{3 \sfe^3}, \frac{49 - 40 \sfe + 6 \sfe^3}{6 \sfe^3}
\right).
\]

Let $N_\ell$ denote the expected number of matches for $i_\ell$ at the end of the process, where $\ell \in \{0, 1, 2\}$. We compute:
\begin{align*}
\E[N_0] &= \E[N_2] = \frac{\pi_2 + \pi_5}{2} + \pi_4 + \pi_6 = \frac{11 - 10 \sfe + 3 \sfe^3}{3 \sfe^3} \approx 0.7314; \\
\E[N_1] &= \pi_3 + \pi_5 + \pi_6 = 1 - \frac{25}{6 \sfe^3} \approx 0.7925.
\end{align*}

\section{Alterations on Sampling Distributions for Mass-One Offline Nodes}\label{sec:7/7/a}
\subsection{Analysis of a Mass-One Node with Combined Structures (Figure~\ref{fig:6/29/a}+\ref{fig:6/29/b})}
\begin{figure}[ht!]
\begin{tikzpicture}
 \draw (0,0) node[minimum size=0.2mm,draw,circle, ultra thick] {$\mathbf{i}$};
   \draw (-0.4,1) node[left] {\bluee{$1/3$}};
  \draw (-0.4,0) node[left] {\bluee{$1$}};
    \draw (-0.4,-1) node[left] {\bluee{$2/3$}};
  \draw (3,0) node[minimum size=0.2mm,draw,circle, thick] {$j_s$};
  \draw (3,1) node[minimum size=0.2mm,draw,circle, thick] {$j_b$};
 \draw (0,-1) node[minimum size=0.2mm,draw,circle, thick] {$\ti$};

    \draw (0,1) node[minimum size=0.2mm,draw,circle, thick] {${\hi}$};

\draw[-,  very thick] (0.4,1)--(2.6,1);
\draw[-,  very thick] (0.4,0)--(2.6,0) node [blue, above, near end, sloped] {$1-z_2$};
\draw[-,  red, ultra thick] (0.4, -1)--(2.6,0) node [blue, above, midway, sloped] {$z_2$};

  \draw (1.5,1) node[above] {\bluee{$z_1$}};
\draw[-,  red, ultra thick] (0.4, 0)--(2.6,1) node [blue, above, near start, sloped] {$1-z_1$};   
\end{tikzpicture}
\caption{The offline node $\bfi$ (of mass one) has two online neighbors: one with the structure shown in Figure~\ref{fig:6/29/a} and the other in Figure~\ref{fig:6/29/b}. The values on the edges represent the updates proposed to achieve \tbf{Goals 1 and 2} outlined in Section~\ref{sec:tworep}.}
    \label{fig:7/4/a}
    \end{figure}
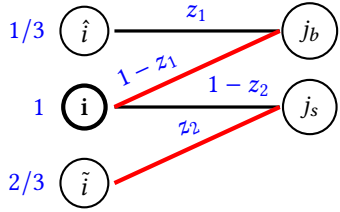
Note that the original case corresponds to when $z_1=1/3$ and $z_2=2/3$. The lemma below suggests that the simple setting of $z_1=0$ and $z_2=1$ suffices to achieve both \tbf{Goal 1} and \tbf{Goal 2}. 

\begin{lemma}
The configuration of $z_1=0$ and $z_2 \in [0.5663, 1]$ in Figure~\ref{fig:7/4/a} guarantees achieving both \tbf{Goal 1} and \tbf{Goal 2}, as stated in Section~\ref{sec:tworep}.
\end{lemma}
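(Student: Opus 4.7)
My plan mirrors the continuous-time Markov chain style already employed in the preceding lemma for the combined structure Figure~\ref{fig:6/29/a}+\ref{fig:6/29/c}. Specifically, I would set up a CTMC tracking the joint safety indicators of the triple $(\bfi,\hi,\ti)$, solve the associated Kolmogorov forward equations in closed form, and then check the three target MPM thresholds simultaneously: $\kapc$ for the mass-one 1B1S node $\bfi$, $\kapa$ for the mass-$1/3$ node $\hi$, and $\kapb$ for the mass-$2/3$ node $\ti$.

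First, I would pin down the reachable states. Because $z_1=0$, whenever $\bfi$ is safe and $j_b$ arrives, $j_b$ always samples $\bfi$; hence $\hi$ can never be matched while $\bfi$ is still safe. This eliminates every state with $\bfi$ safe and $\hi$ unsafe, leaving only the six reachable configurations $s_1=(1,1,1),\; s_2=(0,1,1),\; s_3=(1,1,0),\; s_4=(0,0,1),\; s_5=(0,1,0),\; s_6=(0,0,0)$. I would then read off the transition-rate matrix from the modified sampling distributions on $j_b$ and $j_s$: for example, from $s_1$ the total outflow rate is $2$, splitting as $2-z_2$ to $s_2$ (the union of a $j_b$-arrival and a $j_s$-arrival sampling $\bfi$) and $z_2$ to $s_3$ (a $j_s$-arrival sampling $\ti$); from $s_3$, both $j_b$ and $j_s$ force a transition to $s_5$ at a combined rate of $2$.

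Next, I would solve the linear ODE system sequentially via integrating factors, starting from $q_1(t)=\sfe^{-2t}$ and cascading through $q_2,q_3$ and then $q_4,q_5,q_6$. A convenient parameter cancellation I expect to exploit is that the coefficients $(2-z_2)$ appearing in $q_4$ and $(2+z_2)$ appearing in $q_5$ sum to $4$, so $q_6'=q_4+q_5$ is $z_2$-independent and yields the closed form $q_6(t)=1-4\sfe^{-t}+3\sfe^{-2t}+2t\sfe^{-2t}$. From here, the three matching probabilities come out linear in $z_2$: $\Pr[\bfi~\text{matched}]=1-q_1(1)-q_3(1)$ simplifies to $1-(1+z_2)\sfe^{-2}$, while $\Pr[\hi~\text{matched}]=q_4(1)+q_6(1)$ and $\Pr[\ti~\text{matched}]=q_3(1)+q_5(1)+q_6(1)$ both reduce to affine functions of $z_2$ with coefficients in $\sfe^{-1}$ and $\sfe^{-2}$.

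Finally, the three MPM requirements reduce to linear inequalities in $z_2$. The constraint $\Pr[\bfi~\text{matched}]\ge \kapc$ simplifies to $z_2\le 1$; the constraint for $\hi$, after dividing by the mass $1/3$, stays strictly slack throughout $[0,1]$ (its MPM even at $z_2=1$ is $3(1-3\sfe^{-1}+3\sfe^{-2})\approx 0.9071$, well above $\kapa$); and the binding inequality is the one for $\ti$, where dividing its matching probability by mass $2/3$ and equating the result to $\kapb=1+(27/4)\sfe^{-4}-12\sfe^{-3}+2\sfe^{-2}$ yields the lower threshold $z_2\ge 0.5663$. The main care I anticipate lies in this last numerical verification: expanding $\kapb$ in powers of $\sfe^{-1}$ and checking that the root of the resulting linear equation for $\ti$ matches $0.5663$ cleanly. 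This is the same routine $\sfe^{-k}$ algebra used in the preceding lemma rather than a conceptual obstacle, and combining the three inequalities produces the advertised interval $z_2\in[0.5663,1]$.
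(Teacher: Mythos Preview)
Your proposal is correct and follows essentially the same approach as the paper; the paper just packages the computation slightly more compactly by writing a separate two-function ODE for each of $\bfi$, $\ti$, and $\hi$ (tracking that node's safe probability together with the relevant joint-safe probability $\beta(t)=\sfe^{-2t}$ or $\beta(t)=\sfe^{-2t}(1+tz_2)$) rather than the full six-state CTMC, but the resulting closed forms and the three linear inequalities in $z_2$ are identical to yours.
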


\begin{proof}
Consider the setting of $z_1=0$ and $z_2=z$. Let us investigate the conditions we should impose on $z$ such that \tbf{Goal 1} and \tbf{Goal 2} are satisfied. First, we analyze $\ti$. Let $\alp(t)$ be the probability that $\ti$ is safe at $t \in [0,1]$, and let $\beta(t)$ be the probability that both $\bfi$ and $\ti$ are safe at $t$. Thus, $\beta(t)=\sfe^{-2t}$.

\begin{align*}
&\alp_{k+1} = \alp_k \cdot \bP{1-\frac{1}{T}\bp{\frac{\beta}{\alp} \cdot z + \bp{1-\frac{\beta}{\alp}} \cdot 1}}, \quad \alp_1 = 1. \\
\Rightarrow~~ &\alp' + \alp = \beta (1 - z) = \sfe^{-2t} (1 - z), \quad \alp(0) = 1. \\
\Rightarrow~~ &\alp(t) = \sfe^{-2t} \cdot \bp{2\sfe^t - 1 - z \sbp{\sfe^t - 1}}, \quad t \in [0,1].
\end{align*}

Therefore, the condition that $\ti$ achieves an MPM of at least $\kap_m^* \approx 0.7969$ is equivalent to
\[
\sbp{1 - \alp(1)}/(2/3) = \bP{1 - \sfe^{-2} \bp{2\sfe - 1 - z (\sfe - 1)}}/(2/3) \ge \kap_m^* \Leftrightarrow z \ge 0.5663.
\]

Next, we analyze the target node $\bfi$. Let $\alp(t)$ be the probability that $\bfi$ is safe at $t$, and let $\beta(t)$ be the probability that both $\bfi$ and $\ti$ are safe at $t$.

\begin{align*}
&\alp_{k+1} = \alp_k \cdot \bP{1 - \frac{1}{T} \bp{1 + \frac{\beta}{\alp} \cdot (1 - z) + \bp{1 - \frac{\beta}{\alp}} \cdot 1}}, \quad \alp_1 = 1. \\
\Rightarrow~~ &\alp' + 2\alp = \beta \cdot z = \sfe^{-2t} \cdot z, \quad \alp(0) = 1. \\
\Rightarrow~~ &\alp(t) = \sfe^{-2t} \cdot \sbp{1 + t \cdot z}, \quad t \in [0,1].
\end{align*}

To ensure that $\bfi$ achieves an MPM of at least $\kap_B^* = 1 - 2\sfe^{-2}$, we require
\[
1 - \alp(1) = 1 - \sfe^{-2}(1 + z) \ge 1 - 2\sfe^{-2} \Leftrightarrow z \le 1.
\]

Finally, we analyze $\hi$. Let $\alp(t)$ be the probability that $\hi$ is safe at $t$, and let $\beta(t)$ be the probability that both $\hi$ and $\bfi$ are safe at $t$. Note that $\beta(t) = \sfe^{-2t} \cdot \sbp{1 + t \cdot z}$. We have

\begin{align*}
&\alp' + \alp = \beta, \quad \alp(0) = 1. \\
\Rightarrow~~ &\alp(t) = \sfe^{-2t} \bp{2\sfe^t - 1 + z \sbp{\sfe^t - t - 1}}.
\end{align*}

Therefore,
\[
\sbp{1 - \alp(1)}/(1/3) = \bp{1 - \sfe^{-2} \sbp{2\sfe - 1 + z (\sfe - 2)}}/(1/3) \ge \kap_s^* \Leftrightarrow z \le 1.
\]

Thus, any setting of $z_1 = 0$ and $z_2 = z \in [0.5663, 1]$ suffices to achieve \tbf{Goal 1} and \tbf{Goal 2}.
\end{proof}

\subsection{Analysis of a Mass-One Node with Combined Structures ($3 \times$~Figure~\ref{fig:6/29/b})}

\begin{figure}[ht!]
\begin{tikzpicture}
 \draw (0,0) node[minimum size=0.2mm,draw,circle, ultra thick] {$\mathbf{i}$};
   \draw (-0.4,1) node[left] {\bluee{$2/3$}};
  \draw (-0.4,0) node[left] {\bluee{$1$}};
    \draw (-0.4,-1) node[left] {\bluee{$2/3$}};
        \draw (-0.4,-2) node[left] {\bluee{$2/3$}};
  \draw (3,0) node[minimum size=0.2mm,draw,circle, thick] {$j_2$};
  \draw (3,1) node[minimum size=0.2mm,draw,circle, thick] {$j_1$};
    \draw (3,-2) node[minimum size=0.2mm,draw,circle, thick] {$j_3$};
 \draw (0,-1) node[minimum size=0.2mm,draw,circle, thick] {$\ti$};
 \draw (0,-2) node[minimum size=0.2mm,draw,circle, thick] {$\bi$};
    \draw (0,1) node[minimum size=0.2mm,draw,circle, thick] {${\hi}$};

\draw[-,  red, ultra thick] (0.4,1)--(2.6,1);
\draw[-,  very thick] (0.4,0)--(2.6,0) node [blue, above, near end, sloped] {$1-z$};
\draw[-,  very thick] (0.4,0)--(2.6,-2) node [blue, above, near end, sloped] {$1-z$};
\draw[-,  red, ultra thick] (0.4, -1)--(2.6,0) node [blue, below, near end, sloped] {$z$};
\draw[-,  red, ultra thick] (0.4, -2)--(2.6,-2) node [blue, above, near end, sloped] {$z$};

  \draw (1.5,1) node[above] {\bluee{$z$}};
\draw[-, very thick] (0.4, 0)--(2.6,1) node [blue, above, near start, sloped] {$1-z$};   
\end{tikzpicture}
\caption{The offline node $\bfi$ (of mass one) has three online neighbors, each having the structure shown in Figure~\ref{fig:6/29/b}. The values on the edges represent the updates proposed to achieve \tbf{Goals 1 and 2}  in Section~\ref{sec:tworep}.}
    \label{fig:7/1/1}
    \end{figure}
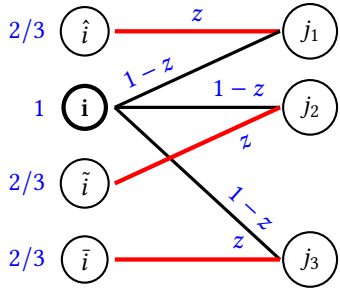

\begin{lemma}
The configuration of $z \in [z^*, \sqrt[3]{9/2} - 1, 1]$ in Figure~\ref{fig:7/1/1} guarantees achieving both \tbf{Goal 1} and \tbf{Goal 2}, as stated in Section~\ref{sec:tworep}, where $z^* \approx 0.5607$ is the unique solution to the following equation:
\[
\frac{2 + 4 \sfe^3 + 4 z - z^2 - 5 z^3 + \sfe^2 (-6 + z^2 + z^3)}{4 \sfe^3} = \frac{2}{3} \cdot \kap^*_m.
\]
\end{lemma}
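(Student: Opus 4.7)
The plan is to exploit the three-fold symmetry of Figure~\ref{fig:7/1/1} to reduce the analysis to a small cascade of linear ODEs with clean closed-form solutions, and then to read off the two required conditions---\tbf{Goal 1} for $\bfi$ (mass one, type 3S) and \tbf{Goal 2} for each of $\hi,\ti,\bi$ (mass $2/3$)---as one-variable inequalities in $z$.

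First, I would introduce the joint survival probabilities
\begin{align*}
\alp(t) &= \Pr[\bfi \text{ safe at } t], & \beta(t) &= \Pr[\bfi,\hi \text{ both safe at } t], \\
\gam(t) &= \Pr[\bfi,\hi,\ti \text{ all safe at } t], & \del(t) &= \Pr[\bfi,\hi,\ti,\bi \text{ all safe at } t].
\end{align*}
By the symmetry among $(\hi,\ti,\bi)$, the particular choice of neighbors used in $\beta,\gam$ is immaterial. Under the modified sampling rule on each $j_\ell$ (its big-edge neighbor is sampled with probability $z$ and $\bfi$ with probability $1-z$ when both are safe, and the unique remaining safe neighbor is sampled with probability one otherwise), a direct rate-counting argument in the style of Section~\ref{sec:a+c} yields the nested system
\begin{align*}
\del'+3\del &= 0, & \gam'+3\gam &= z\,\del, \\
\beta'+3\beta &= 2z\,\gam, & \alp'+3\alp &= 3z\,\beta,
\end{align*}
with all initial conditions equal to $1$. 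Solving in order via the integrating factor $\sfe^{3t}$ produces the uniform family $\del(t)=\sfe^{-3t}$, $\gam(t)=(1+zt)\sfe^{-3t}$, $\beta(t)=(1+zt)^2\sfe^{-3t}$, and $\alp(t)=(1+zt)^3\sfe^{-3t}$.

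From $\alp$, the matching probability of $\bfi$ is $1-(1+z)^3\sfe^{-3}$, so the \tbf{Goal 1} requirement that this be at least $\kapd = 1-\tfrac{9}{2}\sfe^{-3}$ is equivalent to $(1+z)^3 \le 9/2$, i.e., $z \le \sqrt[3]{9/2}-1$. For each of $\hi,\ti,\bi$, note that $\hi$ (say) can only be matched by $j_1$, so $\mu(t):=\Pr[\hi \text{ safe at } t]$ satisfies $\mu'+\mu=(1-z)\beta=(1-z)(1+zt)^2\sfe^{-3t}$ with $\mu(0)=1$. Multiplying by $\sfe^t$ reduces $\mu(1)$ to $\int_0^1(1+zs)^2\sfe^{-2s}\,ds$; two applications of integration by parts yield the antiderivative $-\tfrac{\sfe^{-2s}}{4}\bigl[2(1+zs)^2+2z(1+zs)+z^2\bigr]$, and substitution into $\mu(1)$ followed by clearing denominators produces exactly the stated numerator for $1-\mu(1)$. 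The condition $1-\mu(1) \ge \tfrac{2}{3}\kapb$ then defines $z^*$ as the unique crossing point, and monotonicity in $z$ (checked by inspecting the polynomial derivative) gives the lower bound $z \ge z^*$.

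The main obstacle is the routine but error-prone algebraic bookkeeping required to show that the closed form for $1-\mu(1)$ coincides with the stated rational function $(2+4\sfe^3+4z-z^2-5z^3+\sfe^2(-6+z^2+z^3))/(4\sfe^3)$, and to verify that this expression is strictly increasing in $z$ on the relevant range so that the root $z^*$ is unique. Combining the two bounds yields the feasible interval $[z^*, \sqrt[3]{9/2}-1]$; numerically $z^* \approx 0.5607 < \sqrt[3]{9/2}-1 \approx 0.6510$, so the interval is nonempty and both \tbf{Goal 1} and \tbf{Goal 2} hold throughout. I interpret the trailing ``$1$'' in the stated interval as a typographical artifact, since direct substitution shows $z=1$ gives $\bfi$ matching probability $1-8\sfe^{-3}$, which violates the threshold $\kapd$.
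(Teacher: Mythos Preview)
Your proposal is correct and arrives at exactly the same two conditions as the paper, but via a cleaner route. The paper sets up an eight-state continuous-time Markov chain tracking the pair $(\text{status of }\bfi,\ \#\text{safe neighbors among }\hi,\ti,\bi)$, writes out the full Kolmogorov forward system, and then assembles $\eta_1(z)=1-\sfe^{-3}(1+z)^3$ and $\eta_{2/3}(z)$ from linear combinations of the eight state probabilities at $t=1$. You instead exploit the three-fold symmetry and the conditional independence of the three $(j_\ell,\text{big-neighbor})$ modules to write a four-level nested linear system for the joint survival probabilities $\del,\gam,\beta,\alp$, which collapses to the explicit family $(1+zt)^k\sfe^{-3t}$ for $k=0,1,2,3$. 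This directly yields $\alp(1)=(1+z)^3\sfe^{-3}$ and feeds $\beta$ into a single first-order ODE for $\mu=\Pr[\hi\text{ safe}]$; the resulting integral evaluates to the stated rational function. Your approach is more transparent and requires solving fewer equations; the paper's Markov-chain approach is more mechanical and generalizes more readily to asymmetric structures. Two small presentational points: your sentence ``reduces $\mu(1)$ to $\int_0^1(1+zs)^2\sfe^{-2s}\,ds$'' silently drops the factor $(1-z)$ and the additive $\sfe^{-1}$, though your subsequent description makes clear you retain them; and your observation that $z=1$ violates \tbf{Goal 1} (so the ``$1$'' in the stated interval is a typo) is correct and worth keeping.
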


\begin{proof}
Consider the following continuous-time Markov chain, which consists of eight states:
\[
s_1 = (1,3), \quad s_2 = (0,3), \quad s_3 = (1,2), \quad s_4 = (1,1), \quad s_5 = (0,2), \quad s_6 = (1,0), \quad s_7 = (0,1), \quad s_8 = (0,0),
\]
where in each state, the first number being equal to $1$ indicates that $\bfi$ is safe, and the second number represents the number of safe nodes among the three offline neighbors of $\bfi$. The transition-rate matrix is stated in Table~\ref{table:7/1/1}.

\begin{table}[ht!]
\caption{The transition-rate matrix of the continuous-time Markov chain for the instance in Figure~\ref{fig:7/1/1}.}
\label{table:7/1/1}
\centering
\begin{tabular}{c " c ccccccc} \thickhline
         & $(1,3)$ & $(0,3)$ & $(1,2)$ & $(1,1)$ & $(0,2)$ & $(1,0)$ & $(0,1)$ & $(0,0)$   \\ \thickhline
$(1,3)$ & $-3$ & $3(1-z)$ & $3z$ & 0 & 0 & 0 & 0 & 0 \\
$(0,3)$ & 0 & $-3$ & 0 & 0 & 3 & 0 & 0 & 0 \\
$(1,2)$ & 0 & 0 & $-3$ & $2z$ & $3 - 2z$ & 0 & 0 & 0 \\
$(1,1)$ & 0 & 0 & 0 & $-3$ & 0 & $z$ & $3 - z$ & 0 \\
$(0,2)$ & 0 & 0 & 0 & 0 & $-2$ & 0 & 2 & 0 \\
$(1,0)$ & 0 & 0 & 0 & 0 & 0 & $-3$ & 0 & 3 \\
$(0,1)$ & 0 & 0 & 0 & 0 & 0 & 0 & $-1$ & 1 \\
$(0,0)$ & 0 & 0 & 0 & 0 & 0 & 0 & 0 & 0 \\
\end{tabular}
\end{table}

The Kolmogorov forward equations, along with the initial conditions for the process, are stated below:
\begingroup
\allowdisplaybreaks
\begin{align*}
q_1'(t) &= -3 q_1(t), \\
q_2'(t) &= 3 (1 - z) q_1(t) - 3 q_2(t), \\
q_3'(t) &= 3 z q_1(t) - 3 q_3(t), \\
q_4'(t) &= 2 z q_3(t) - 3 q_4(t), \\
q_5'(t) &= 3 q_2(t) + (3 - 2z) q_3(t) - 2 q_5(t), \\
q_6'(t) &= z q_4(t) - 3 q_6(t), \\
q_7'(t) &= (3 - z) q_4(t) + 2 q_5(t) - q_7(t), \\
q_8'(t) &= 3 q_6(t) + q_7(t), \\
q_1(0) &= 1, \quad q_\ell(0) = 0, \quad \forall 2 \le \ell \le 8.
\end{align*}
\endgroup

Let $\eta_1(z)$ and $\eta_{2/3}(z)$ denote the MPM achieved for the node $\bfi$ and each of its offline neighbors, respectively. Thus, we have
\begingroup
\allowdisplaybreaks
\begin{align*}
\eta_1(z) &= q_2(1) + q_5(1) + q_7(1) + q_8(1) = 1 - \sfe^{-3} \cdot (1 + z)^3 \ge \kap_S^*,
\end{align*}
\endgroup
where we can solve that $z \le \sqrt[3]{9/2} - 1 \approx 0.6510$. Similarly, we have
\begin{align*}
\eta_{2/3}(z) &= \left[\frac{q_3(1) + q_5(1)}{3} + \frac{q_4(1) + q_7(1)}{3} \cdot 2 + q_6(1) + q_8(1)\right] \cdot \frac{1}{2/3} \ge \kap_m^*,
\end{align*}
where we can solve that $z \ge 0.5607$.
\end{proof}

\subsection{Analysis of a Mass-One Node with Combined Structures ($3 \times$~Figure~\ref{fig:6/29/c})}
\begin{figure}[ht!]
\begin{tikzpicture}
 \draw (0,0) node[minimum size=0.2mm,draw,circle, ultra thick] {$\mathbf{i}$};
  \draw (-0.4,0) node[left] {\bluee{$1$}};
    \draw (-0.4,-1) node[left] {\bluee{$1/3$}};
    \draw (-0.4,-2) node[left] {\bluee{$1/3$}};
    
        \draw[-,  very thick] (1.5,1)--(2.6,1) node [blue, above, midway, sloped] {$z$};
     \draw[-,  very thick] (2,1.6)--(2.6,1)  node [blue, above, midway, sloped] {$z$};
     
    \draw[-,  very thick] (1.5,3)--(2.6,3)node [blue, above, midway, sloped] {$z$};
     \draw[-,  very thick] (1.5,2.5)--(2.6,3)node [blue, above, near start, sloped] {$z$};
     
         \draw (3,3) node[minimum size=0.2mm,draw,circle, thick] {$j''$};
     \draw (3,1) node[minimum size=0.2mm,draw,circle, thick] {$j'$};
 \draw (3,0) node[minimum size=0.2mm,draw,circle, thick] {$j$};
 
 \draw (0,-1) node[minimum size=0.2mm,draw,circle, thick] {$\bi$};
  \draw (0,-2) node[minimum size=0.2mm,draw,circle, thick] {$\ti$};

\draw[-,  very thick] (0.4,0)--(2.6,3) node [blue, above, midway, sloped] {$1-2z$};
\draw[-,  very thick] (0.4,0)--(2.6,1) node [blue, above, midway, sloped] {$1-2z$};
\draw[-,  very thick] (0.4,0)--(2.6,0) node [blue, above, near end, sloped] {$1-2z$};
\draw[-,  very thick] (0.4, -1)--(2.6,0) node [blue, above, midway, sloped] {$z$};
\draw[-,  very thick] (0.4, -2)--(2.6,0) node [blue, below, midway, sloped] {$z$};  
\end{tikzpicture}
\caption{The offline node $\bfi$ (of mass one) has three online neighbors, each having the structure shown in Figure~\ref{fig:6/29/c}. The values on the edges represent the updates proposed to achieve \tbf{Goals 1 and 2}  in Section~\ref{sec:tworep}.}
    \label{fig:7/1/3}
    \end{figure}
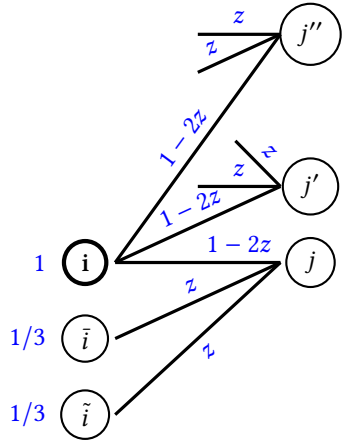

\begin{lemma}
The configuration of $z = 0$ in Figure~\ref{fig:7/1/3} guarantees achieving both \tbf{Goal 1} and \tbf{Goal 2}, as stated in Section~\ref{sec:tworep}.
\end{lemma}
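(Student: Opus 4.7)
The plan is to verify that with $z=0$ in Figure~\ref{fig:7/1/3}, every offline node present achieves its target MPM. Two types of offline nodes appear: the mass-one target $\bfi$ (of type 3S), and the six mass-$1/3$ offline neighbors split across the three online neighbors $j, j', j''$. By the \re and symmetry, the six mass-$1/3$ nodes decouple into three i.i.d.\ sub-instances of the triple $\{j, \bi, \ti\}$, so it suffices to analyze one such triple together with $\bfi$.

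First I would analyze $\bfi$. With $z=0$, the modified sampling distribution $(1-2z, z, z) = (1, 0, 0)$ on each of $j, j', j''$ forces each of these online neighbors to match $\bfi$ with probability one whenever $\bfi$ is safe. Letting $\alpha(t)$ denote the probability that $\bfi$ is safe at time $t$, the ODE derivation mirrors the one in Section~\ref{sec:a+c} and yields $\alpha'(t) = -3\alpha(t)$ with $\alpha(0) = 1$, so $\alpha(t) = \sfe^{-3t}$. Hence $\bfi$ is matched with probability $1 - \sfe^{-3} \approx 0.9502$, which as an MPM (node mass is one) comfortably exceeds $\kapd = 1 - \frac{9}{2}\sfe^{-3}$ and establishes Goal 1 for $\bfi$.

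Next I would set up a four-state continuous-time Markov chain for the joint status $(\bfi, \bi, \ti)$, exactly paralleling the construction of Section~\ref{sec:a+c}. In the coordinate convention used there, the active states are $(1,2), (0,2), (0,1), (0,0)$, where the first coordinate is the safe-indicator of $\bfi$ and the second is the number of safe nodes in $\{\bi, \ti\}$; the states $(1,1)$ and $(1,0)$ are unreachable because at $z=0$ the online neighbor $j$ can only match $\bi$ or $\ti$ after $\bfi$ has been matched, under the same limiting interpretation of the boosted sampling (take $z \to 0^+$) already employed in Section~\ref{sec:a+c}. The transitions are $(1,2) \to (0,2)$ at rate $3$, $(0,2) \to (0,1)$ at rate $1$, and $(0,1) \to (0,0)$ at rate $1$: the rate $3$ out of $(1,2)$ reflects the three online neighbors of $\bfi$ (versus two in Section~\ref{sec:a+c}), while the rate-$1$ transitions reflect that only $j$, not $j'$ or $j''$, affects $\bi$ or $\ti$.

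Finally I would solve the resulting Kolmogorov forward equations as a triangular cascade with integrating factor $\sfe^t$: this yields $q_{12}(t) = \sfe^{-3t}$ immediately, then $q_{02}(t)$ as a combination of $\sfe^{-t}$ and $\sfe^{-3t}$, and finally $q_{01}(t)$ as a combination of $\sfe^{-t}$, $t\sfe^{-t}$, and $\sfe^{-3t}$. By the symmetry between $\bi$ and $\ti$, the matching probability of $\bi$ equals $q_{00}(1) + q_{01}(1)/2$, and a routine evaluation shows the corresponding MPM (after dividing by $1/3$) to be approximately $0.949$, well above $\kapa$, settling Goal 2. The only conceptual point requiring care, and the main thing to flag rather than a substantive obstacle, is the limiting interpretation of the modified sampling rule on $j$ at $z = 0$ once $\bfi$ is no longer safe; this is exactly the convention already used and justified in Section~\ref{sec:a+c}, so no new difficulty arises.
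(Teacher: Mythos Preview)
Your argument is correct, but it takes a different route from the paper's. The paper does not redo the Markov-chain computation for the triple $(\bfi,\bi,\ti)$; instead it observes that, once $z=0$, the only difference between the present module and the module in Figure~\ref{fig:6/30/1} (with $z_1=z_2=0$) is that here $\bfi$ becomes unavailable stochastically earlier (safe probability $\sfe^{-3t}$ versus $\sfe^{-2t}$). Since with $z=0$ the bundle $(\bi,\ti)$ has matching rate zero while $\bfi$ is safe and rate one from $j$ thereafter, and this post-$\bfi$ dynamics is identical in both figures, a simple coupling gives that $\bi,\ti$ do at least as well here as in Figure~\ref{fig:6/30/1}, where the MPM was already shown to be $3(1-2/\sfe)\approx 0.7927>\kapa$. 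Your direct computation is more explicit and yields the sharper bound of roughly $0.949$, while the paper's comparison argument is shorter and reuses the earlier analysis without fresh calculation. One minor wording issue: the three sub-instances are not ``i.i.d.'' since they share $\bfi$, but your Markov chain correctly handles this by using rate $3$ out of $(1,2)$, so the slip is purely cosmetic.
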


\begin{proof}
First, for the mass-one node $\bfi$, it achieves an MPM equal to $1 - \sfe^{-3} > \kap^*_S$. Now, we show that each of its offline neighbors achieves an MPM at least equal to that in Figure~\ref{fig:6/30/1} (with $z_1 = z_2 = 0$). Observe that the probability of $\bfi$ being safe at time $t$ is $\sfe^{-3t}$, which is strictly smaller than $\sfe^{-2t}$, the probability of $\bfi$ being safe at $t$ in Figure~\ref{fig:6/30/1} with $z_1 = z_2 = 0$.

\smallskip

Focus on the module consisting of $\bfi$, $\bi$, and $\ti$. We can consider the two offline nodes $\bi$ and $\ti$ together as a bundle. In other words, the bundle $(\bi, \ti)$ can be treated as a single entity for the purposes of the matching process. Note that when $\bfi$ is safe, the matching rate for either $\bi$ or $\ti$ is zero. Conversely, when $\bfi$ is not safe, the matching rate becomes one if at least one of $\bi$ or $\ti$ is safe. The same argument applies to the module of the triple $(\bfi, \bi, \ti)$ in Figure~\ref{fig:6/30/1} with $z_1 = z_2 = 0$.

\smallskip

Thus, we claim that the matching process for the bundle $(\bi, \ti)$ starts earlier in expectation in Figure~\ref{fig:7/1/3} (with $z = 0$) compared to Figure~\ref{fig:6/30/1} (with $z_1 = z_2 = 0$). Once the matching process begins, the remaining steps are identical. As a result, we establish our claim.
\end{proof}

\section{Analysis of the WS Structures for Offline Nodes of Mass 1/3 and 2/3}\label{app:ws/12/3}

For each offline node of mass $a \in \{1/3, 2/3\}$, let $\psi(a)$ and $\phi(a) = \psi(a)/a$ denote the corresponding matching probability and Matching Probability per Mass (MPM), respectively. Meanwhile, let $\tau(t)$ denote the probability that $\ti$ (or $\hi$) is safe at time $t \in [0,1]$ given that $\bfi$ is safe at that time, and let $\eta(t)$ denote the matching rate of $\bfi$ from $j$ given that $\bfi$ is safe.

\xhdr{Case 1}: The target node has a mass of $1/3$ and a structure as shown in Figure~\ref{fig:wsg/3}.
\begingroup
\allowdisplaybreaks
\begin{align*}
&\psi(1/3) = 1 - \frac{5}{3\sfe}, \quad &&\psi(2/3) = 1 - \frac{4}{3\sfe},\\
&\phi(1/3) = 3 - \frac{5}{\sfe} \approx 1.1606, \quad &&\phi(2/3) = \frac{3}{2} - \frac{2}{\sfe} \approx 0.7642,\\
&\tau(t) = \frac{1}{1 + \frac{2}{3} t}, \quad &&\eta(t) = \frac{\tau(t)}{3} + 1 - \tau(t) = 1 - \frac{2}{3} \tau(t).
\end{align*}
\endgroup

\xhdr{Case 2}: The target node $\bfi$ has a mass of $1/3$ and a structure as shown in Figure~\ref{fig:wsg/4}. For any time $t \in [0,1]$ and any offline node $\ell \in \{\bfi, \ti, \hi\}$, let $\saf_{\ell,t} = 1$ indicate that $\ell$ is safe at $t$. In this case, we introduce two versions of $\tau(t)$, namely, $\tau_1(t)$ and $\tau_2(t)$, where:
\begin{itemize}
    \item $\tau_1(t)$ denotes the probability that, given $\bfi$ is safe at $t$, exactly one of its offline neighbors (either $\hi$ or $\ti$) is safe at $t$.
    \item $\tau_2(t)$ denotes the probability that, given $\bfi$ is safe at $t$, both of its offline neighbors ($\hi$ and $\ti$) are safe at $t$.
\end{itemize}

\begingroup
\allowdisplaybreaks
\begin{align*}
\tau_2(t) &= \E[\saf_{\ti,t} \cdot \saf_{\hi,t} \mid \saf_{\bfi,t} = 1] = \frac{\E[\saf_{\ti,t} \cdot \saf_{\hi,t} \cdot \saf_{\bfi,t}]}{\E[\saf_{\bfi,t}]}\\
&= \frac{\sfe^{-t}}{\sfe^{-t} \left(1 + \frac{2}{3} t + \frac{t^2}{2} \cdot \frac{2}{3} \cdot \frac{1}{2}\right)} = \frac{1}{1 + \frac{2}{3} t + \frac{1}{6} t^2},\\
\frac{\tau_1(t)}{2} &= \E[\saf_{\ti,t} \cdot (1 - \saf_{\hi,t}) \mid \saf_{\bfi,t} = 1] = \frac{\E[\saf_{\bfi,t} \cdot \saf_{\ti,t} \cdot (1 - \saf_{\hi,t})]}{\E[\saf_{\bfi,t}]}\\
&= \frac{\sfe^{-t}/3}{\sfe^{-t} \left(1 + \frac{2}{3} t + \frac{t^2}{2} \cdot \frac{2}{3} \cdot \frac{1}{2}\right)} = \frac{1}{3 + 2t + \frac{1}{2} t^2}.
\end{align*}
\endgroup

Thus, we have:
\begingroup
\allowdisplaybreaks
\begin{align*}
\psi(1/3) &= 1 - \frac{11}{6\sfe}, \quad \phi(1/3) = 3 - \frac{11}{2\sfe} \approx 0.9766,\\
\eta(t) &= \tau_1(t) \cdot \frac{1}{2} + \tau_2(t) \cdot \frac{1}{3} + 1 - \tau_1(t) - \tau_2(t) = 1 - \frac{\tau_1(t)}{2} - \frac{2}{3} \tau_2(t).
\end{align*}
\endgroup

\xhdr{Case 3}: The target node $\bfi$ is of mass $2/3$. When $\bfi$ has one single big edge as shown in Figure~\ref{fig:wsg/1}, it performs identically to $\ti$ in Figure~\ref{fig:wsg/3}. Now, focus on the case where $\bfi$ has two small edges, each having a structure in Figure~\ref{fig:wsg/3} or Figure~\ref{fig:wsg/4}. By Lemma~\ref{lem:2/3}, we see that $\bfi$'s performance when each online neighbor has a structure in Figure~\ref{fig:wsg/4} is strictly worse than that in Figure~\ref{fig:wsg/3}. Below, we apply a Markov Chain (MC)-based approach to compute the exact performance for the former.

\smallskip

Consider the following discrete Markov Chain (MC). Let $(p,q)$ be the state, where $p = 1$ represents that $\bfi$ is safe, and $p = 0$ otherwise, while $q \in \{4,3,2,1,0\}$ represents the number of $\bfi$'s safe offline neighbors in total. To simplify our computation, we combine a few states as follows:
\begin{itemize}
    \item Let $(0,*) = \{(p=0, q) \mid q \in \{4,3,2,1,0\}\}$ denote the state that $\bfi$ is matched, and treat it as a terminal state.
    \item Let $(1,2N)$ denote the case when there are two safe offline neighbors of $\bfi$ that share the same online neighbor, while $(1,2S)$ denotes the case when the two safe offline neighbors of $\bfi$ are connected to different online neighbors.
\end{itemize}

In this way, we create an MC that starts at state $(1,4)$ and ends at $(0,*)$, with other transient states consisting of $(1,3)$, $(1,2N)$, $(1,2S)$, $(1,1)$, and $(1,0)$. The corresponding one-step transition matrix is outlined in Table~\ref{table:one-step-a}. The stationary distribution over all states when $T \to \infty$ is as follows:
\[
\bpi = \left(\frac{1}{\sfe^2}, \frac{4}{3\sfe^2}, \frac{1}{3\sfe^2}, \frac{4}{9\sfe^2}, \frac{2}{9\sfe^2}, \frac{1}{36\sfe^2}, 1 - \frac{121}{36\sfe^2}\right).
\]

\begin{table}[ht!]
\caption{One-step transition matrix.}\label{table:one-step-a}
\centering
\begin{tabular}{c " c cccccc} \thickhline 
         & $(1,4)$ & $(1,3)$ & $(1,2N)$ & $(1,2S)$ & $(1,1)$ & $(1,0)$ & $(0,*)$  \\  \thickhline 
$(1,4)$ & $1 - 2/T$ & $4/(3T)$ & 0 & 0 & 0 & 0 & $2/(3T)$  \\
$(1,3)$ & 0 & $1 - 2/T$ & $1/(2T)$ & $2/(3T)$ & 0 & 0 & $5/(6T)$  \\
$(1,2N)$ & 0 & 0 & $1 - 2/T$ & 0 & $2/(3T)$ & 0 & $4/(3T)$  \\
$(1,2S)$ & 0 & 0 & 0 & $1 - 2/T$ & $1/T$ & 0 & $1/T$  \\
$(1,1)$ & 0 & 0 & 0 & 0 & $1 - 2/T$ & $1/(2T)$ & $3/(2T)$  \\
$(1,0)$ & 0 & 0 & 0 & 0 & 0 & $1 - 2/T$ & $2/T$  \\
$(0,*)$ & 0 & 0 & 0 & 0 & 0 & 0 & 1   \\
\end{tabular}
\end{table}

Thus, the resulting Matching Probability per Mass for $\bfi$ is equal to:
\[
\frac{1 - 121/(36 \sfe^2)}{2/3} \approx 0.8177.
\]
Note that in this case, we claim that all of $\bfi$'s offline neighbors of mass $1/3$ have a performance at least as good as in \tbf{Case 2}, since each of them has $\bfi$ as an offline neighbor with mass $2/3$, producing a stronger real-time boosting effect compared with  \tbf{Case 2}.

\begin{lemma}\label{lem:2/3}
Consider an offline node $\bfi$ of mass $2/3$ with two small edges. Its performance when each small edge exhibits a structure in Figure~\ref{fig:wsg/4} is strictly worse than when each exhibits a structure in Figure~\ref{fig:wsg/3}.
\end{lemma}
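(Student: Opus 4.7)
The plan is to compare, at every time $t \in [0,1]$, the probability $\alpha(t)$ that $\bfi$ stays safe in the two scenarios, and show this quantity is strictly larger when each online neighbor of $\bfi$ has the Figure~\ref{fig:wsg/4} structure (Case A) than when each has the Figure~\ref{fig:wsg/3} structure (Case B); this immediately implies that $\bfi$'s matching probability is strictly smaller in Case A, as required. The key leverage is a factorization across $\bfi$'s two online neighbors $j_1, j_2$: by the Worst-Scenario Principle applied to $\bfi$, every offline neighbor of $\bfi$ has either $j_1$ or $j_2$ as its \emph{unique} online neighbor, so the ``local module'' at $j_1$ (i.e., $j_1$ together with its offline neighbors) evolves stochastically independently of the one at $j_2$. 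Consequently $\alpha(t) = \sigma(t)^2$, where $\sigma(t) := \Pr[\bfi \text{ is not matched by } j_\ell \text{ during } [0,t]]$ for either $\ell \in \{1,2\}$.

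I would then compute $\sigma(t)$ in each case by conditioning on the $\mathrm{Pois}(t)$-distributed number of arrivals of $j_\ell$ up to time $t$. In Case B the only two offline neighbors of $j_\ell$ are $\bfi$ (small edge) and a node $\ti_\ell$ (big edge, hence of mass $2/3$ with $j_\ell$ as its sole online neighbor), so $\bfi$ survives if and only if $j_\ell$ either never arrives or arrives exactly once and happens to pick $\ti_\ell$ (conditional probability $2/3$); summing gives
\begin{align*}
\sigma_B(t) = \sfe^{-t}\left(1 + \frac{2t}{3}\right).
\end{align*}
In Case A, the three offline neighbors $\bfi, \hi_\ell, \ti_\ell$ of $j_\ell$ are each attached by small edges, so $\bfi$ survives zero, one, or two arrivals with conditional probabilities $1$, $2/3$, and $(2/3)(1/2)=1/3$ respectively, yielding
\begin{align*}
\sigma_A(t) = \sfe^{-t}\left(1 + \frac{2t}{3} + \frac{t^2}{6}\right).
\end{align*}
Since $\sigma_A(t) > \sigma_B(t)$ strictly for every $t \in (0,1]$, squaring gives $\alpha_A(t) > \alpha_B(t)$, and $\bfi$ is matched strictly less often in Case A.

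The one step that requires care is the factorization $\alpha(t) = \sigma(t)^2$, which is where all the ``positive correlation'' subtlety highlighted in Section~\ref{sec:chall} would normally live. Here, however, the Worst-Scenario Principle guarantees that $\bfi$'s offline neighbors are \emph{disjoint} across the two local modules, so the histories at $j_1$ and $j_2$ are genuinely independent and no delicate correlation analysis is needed; the remainder of the argument reduces to the two elementary Poisson conditioning computations above.
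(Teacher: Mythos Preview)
Your proof is correct. Both your argument and the paper's hinge on the same structural observation---that the two local modules at $j_1$ and $j_2$ decouple once we track only whether $\bfi$ has been hit---but the executions differ. The paper works through the ODE $\alpha'(t) = -2\eta(t)\alpha(t)$ and compares the instantaneous matching rates $\eta^a(t) = 1 - \tfrac{2}{3}\cdot\tfrac{1}{1+2t/3}$ and $\eta^b(t) = 1 - \tfrac{1}{3+2t+t^2/2} - \tfrac{2}{3}\cdot\tfrac{1}{1+2t/3+t^2/6}$, borrowed verbatim from the single-neighbor Cases~1 and~2, and then checks $\eta^b < \eta^a$ pointwise. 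You instead exploit the factorization $\alpha(t)=\sigma(t)^2$ explicitly and compute each $\sigma$ by elementary Poisson conditioning, avoiding the ODE entirely. The two are equivalent via $\eta(t) = -\sigma'(t)/\sigma(t)$, but your route is more self-contained: the paper silently relies on the fact that the conditional quantities $\tau(t),\tau_1(t),\tau_2(t)$ computed for a mass-$1/3$ target carry over unchanged when $\bfi$ has mass $2/3$, which is exactly the independence you take the trouble to justify.
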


\begin{proof}
We employ the ODE system to characterize the matching probability of the target offline node $i$. For any time $t \in [0,1]$, let $\alp_t = \E[\saf_{i,t}]$, and let $\eta(t)$ denote the matching rate from each of its online neighbors at time $t$, given that $i$ is safe at that time. By symmetry, the total matching rate from the two online neighbors should be $2\eta(t)$. Thus, the ODE for $i$ is as follows:
\begin{align}\label{eqn:11/26/a}
\alp'(t) &= -\alp(t) \cdot 2\eta(t), \quad \alp(0) = 1.
\end{align}

Let $\eta^a(t)$ and $\eta^b(t)$ be the respective matching rates of $\bfi$ from each online neighbor when the small edges exhibit a structure in Figure~\ref{fig:wsg/3} and Figure~\ref{fig:wsg/4}, respectively. From the analysis in \tbf{Cases 1 and 2}, we see that:
\begin{align*}
\eta^a(t) &= 1 - \frac{2}{3} \frac{1}{1 + \frac{2}{3} t},\\
\eta^b(t) &= 1 - \frac{1}{3 + 2t + \frac{1}{2} t^2} - \frac{2}{3} \frac{1}{1 + \frac{2}{3} t + \frac{1}{6} t^2}.
\end{align*}

We can verify that $\eta^b(t) < \eta^a(t)$ for all $t \in [0,1]$. Thus, by Equation~\eqref{eqn:11/26/a}, we claim that the value of $\alp(t)$ when each small edge exhibits a structure in Figure~\ref{fig:wsg/4} is strictly larger than when they exhibit a structure in Figure~\ref{fig:wsg/3} throughout $t \in [0,1]$.
\end{proof}

\subsection{Modifications for an Offline Node of Mass of 2/3 in Figure~\ref{fig:wsg/3}} \label{app:2/3}

Consider the structure shown in Figure~\ref{fig:wsg/3}. The updated sampling distribution for $j$ is as follows: If both $\bfi$ and $\ti$ are safe upon $j$'s arrival, match $j$ to $\bfi$ with probability $z$ and to $\ti$ with probability $1-z$, where $z \in [0,1]$. If only one is safe, match $j$ to that safe node with probability one. We verify that under the new sampling distribution, $\bfi$ and $\ti$ each achieve an MPM of at least:
\begin{align*}
\kap^{\bfi}(z) &= \frac{\sfe^{-1} z + 1 - 2\sfe^{-1}}{1/3}, \\
\kap^{\ti}(z) &= \frac{\sfe^{-1} (1 - z) + 1 - 2\sfe^{-1}}{2/3}.
\end{align*}

We verify that by setting $z^* = 1 - \frac{\sfe}{3} \approx 0.0939$, we have $\kap^{\bfi}(z^*) = \kap^{\ti}(z^*) = 2 - \frac{3}{\sfe} \approx 0.8963 > \kapb$, where $\kapb$ is the target MPM for any offline node with a mass of 2/3.

\begin{figure}[ht!]
\begin{tikzpicture}
 \draw (0,0) node[minimum size=0.2mm,draw,circle, ultra thick] {$\mathbf{i}$};
  \draw (-0.4,0) node[left] {\bluee{$1/3$}};
    \draw (-0.4,1) node[left] {\bluee{$2/3$}};
 \draw (3,0) node[minimum size=0.2mm,draw,circle, thick] {$j$};
 \draw (0, 1) node[minimum size=0.2mm,draw,circle, thick] {$\ti$};

\draw[-,  very thick] (0.4,0)--(2.6,0);

\draw[-,  red, ultra thick] (0.4, 1)--(2.6,0);
       \draw (1.5,1) node[below] {\bluee{$1-z$}};
              \draw (1.5,0) node[below] {\bluee{$z$}};
\end{tikzpicture}
\caption{Updates proposed for the sampling distribution of online node $j$ in Figure~\ref{fig:wsg/3}. The updated sampling distribution for $j$ is as follows: If both $\bfi$ and $\ti$ are safe upon $j$'s arrival, match $j$ to $\bfi$ and $\ti$ with probabilities $z$ and $1-z$, respectively, where $z \in [0,1]$. If only one is safe, match $j$ to that safe node with probability one.}
    \label{fig:1/7/a}
   \end{figure}
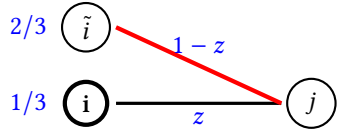

\end{document}